\newtheorem{thm}{Theorem}
\def\isp{\text{ISP}}
\def\cp{\text{CP}}
\def\ispbar{\overline{\text{ISP}}}
\def\opt{\textsf{opt}}
\def\exante{{\em ex ante }}
\def\expost{{\em ex post }}
\begin{document}

\markboth{E. Altman et al.}{Regulation of off-network pricing in a nonneutral network}

\title{Regulation of off-network pricing in a nonneutral network}
\author{Eitan Altman
\affil{INRIA}
Manjesh Kumar Hanawal
\affil{INRIA}
Rajesh Sundaresan
\affil{Indian Institute of Science}}

\begin{abstract}
Representatives of several Internet service providers (ISPs) have expressed their wish to see a substantial change in the pricing policies of the Internet. In particular, they would like to see content providers (CPs) pay for use of the network, given the large amount of resources they use. This would be in clear violation of the ``network neutrality'' principle that had characterized the development of the wireline Internet. Our first goal in this paper is to propose and study possible ways of implementing such payments and of regulating their amount. We introduce a model that includes the users' behavior, the utilities of the ISP and of the CPs, and the monetary flow that involves the content users, the ISP and CP, and in particular, the CP's revenues from advertisements. We consider various game models and study the resulting equilibria; they are all combinations of a noncooperative game (in which the ISPs and CPs determine how much they will charge the users) with a ``cooperative'' one on how the CP and the ISP share the payments. We include in our model a possible asymmetric weighting parameter (that varies between zero to one). We also study equilibria that arise when one of the CPs colludes with the ISP. We also study two dynamic game models and study the convergence of prices to the equilibrium values.
\end{abstract}



\terms{Games, Network Neutrality, Telecommunications Policy}

\keywords{off-network pricing, proportional sharing, two-sided market}

\acmformat{Eitan Altman, Manjesh Kumar Hanawal, and Rajesh Sundaresan, 2013. Regulation of off-network pricing in a nonneutral network.}

\begin{bottomstuff}
E.~Altman and R.~Sundaresan were supported by the Indo-French Centre for Applied Mathematics and by GANESH, an INRIA associates program. M.~K.~Hanawal was supported by the CONGAS European project.

Authors' addresses: E. Altman {and} M. K. Hanawal, INRIA, located at the University of Avignon, France; R. Sundaresan, Department of Electrical Communication Engineering, Indian Institute of Science, Bangalore 560012, India.
\end{bottomstuff}

\maketitle

\section{Introduction}
The initial growth of the Internet and e-commerce businesses was in the backdrop of the following ``neutrality'' principles of providing end-to-end connectivity: (1) content providers (CPs) and end users paid only the Internet service providers (ISPs) that connected them to the Internet and not any other intermediate operator, and (2) they need not know how their packets are transported in the network, but are guaranteed best effort delivery without discrimination. Indeed, \cite{HW} wrote:
\begin{quote}
``Net neutrality has no widely accepted precise definition, but usually means that broadband service providers charge consumers only once for Internet access, do not favor one content provider over another, and do not charge content providers for sending information over broadband lines to end users.''
\end{quote}
Arguably, these principles encouraged rapid innovation at the edge of the network without any interference from the network operators and made the content accessible in a nondiscriminatory fashion.

Many last mile ISPs have opposed neutrality arguing that some CPs (deriving advertising revenue from connections to customers) and applications (such as peer-to-peer or P2P streaming) used their resources without adequate compensation and that under the neutral policy the ISPs would not have an incentive to invest in network infrastructure upgrades or expansion. With a view towards encouraging investment and innovation in broadband services, the Federal Communications Commission (FCC) ruled in 2002 in favor of an {\em unregulated}, or a {\em nonneutral}, regime \cite{FCC_2002}. This decision was upheld by the courts in 2005\footnote{The background leading to this ruling and the subsequent court decision is somewhat nuanced and concerns the regulated digital subscriber line services, regulated due to historical reasons, and the unregulated cable services. The ruling in 2002 allowed unregulated cable modem services. The 2005 court decision paved the way for unregulated digital subscriber services as well. See \cite{IJoC07_TheStateofTheDebate_PehaLehirWilkie} for more details.}. This sparked a huge debate on whether the Internet should be neutral or not\footnote{Since 2005 however the FCC has been pursuing policies towards preserving a free and open Internet. The FCC enforced this neutrality principle in the matter of a network operator's interference with P2P traffic. This was overturned in a judgement \cite{Judge10_ComcastCorp}, but the courts did not disagree to FCC's support for a free and open Internet; see a subsequent statement by the FCC \cite{FCC10_FCCStatementOnComcast}. Several countries have already adopted legislation that guarantees neutrality, including Chile (the first country that adopted neutrality), the Netherlands, and Slovenia.}.

In this paper, we shall study a nonneutral regime where a CP may have to pay the last-mile ISP, a {\em side payment} in addition to the ISP that connects the CP to the internet. This is because the CPs often derive advertising revenues from their connection to the end users, a connection that is enabled by the ISP. This form of nonneutrality has been variously called the opposite of ``zero-fee'' in \cite{EconmidiesTag_2012Elsevier}, ``user discrimination'' in \cite{200903RNE_MusSchWal}, and ``off-network pricing'' in \cite{njoroge-et-al}. We shall use the terminology ``off-network pricing'' borrowed from \cite{njoroge-et-al}, and shall study mechanisms for regulating this pricing.

\begin{figure}[t]
\centering
\includegraphics[scale=.6]{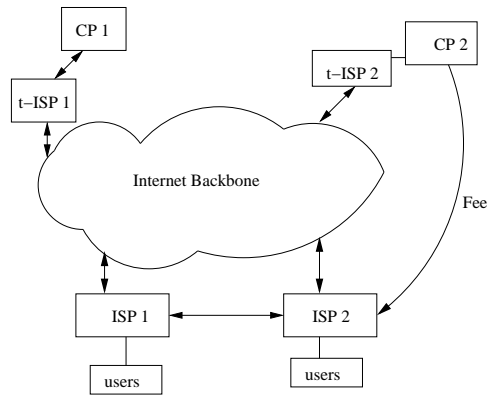}
\caption{Users-ISP-CP connections in the Internet}
\label{fig:Internet}
\end{figure}

Figure \ref{fig:Internet} shows the connection between ISPs, CPs, and end users who are consumers of content. In this paper, we shall call such end users as {\em internauts}. The internauts are connected to the Internet backbone by the last-mile ISPs. Usually, internauts do not have much choice of the ISPs -- there is either a monopolistic ISP or some times two ISPs (say ISP 1 and ISP 2). The CPs at the other end are connected to the Internet backbone via transit ISPs\footnote{These are ISPs that connect the smaller ISPs to the Internet backbone. The last-mile ISPs connect to Internet backbone through transit-ISPs. To keep the diagram simple, these are not shown in Figure \ref{fig:Internet}.}, denoted as t-ISPs in Figure \ref{fig:Internet}. The CPs usually have agreements with the t-ISPs and make payments in proportion to the bandwidth used. In the neutral regime, CP 1 only pays t-ISP 1 for connectivity to the internauts, and not to any other intermediate ISPs (ISP 1 or ISP 2). In the nonneutral off-network pricing regime a last-mile ISP (ISP 1 or ISP 2) can ask the CPs to pay for enabling connection to its internauts.

In order to focus on off-network pricing, we consider the abstracted architecture in Figure \ref{fig:cp-isp}, where there is a last-mile ISP monopoly (without ISP 2 in Figure \ref{fig:Internet}), and the combination of CP, t-ISP, and the internet backbone are combined into a single entity that is marked as CP (if there is only one CP as in Figure \ref{fig:cp-isp}). If there are several CPs, then the combination of CP $i$, t-ISP $i$, and the associated portion of the internet backbone are combined into CP $i$ (as in Section \ref{sec:multiple-CP}), with CP $i$ having a dedicated clientele\footnote{In this context, one could view internauts as applications on real end users' machines.} (internauts of class $i$). As our aim is limited to the study of regulations on off-network pricing, and because these effects are best understood when off-network pricing is studied in isolation, we do not include in our models other important considerations such as graded QoS, prioritization, investments, recurrent expenses, technology aspects, other pricing schemes such as flat-rate pricing, etc. Let us first set the stage by discussing the related and most relevant works\footnote{Analysis of nonneutral networks with QoS differentiation can be found in \cite{HermalinKatz_2007InfoEconPolicy,ChengBand_2010RandJournal,KramerWie_2009MPRA,ChoiKim_2010RandJournal}. Discussions of legal and policy implications of network neutrality regulation can be found in\cite{HW,Wu_2003JournalOfTelecom,Wu_2004JournalOfTelecom}.}.

\begin{figure}[t]
\centering
\includegraphics[width=3.49in, height=1.1in]{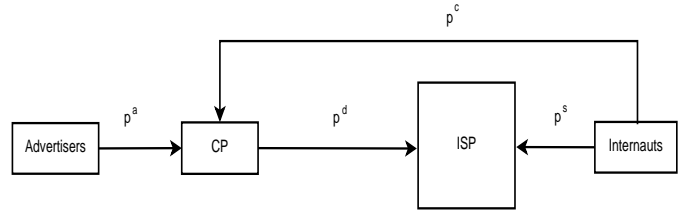}
\caption{Monetary flow in a nonneutral network.}
\label{fig:cp-isp}
\end{figure}

\subsection{Related works on off-network pricing}

\cite{EconmidiesTag_2012Elsevier} model a nonneutral network as a two-sided market\footnote{This is a market where the CPs pay the last-mile ISPs with whom they do not have a direct connection. The market is two-sided because the payment is in addition to payments made to their respective t-ISPs.}, with the CPs and a continuum of internauts connected to each other by a monopoly ISP. They show that if the ISP charges the CPs (side payments), then the ISP's profit increases, whereas the CPs' profits reduce, and there are fewer CPs that remain active at equilibrium. However, social welfare can be higher or lower than the zero-fee case depending on model parameters. Further, if a social planner is to decide the payment from the CPs to the ISP, the payment will be lower than that set by the monopoly ISP. In a similar setting, \cite{CarlsonCanon_2009WorkingPaper} studies investment incentives for ISPs and CPs, and concludes in favor of the neutral network arguing that there is higher incentive for more number of CPs and internauts to be active in this regime with greater investment and higher social welfare. \cite{njoroge-et-al} consider a duopoly ISP market and bring in several aspects such as investments by ISPs, pricing of CPs, CPs' connection decisions, consumer pricing, etc. Analyzing the resulting hierarchial 6-stage game, they conclude that in the nonneutral regime the investments will be higher with increased participation of consumer and CPs. \cite{200903RNE_MusSchWal} consider a finite number of CPs and ISPs. They conclude that social welfare is higher in a nonneutral regime if the ratio of the advertisement revenues to user price sensitivity is either high or low, and for intermediate values, a neutral regime is preferable. Closest in spirit to our work, \cite{2012xxTNSM_CouEtAl} consider two ISPs competing for internauts interested in the content of a common CP. If a regulator sets the off-networking price to maximize social utility, then everybody benefits.

The literature on the economics of off-network pricing is therefore inconclusive. There are arguments in favor of off-network pricing, or against it, or there are mixed opinions that swing one way or another depending on the model parameters. This is perhaps as one might expect in a system as large and complex as the Internet. However, if a nonneutral regime is to be considered, questions of how monetary interactions between ISPs and CPs should take place, and their influence on the internaut behavior, are worth consideration. \cite{200605JSAC_HeWal} study revenue sharing mechanisms between interconnected network operators based on a weighted proportional fairness criteria. \cite{shap1,shap2} propose the use of the Shapley value (which is known to have some fairness properties \cite{e-winter}) for deciding how revenues from internauts should be shared among the ISP and the CPs. However, these works do not consider the sizable revenues to CPs' from advertisements.

\subsection{Our Objective and Contributions}

Our objective in this paper is to consider off-network pricing, propose two ISP-CP revenue sharing mechanisms, and characterize the ensuing equilibria.

We begin with a two player game (Section \ref{sec:single-CP}) where one agent is the CP and another agent is the ISP. Both players can charge the internauts for content access on a per unit demand basis\footnote{Some consider networks with a per unit demand pricing by the ISPs as being nonneutral. But then, many big ISPs already use per unit demand pricing schemes, for example, \$10 per gigabyte scheme \cite{Book_NetworkedLife_Chiang}.}. We study the single-CP single-ISP game in two settings -- regulation of the side payment (1) before the above players set access prices (\exante {\em regulation}), and (2) after the players set access prices (\expost {\em regulation}). We then extend the results to the case when there are multiple CPs in Section \ref{sec:multiple-CP}. The demand function we consider in Section \ref{sec:single-CP} (single-CP case) is a simple, linear, decreasing function of the net price. In the multiple-CP case, demand for content from a CP is linear and decreasing in the price of that CP's content, but linear and increasing in the price of other CPs' contents, reminiscent of the Bertrand oligopoly \cite{OlogopolyPricising}. We study the equilibria for \exante regulation and \expost regulation in Sections \ref{subsec:bargain-before-multicp} and \ref{subsec:bargain-after-multicip}. In Section \ref{sec:ExclusiveContract}, we study the impact of a CP having an exclusive contract with the ISP. The paper concludes with a discussion in Section \ref{sec:discussion}. The paper comes with a fairly large appendix. It includes (1) a discussion of an appropriate model for the demand function in multiple CP settings when some flows may drop out thereby freeing ISP capacity for the remaining flows (Appendices \ref{app:GeneralDemandFunction}, \ref{app:proof-thm:multi-cp-bb-zerodemand}, and \ref{app:proof-thm:multicp-bb-mixed}), (2) proofs of main results for the multiple CP case all of which are quite elementary but at times tedious, and (3) two dynamic models of the game studied in Section \ref{sec:multiple-CP} and their convergence analysis (Section \ref{app:Dynamics}).

Our models are different from other proposed revenue sharing schemes in \cite{200903RNE_MusSchWal,200605JSAC_HeWal,shap2}. Following \cite{claudia}, our regulation schemes attempt to share revenue according to a proportional sharing paradigm. Though simple and stylized, our models are tractable and capable of providing some interesting insights to policy makers. This is our biggest motivation to publish this work. We invite the reader to look at Section \ref{sec:single-CP}, after the detailed description of the model, for a list of implications.

\section{The Case of a Single CP and a Single ISP}
\label{sec:single-CP}

We first begin with the simple case of a single CP and a single ISP. All the internauts are connected to the ISP, and can access the content of the CP only through the ISP. See Figure \ref{fig:cp-isp} for a payment flow diagram. The various parameters of the off-network pricing game are as follows.

\begin{center}
\begin{tabular}{c|p{4.4in}}
\hline \hline
Parameter & Description \\ \hline \hline
$p^s$        & Price per unit demand paid by the internauts to the ISP. This can be positive or negative, and when negative, ISP pays the CP. \\
$p^c$        & Price per unit demand paid by the users to the CP. This too can be positive or negative. \\
$d(p^s,p^c)$ & Demand as a function of prices. We shall take this to be $d(p^s,p^c) = [D_0 - \alpha(p^s+p^c)]_+$, where $[x]_+ = \max\{x,0\}$ is the positive part of $x$. \\
$p^a$        & Advertising revenue per unit demand, earned by the CP. This satisfies $p^a \geq 0$. \\
$p^d$        & Price per unit demand paid by the CP to the ISP. This can be either positive or negative. \\
$U_{\isp}$   & The revenue or utility of the ISP, given by $d(p^s, p^c) (p^s + p^d)$. \\
$U_{\cp}$    & The revenue or utility of the CP, given by $d(p^s, p^c) (p^c + p^a - p^d)$. \\
$\gamma$      & Relative weight of the ISP with respect to the CP; $0 < \gamma < 1$. \\
\hline \hline
\end{tabular}
\end{center}
\cite{nn} noted that if $p^d$ is controlled by either of the players and is set jointly with that player's access price, then the price competition between the ISP and the CP results in zero demand at equilibrium, which is not favorable to any of the agents. This motivates us to study the case when $p^d$ is set by a neutral third party whom we refer to as `regulator'. The regulator can be a law enforcing agency which decides the side payment taking into account the market powers of the players as described below. We consider two interesting games.

The timing for the first game, under \exante regulation, is as follows.

\begin{enumerate}
  \item The regulator sets the payment $p^d$ from the CP to the ISP.
  \item The CP sets the price $p^c$. Simultaneoulsy, the ISP sets the price $p^s$.
  \item The internauts react to the prices and set the demand\footnote{The ISP and the CP set prices for their roles in the service rendered to the internauts. The resulting demand for content depends on the joint prices only through the sum, which is the total price per unit demand seen by the internauts.} $d(p^s,p^c) = [D_0 - \alpha(p^s+p^c)]_+$.
\end{enumerate}

In the second game, under \expost regulation, the timing is as follows.

\begin{enumerate}
  \item The CP and the ISP set their respective access prices $p^c$ and $p^s$ simultaneously.
  \item The internauts react to the prices and set the demand.
  \item The regulator sets the payment $p^d$ from the CP to the ISP.
\end{enumerate}

The first game arises when the charges per unit demand can change over a comparatively faster time-scale while the CP-ISP price $p^d$ changes over a slower time-scale. The second one is an interesting case when the prices per unit demand charged to the internauts varies over a slower time-scale, but the CP-ISP price changes over a faster time-scale. We analyze both models via backward induction and identify the equilibria.

For a fixed $p^s$ and $p^c$, the mechanism used by the regulator to decide payment $p^d$ from the CP to the ISP is as follows:
\[
  p^{d*} \in \arg \max_{p^d} U_{\isp}^{\gamma} \times U_{\cp}^{1 - \gamma}.
\]
The parameter $\gamma$ relates the market power of the ISP to that of the CP\footnote{The Spanish ISP ``Telefonica'' announced on 8 February 2010 that it considered charging Google. This is an indication that the bargaining power of Google was weaker than that of Telefonica in the Spanish telecommunications market in 2010. If $p^d$ were to be set via a binding agreement between the CP and the ISP, and the regulator was the enforcing agent, $\gamma$ may indicate the relative strength of the ISP with respect to the CP. If $\gamma = 1/2$, then the maximization is equivalent to that of the product of the utilities of the ISP and the CP. This is then standard Nash bargaining outcome \cite{nash} for resource allocation, known in networking as the proportional fair allocation \cite{kelly}. Ultimately, under \expost regulation, we shall see that ISP gets $\gamma$ fraction of the total revenue, thereby justifying its interpretation as market power. The quantity $\gamma$ can also be interpreted as measuring the ``degree of cooperation''; see \cite{amar}.}. Note that in both the games only $p^d$ is set according to the above regulation mechanisms, while the other prices are set simultaneously and are strategic actions.

In the \exante regulation game, the regulator sets $p^d$ knowing that $p^s$ and $p^c$ will be chosen subsequently by the players; $U_{\isp}$ and $U_{\cp}$ are then taken to be the {\em equilibrium utilities} at side payment level $p^d$ (with an appropriate selection if there are many equilibria). In the \expost regulation game, the players choose $p^s$ and $p^c$ knowing how the regulator will set $p^d$ subsequently. The following is a motivating list of policy implications stemming from the results for the single-CP single-ISP games.

\begin{enumerate}
\item  In both cases, there exists a pure strategy Nash equilibrium, in a sense that will be made precise, with strictly positive demand and strictly positive utilities for the agents.
    While it is possible that demand can be zero in the \exante regulation game, it is always positive in the \expost regulation game with unique equilibrium prices. For a policy maker, such a conclusion is very useful -- \expost regulation never produces a stalemate zero-demand outcome where no player has a positive revenue; \exante regulation may\footnote{However, see the multiple CP case.}.
\item In all cases with strictly positive demand, the internauts pay the ISP. But the internauts pay the CP only if the advertising revenue is small. Otherwise the CP subsidizes the internauts. This is natural, of course, yet pleasing to see that the model bears this out.
\item If either of the agents has control over $p^d$ and sets it jointly with its access price, the equilibrium demand is zero \cite{nn}. None of the parties benefit from this situation. On the contrary, if $p^d$ is under the control of a disinterested regulator, with timing as in either the \exante or the \expost regulation game, there is an equilibrium where every one benefits. This is the key insight gained from our analysis, that some sort of regulation can bring benefits to all.
\item Interestingly, if the regulator applies \exante regulation and the strictly positive demand equilibrium ensues, the payments by the internauts and resulting utilities of all agents are independent of the actual value of $p^d$ and $\gamma$. Any $p^d$ paid by the CP is collected from the internauts and is further returned back to the internauts by the ISP. This lack of sensitivity of the outcome to the value of $p^d$ is a robustness property that can be quite reassuring to a policy maker who may worry about market distortions arising from the regulation. It only matters that there is some regulation, the actual value of the regulated price $p^d$ is irrelevant.
\item For the \exante regulation game, the demand settles at a lower value and the internauts pay a higher price per unit demand, when compared with the \expost regulation game which results in greater welfare (if welfare is determined by the price that the internauts pay). Again, this is an observation of value to the policy maker that is interested in maximizing internauts' welfare.
\item For the \exante regulation game, both the players end up with equal revenues. This attributes equal importance to both players. For the \expost regulation game, they share the net revenue in the proportion of their relative weights.
\item If $\gamma \in \left[\frac{4}{9}, \frac{5}{9}\right]$, then both agents prefer \expost regulation. For $\gamma > 5/9$, the ISP prefers \expost regulation, and for $\gamma < 4/9$, the CP prefers \expost regulation.
\item Finally, in view of the fourth point above, one recovers the neutral regime in our \exante regulation game by setting $p^d = 0$. The internauts are thus indifferent to the neutral regime and the nonneutral off-network regime under \exante regulation.
\end{enumerate}

The choice of \exante regulation game or \expost regulation game in this single-CP single-ISP depends on societal priorities. {\em Ex ante} regulation is robust, makes internauts insensitive to nonneutral versus neutral regimes, but has the possibility of a stalemate zero-demand equilibrium or an equilibrium with lower welfare for internauts\footnote{We will later see in Section \ref{sec:multiple-CP} that for multiple CPs, the \expost regulation game may have no pure strategy Nash equilibria for some parameter ranges.}. Our goal in this paper is not to discuss societal priorities, but merely to provide conclusions as above that will aid a policy maker in his decision making.

With these motivating remarks, we shall now proceed to state these claims in a precise fashion and to prove them. In subsequent sections we shall study the extension of the above results to the case of multiple CPs and to the case of an exclusive contract between one of the CPs and the ISP.

\subsection{Ex ante regulation}
\label{subsec:bargain-before}

We first consider the case where the regulator sets $p^d$, knowing that the players will subsequently play a simultaneous action game where the ISP and CP will choose $p^s$ and $p^c$, respectively. Our main result here is summarized as follows.

\begin{thm}
\label{thm:1cp-bargainbefore}
In the \exante regulation game, we have the following complete characterization of all pure strategy Nash equilibria.

(a) Among profiles with strictly positive demand, there is a unique pure strategy Nash equilibrium with the following properties:
  \begin{itemize}
    \item The uniqueness is up to a free choice of $p^d$.
    \item At equilibrium, we have:
      \begin{eqnarray}
        \label{eqn:p1}
        p^s & = & \frac{D_0 + \alpha p^a}{3 \alpha} - p^d, \\
        \label{eqn:p2}
        p^c & = & \frac{D_0 - 2\alpha p^a}{3 \alpha} + p^d.
      \end{eqnarray}
    \item The net internaut payment per demand $p^s + p^c$ is unique and is given by
      $$p^s + p^c = \frac{2D_0 - \alpha p^a}{3 \alpha}.$$
    Any $p^d$ paid by the CP is collected from the internauts and further returned back to the internauts by the ISP.
    \item The demand is unique and is given by $(D_0 + \alpha p^a)/3 > 0$.
    \item The utilities of the ISP and CP are equal and given by
        $$U_{\isp} = U_{\cp} = \frac{(D_0 + \alpha p^a)^2}{9\alpha}.$$
  \end{itemize}

(b) For each choice of $p^d$, a strategy profile $(p^s, p^c)$ constitutes a Nash equilibrium with zero demand if and only if the following two inequalities hold:
  \begin{eqnarray}
    \label{eqn:zero-demand-eq1}
    p^s & \geq & D_0 / \alpha + p^a - p^d, \\
    \label{eqn:zero-demand-eq2}
    p^c & \geq & D_0 / \alpha + p^d.
  \end{eqnarray}
\end{thm}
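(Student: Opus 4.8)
The plan is to analyze the \exante regulation game by backward induction, as the text suggests. Since the regulator fixes $p^d$ in the first stage, the substantive step is to characterize the pure strategy Nash equilibria of the simultaneous-move pricing subgame between the ISP and the CP for an arbitrary but fixed $p^d$. I would split the analysis according to whether the equilibrium demand is strictly positive or zero, because the kink introduced by the positive-part operator $[\cdot]_+$ makes the utilities nonsmooth and precludes a single global first-order treatment.

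For part (a), I would first restrict attention to the open region where $D_0 - \alpha(p^s + p^c) > 0$, on which $U_{\isp}$ and $U_{\cp}$ reduce to quadratics in the player's own price with leading coefficient $-\alpha$, hence strictly concave. Differentiating $U_{\isp} = (D_0 - \alpha(p^s+p^c))(p^s+p^d)$ in $p^s$ and $U_{\cp} = (D_0 - \alpha(p^s+p^c))(p^c+p^a-p^d)$ in $p^c$ and setting the derivatives to zero gives the two best-response relations
\begin{align*}
  2p^s + p^c &= D_0/\alpha - p^d, \\
  p^s + 2p^c &= D_0/\alpha - p^a + p^d.
\end{align*}
This $2\times 2$ linear system has a unique solution, which I expect to be exactly \eqref{eqn:p1}--\eqref{eqn:p2}; substituting back yields the stated net price $p^s+p^c = (2D_0 - \alpha p^a)/(3\alpha)$, the demand $(D_0+\alpha p^a)/3$, and the common utility $(D_0+\alpha p^a)^2/(9\alpha)$, all routine algebra. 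Two verifications then complete part (a). First, this interior critical point is a genuine mutual best response: by concavity it is the global maximizer of each player's utility over the positive-demand region, and since the resulting utility is strictly positive while any deviation into the zero-demand region yields utility zero, no profitable deviation exists. Second, uniqueness (up to the free choice of $p^d$) follows because the linear system pins down the only interior candidate, and because the equilibrium utilities do not depend on $p^d$ the regulator's objective $U_{\isp}^{\gamma} U_{\cp}^{1-\gamma}$ is maximized by every value of $p^d$.

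For part (b), I would characterize zero-demand equilibria directly through no-profitable-deviation conditions rather than first-order conditions. At any profile with $d=0$ both utilities vanish, so the profile is an equilibrium exactly when neither player can reach strictly positive utility by a unilateral deviation. The ISP can earn a positive payoff only if there is a price $p^s$ with both $D_0 - \alpha(p^s+p^c) > 0$ and $p^s + p^d > 0$; such a price exists iff $p^c < D_0/\alpha + p^d$, so the ISP has no profitable deviation iff $p^c \geq D_0/\alpha + p^d$, which is \eqref{eqn:zero-demand-eq2}. Symmetrically, the CP can profit only if some $p^c$ satisfies $D_0 - \alpha(p^s+p^c) > 0$ and $p^c + p^a - p^d > 0$, which is possible iff $p^s < D_0/\alpha + p^a - p^d$; hence the CP's no-deviation condition is \eqref{eqn:zero-demand-eq1}. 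I would close by noting that summing the two inequalities gives $p^s + p^c \geq 2D_0/\alpha + p^a > D_0/\alpha$, so they indeed force $d=0$, establishing the ``if and only if''.

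The main obstacle, and the only place where care is genuinely needed, is the nonconcavity of the utilities induced by $[\cdot]_+$: the first-order conditions only locate candidate equilibria within the positive-demand region, so I must separately rule out deviations across the kink into the zero-demand region (for part (a)) and, conversely, show that from a zero-demand profile a player cannot gain by crossing into the positive-demand region (for part (b)). Both reductions hinge on the same elementary observation that positive utility requires simultaneously positive demand and a positive effective per-unit margin ($p^s+p^d$ for the ISP, $p^c+p^a-p^d$ for the CP), which is what makes the boundary analysis tractable.
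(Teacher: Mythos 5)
Your proof is correct and follows essentially the same route as the paper: first-order conditions combined with own-price concavity on the positive-demand region for part (a), and a no-profitable-deviation analysis for the zero-demand equilibria in part (b), with the regulator's indifference over $p^d$ handled identically. The only differences are presentational --- your part (b) recasts the paper's threshold-price argument as an existence condition for a deviation with simultaneously positive demand and positive margin, and your part (a) makes explicit the cross-kink deviation check (positive equilibrium utility versus zero utility in the zero-demand region) that the paper leaves implicit.
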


\begin{proof}
We first observe that at equilibrium, $U_{\isp}$ and $U_{\cp}$ are both nonnegative. If not, the ISP (resp. CP) has strictly negative utility. He can raise the price $p^s$ (resp. $p^c$) to a sufficiently high value so that demand becomes zero, and therefore $U_{\isp} = 0$ (resp. $U_{\cp} = 0$). Thus a deviation yields a strict increase in utility and therefore cannot be an equilibrium. It follows that at equilibrium, we may take the revenues per demand for the ISP and CP to be nonnegative, i.e., $p^s + p^d \geq 0$, and $p^c + p^a - p^d \geq 0$.

We next deduce (b), which is a characterization of all the pure strategy NE with zero demand. Consider a fixed $p^d$. If a pair $(p^s, p^c)$ were an equilibrium with zero demand, then clearly
$$D_0 \leq \alpha(p^s + p^c),$$
and
$$U_{\isp} = d(p^s, p^c) \times (p^s + p^d) = 0.$$
Moreover, the ISP should not be able to make his utility positive, i.e., any $p^s$ that makes demand strictly positive, $p^s < D_0 / \alpha - p^c$, must also render price per unit demand zero or negative, $p^s + p^d \leq 0$. This can happen only if $(D_0 / \alpha - p^c) + p^d \leq 0$ which is the same as (\ref{eqn:zero-demand-eq2}). Similarly, the CP should not be able to make his utility positive, i.e., any $p^c$ that makes demand strictly positive, $p^c < D_0 / \alpha - p^s$, must render CP price per unit demand nonpositive, $p^c + p^a - p^d \leq 0$. This can happen only if $(D_0 / \alpha - p^s) + p^a - p^d \leq 0$ which is the same as (\ref{eqn:zero-demand-eq1}). This proves the necessity of (\ref{eqn:zero-demand-eq1}) and (\ref{eqn:zero-demand-eq2}). We now show sufficiency. Let (\ref{eqn:zero-demand-eq1}) and (\ref{eqn:zero-demand-eq2}) hold. Then addition of $p^c$ to both sides of (\ref{eqn:zero-demand-eq1}) and some rearrangement yields
\begin{equation}
  \label{eqn:zero-demand-proof}
  p^c + p^a - p^d \leq p^s + p^c - D_0/\alpha.
\end{equation}
Since the left side is the revenue per unit demand for the CP, it must be nonnegative, and hence $p^s + p^c - D_0/\alpha \geq 0$, which upon rearrangement yields $D_0 - \alpha(p^s + p^c) \leq 0$. The demand $d(p^s, p^c)$ is therefore zero. Let us now consider a deviation by the CP for a fixed ISP price $p^s$ that satisfies (\ref{eqn:zero-demand-eq1}). We will show that the least deviation (decrease in price) that sets the demand at the threshold of positivity results in a negative revenue per demand for the CP. Indeed, this critical price $q^c$ that sets the demand at the threshold of positivity satisfies the equation
\[
  D_0 - \alpha(q^c + p^s) = 0.
\]
Again, addition of $q^c$ to both sides of (\ref{eqn:zero-demand-eq1}) yields, by the same steps above that led to (\ref{eqn:zero-demand-proof}),
\[
  q^c + p^a - p^d \leq p^s + q^c - D_0/\alpha = 0.
\]
Further reduction in price to make demand strictly positive only results in negative revenue and negative utility. Consequently, the CP does not have a deviation that yields a higher revenue. A similar argument shows that, under (\ref{eqn:zero-demand-eq2}), the ISP can make demand strictly positive only if its revenue is negative. It too does not have a deviation with a strictly greater utility. Thus (\ref{eqn:zero-demand-eq1}) and (\ref{eqn:zero-demand-eq2}) constitute zero demand equilibrium prices.

Let us now search for an equilibrium with a strictly positive demand. Such a $(p^s,p^c)$ must lie in the interior of the set of all pairs satisfying $D_0 \geq \alpha(p^s + p^c)$. As $U_{\isp}$ is concave in $p^s$ for a fixed $p^c$ and $p^d$, and $U_{\cp}$ is concave in $p^c$ for a fixed $p^s$ and $p^d$, the equilibrium point must satisfy the following first order optimality conditions
\begin{eqnarray*}
  \frac{\partial U_{\isp}}{\partial p^s} =  \frac{\partial}{\partial p^s}(D_0 - \alpha(p^s + p^c))(p^s + p^d)=0 \\
\end{eqnarray*}
\begin{eqnarray*}
  \frac{\partial U_{\cp}}{\partial p^s} & = & \frac{\partial}{\partial p^s}(D_0 - \alpha(p^s + p^c))(p^c + p^a - p^d)=0. \\
\end{eqnarray*}
Solving these two simultaneous equations in the variables $p^s$ and $p^c$, we see that $p^s$ and $p^c$ are given by (\ref{eqn:p1}) and (\ref{eqn:p2}), respectively. Note that $p^d$ is free parameter. Once this is chosen, the choice fixes both $p^s$ and $p^c$. This proves the second item. We shall return to prove the first item after proving the others.

Adding (\ref{eqn:p1}) and (\ref{eqn:p2}), we see that $p^s + p^c$ is a constant for any such equilibrium. Choice of $p^d$ fixes both $p^s$ and $p^c$. This is true for any Nash equilibrium with a strictly positive demand. Furthermore, any $p^d$ that is paid reduces $p^s$ by that amount and increases $p^c$ by the same amount. This proves the third item.

The last two items follow by direct substitutions into $d(p^s, p^c)$, $U_{\isp}$, and $U_{\cp}$.

As a consequence of the observation that $U_{\isp} = U_{\cp}$ at any equilibrium regardless of the value of $p^d$, we have
\[
  U_{\isp}^{\gamma} \times U_{\cp}^{1 - \gamma}
\]
is independent of $p^d$ at any equilibrium, for any fixed relative weight $\gamma \in (0,1)$. The regulator may thus pick any $p^d$. This proves the first item. (This observation holds even for zero-demand equilibria). The proof is now complete.
\end{proof}

{\em Remarks}
1) Every choice of $p^d$ can also result in the undesirable zero-demand equilibria, and not just the desirable equilibrium with strictly positive demand.

2) For this strictly positive demand equilibrium, the natural choices of $p^d$ are those that make $p^d = 0$, i.e., there is no payment from CP to ISP, or $p^c = 0$, there is no payment from the internauts to the CP, or $p^s = 0$, there is no payment from the user to the ISP.

3) If one places the additional restriction that $p^s \geq 0$, the only effect of this constraint is that the choice of $p^d$ is restricted to $p^d \leq (D_0 + \alpha p^a)/(3 \alpha)$, and the above theorem continues to hold.

\subsection{Ex post regulation}
\label{subsec:bargain-after}

We next consider the case when the CP and ISP decide on their respective prices first, knowing that the regulator will set the side payment subsequently.

\begin{thm}
\label{thm:1cp-bargainafter}
In the \expost regulation game, there is a unique pure strategy Nash equilibrium with the following properties:
\begin{itemize}
  \item The uniqueness is up to a free choice of either $p^s$ or $p^c$. Without loss of generality, we may assume a free $p^s$.
  \item At equilibrium, the net user payment per demand is uniquely given by
    \begin{eqnarray*}
      p^s + p^c = \frac{D_0 - \alpha p^a}{2 \alpha}.
    \end{eqnarray*}
  \item The demand is unique and is given by $(D_0 + \alpha p^a)/2 > 0$.
  \item The regulator will set $p^d$ so that the net revenue per demand $p^s + p^c + p^a = \frac{D_0 + \alpha p^a}{2 \alpha}$ is shared in the proportion $\gamma$ and $1 - \gamma$ by the ISP and the CP, respectively.
\end{itemize}
\end{thm}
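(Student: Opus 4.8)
The plan is to analyze the \expost game by backward induction, starting from the regulator's stage. Fix any prices $p^s,p^c$ inducing strictly positive demand $d=d(p^s,p^c)$. As in the opening paragraph of the proof of Theorem~\ref{thm:1cp-bargainbefore}, one first checks that at equilibrium both per-demand revenues are nonnegative, so we may restrict to $p^s+p^d\ge 0$ and $p^c+p^a-p^d\ge 0$. The decisive observation is that the side payment $p^d$ is a pure transfer: $U_{\isp}+U_{\cp}=d\,(p^s+p^c+p^a)$ is independent of $p^d$. Writing $S=p^s+p^c+p^a$, $a=p^s+p^d$, and $b=p^c+p^a-p^d$, the regulator solves $\max_{a+b=S,\,a,b\ge 0}(da)^{\gamma}(db)^{1-\gamma}=d\max_{a+b=S}a^{\gamma}b^{1-\gamma}$. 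A one-line Lagrange (or weighted AM--GM) computation shows the weighted geometric mean under a fixed sum is maximized at $a=\gamma S$, $b=(1-\gamma)S$, so the regulator splits the pie in the ratio $\gamma:(1-\gamma)$. Substituting back, the \emph{reduced} utilities faced by the players at the pricing stage are $U_{\isp}=\gamma W$ and $U_{\cp}=(1-\gamma)W$, where $W=d\,(p^s+p^c+p^a)$.

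Next I would analyze the simultaneous pricing stage. Since $\gamma\in(0,1)$ is a positive constant, the ISP maximizing $\gamma W$ over $p^s$ and the CP maximizing $(1-\gamma)W$ over $p^c$ are each maximizing a positive multiple of the \emph{same} objective $W$. Crucially, $W$ depends on the prices only through the sum $u=p^s+p^c$: in the positive-demand region $W=(D_0-\alpha u)(u+p^a)$, a strictly concave function of $u$ (second derivative $-2\alpha<0$). Hence $W$ is strictly concave in $p^s$ for fixed $p^c$ and in $p^c$ for fixed $p^s$, and both first-order conditions reduce to the single equation $dW/du=D_0-\alpha p^a-2\alpha u=0$. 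This gives the equilibrium sum $u^\ast=p^s+p^c=(D_0-\alpha p^a)/(2\alpha)$, whence the demand is $d=D_0-\alpha u^\ast=(D_0+\alpha p^a)/2>0$ and the net per-demand revenue is $u^\ast+p^a=(D_0+\alpha p^a)/(2\alpha)$, which the regulator then splits in the ratio $\gamma:(1-\gamma)$. Only the sum $u^\ast$ is pinned down: every $(p^s,p^c)$ on the line $p^s+p^c=u^\ast$ is a mutual best response, which is exactly the asserted uniqueness up to a free choice of $p^s$ (or $p^c$).

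It then remains to rule out other equilibria, in particular the zero-demand profiles that survived in the \exante game. Suppose $(p^s,p^c)$ gives zero demand, so $p^s+p^c\ge D_0/\alpha$ and both utilities are $0$. Because prices are finite, the ISP can lower $p^s$ so that $p^s+p^c$ drops just below $D_0/\alpha$; this yields small positive demand with per-demand pie $p^s+p^c+p^a$ close to $D_0/\alpha+p^a>0$, hence $W>0$ and a strictly positive deviation payoff $\gamma W$. Thus no zero-demand profile is an equilibrium --- the key contrast with Theorem~\ref{thm:1cp-bargainbefore} that makes the \expost equilibrium both unique in its observable quantities and always of strictly positive demand. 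Combined with the previous paragraph, the set of pure strategy Nash equilibria is exactly the line $p^s+p^c=u^\ast$, and all listed quantities follow.

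The main obstacle I anticipate is not the calculus but the boundary bookkeeping: one must verify that the reduction $U_{\isp}=\gamma W$, $U_{\cp}=(1-\gamma)W$ is legitimate (nonnegative per-demand revenues, and the degenerate behaviour of the regulator's objective when the pie $S$ is nonpositive), and that a player contemplating a deviation into the zero-demand or negative-revenue region never gains (at equilibrium $S=(D_0+\alpha p^a)/(2\alpha)>0$, so this is harmless there, but it must be checked). Handling these cases carefully is what upgrades the statement from \emph{there exists such an equilibrium} to \emph{this is the unique pure strategy Nash equilibrium}.
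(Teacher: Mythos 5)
Your proposal is correct and follows essentially the same route as the paper's proof: backward induction with the regulator's stage yielding the proportional split $\gamma:(1-\gamma)$ of the pie $p^s+p^c+p^a$, reduction of the pricing stage to maximizing the concave function $(D_0-\alpha u)(u+p^a)$ of the sum $u=p^s+p^c$, and a price-lowering deviation argument to exclude zero-demand equilibria. Your packaging of the pricing stage as both players maximizing positive multiples of the same objective $W$ (rather than computing the two identical best-response equations, as the paper does) and your explicit flagging of the degenerate case $S\le 0$ are minor stylistic refinements, not a different argument.
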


\begin{proof}
As in the previous section, it is clear that the revenues per demand and the utilities for both agents are nonnegative. If this is not the case, the aggrieved CP or the ISP guarantees himself a strictly larger zero utility by raising the price under his control so that demand reduces to 0.

Let us now perform a search for equilibria with strictly positive demand. Such a $(p^s,p^c)$ is an interior point among all those pairs that satisfy $D_0 - \alpha(p^s + p^c) \geq 0$. Consider a fixed interior point $(p^s, p^s)$. The regulator sets $p^d$ to
\begin{eqnarray*}
  \arg \max_{p^d} U_{\isp}^{\gamma} \times U_{\cp}^{1 - \gamma}
    =  \arg \max_{p^d} \left[ \gamma \log (p^s + p^d) + (1 - \gamma) \log (p^c + p^a - p^d) \right],
\end{eqnarray*}
where the equality follows because the demand can be pulled out of the optimization. The optimization is over the set of $p^d$ that ensure that the arguments inside the logarithm remain strictly positive. It is easy to see that the latter function is concave in $p^d$, and thus the maximizing $p^d$ satisfies
\[
  \frac{\gamma}{p^s + p^d} - \frac{1 - \gamma}{p^c + p^a - p^d} = 0,
\]
which yields $p^d = \gamma(p^c + p^a) - (1-\gamma)p^s$.

Substitution of this $p^d$ yields
\begin{eqnarray*}
  p^s + p^d & = & \gamma (p^s + p^c + p^a) \\
  p^c + p^a - p^d & = & (1 - \gamma) (p^s + p^c + p^a).
\end{eqnarray*}
Clearly, $p^s + p^c + p^a$ is the net revenue per demand for both ISP and CP put together, and the ISP and the CP share this booty in the fraction of their relative weights.

Knowing this action of the regulator, the ISP responds to the CP's $p^c$ by maximizing
\[
  U_{\isp} = d(p^s,p^c) (p^s + p^d) = (D_0 - \alpha(p^s + p^c) \times \gamma(p^s + p^c + p^a).
\]
This is a concave function of $p^s$, and the maximum is at
\begin{equation}
  \label{eqn:BR-isp}
  p^s = \frac{D_0 - \alpha p^a}{2 \alpha} - p^c.
\end{equation}

Similarly, for an ISP's $p^s$, the CP's best response is
\[
  p^c = \frac{D_0 - \alpha p^a}{2 \alpha} - p^s,
\]
which is the same equation as (\ref{eqn:BR-isp}).

At equilibrium, we thus have $p^s+p^c$ uniquely determined and given by the second item. A substitution yields that the demand is given by
\[
  d(p^s, p^c) = D_0 - \alpha(p^s + p^c) = \frac{D_0 + \alpha p^a}{2},
\]
which proves the third item.

The revenue per demand is easily seen to be $(D_0 + \alpha p^a)/(2 \alpha)$. Further substitution yields that net revenue is
$d(p^s, p^c) (p^s + p^c + p^a) = (D_0 + \alpha p^a)^2 / (4 \alpha),$
a strictly positive quantity shared in proportion of their relative weights by the ISP and CP. This proves the last item.

Finally, for any $p^s$, the regulator will set $p^d$ to ensure this proportion, and thus $p^s$ may be taken as a free variable. Each $p^s$ and $p^c$ satisfying the above conditions is a Nash equilibrium. This proves the first item.

Finally, it still remains to prove that there is no zero-demand equilibrium. Suppose that $(p^s, p^c)$ is such that we get a zero-demand, i.e., $D_0 \leq \alpha(p^s + p^c)$. With $\varepsilon = (D_0 + \alpha p^a)/2 > 0$, the ISP can set his new price to
$$q^s = D_0 / \alpha - p^c - \varepsilon/\alpha$$
yielding a demand $D_0 - \alpha(q^s + p^c) = \varepsilon > 0$ and a revenue
$$\gamma (q^s + p^c + p^a) = \gamma (D_0 / \alpha - \varepsilon / \alpha + p^a) = \gamma \varepsilon/\alpha > 0,$$
and therefore a strictly positive utility. A unilateral deviation yields the ISP a strict increase in his utility. Thus a $(p^s, p^c)$ with zero demand cannot be a pure-strategy equilibrium. This concludes the proof.
\end{proof}

{\em Remarks}: 1) The equilibrium utility for the ISP under \expost regulation is easily seen to be $9 \gamma/4$ fraction of that under \exante regulation. Clearly then, the ISP prefers \expost regulation if $\gamma \geq 4/9$.

2) Similarly, the equilibrium utility for the CP under \expost regulation is $9(1 - \gamma)/4$ fraction of that under \exante regulation. The CP prefers \expost regulation if $1 - \gamma \geq 4/9$ or $\gamma \leq 5/9$.

3) Thus, if $\gamma \in \left[ \frac{4}{9},  \frac{5}{9} \right]$, both prefer \expost regulation. For $\gamma > 5/9$, ISP prefers \expost regulation while CP prefers \exante regulation. Opposite is the case when $\gamma < 4/9$.

\section{The case of multiple content providers}
\label{sec:multiple-CP}

We now consider the case when there are several CPs. Internauts connect to each of the CPs through the single ISP. See Figure \ref{fig:multiplecp-isp}. The parameters of this game are given in the following table.

\begin{center}
\begin{tabular}{c|p{4.4in}}
\hline \hline
Parameter & Description \\ \hline \hline
$n$            & Number of CPs. \\
$p^s_i$        & Price per unit demand paid by the users to the ISP for connection to CP $i$. This can be positive or negative. \\
$p^c_i$        & Price per unit demand paid by the users to CP $i$. This too can be positive or negative. \\
$p^a_i$        & Advertising revenue per unit demand, earned by the CP. This satisfies $p^a_i \geq 0$. \\
$p^d_i$        & Price per unit demand paid by the CP to the ISP. This can be either positive or negative. \\
$p^x$          & Vectors of aforementioned prices, where $x$ is one of $s,~c,~a,~d$.\\
$d_i(p^s,p^c)$ & Demand for CP $i$ as a function of the prices. See (\ref{eqn:demand-multi-CP}) below and the following discussion. \\
$r_{\cp, i}$   & The revenue per unit demand of CP $i$, given by $p^c_i + p^a_i - p^d_i$.\\
$r_{\isp, i}$  & The revenue per unit demand of ISP coming from content provided by CP $i$, given by $p^s_i + p^d_i$. \\
$U_{\isp}$     & The revenue or utility of the ISP, given by $\sum_i d_i(p^s, p^c) (p^s_i + p^d_i)$. \\
$U_{\cp, i}$   & The revenue or utility of the CP, given by $d_i(p^s, p^c) (p^c_i + p^a_i - p^d_i)$. \\
$\gamma_i$     & Relative weight of the ISP with respect to the CP. \\
\hline \hline
\end{tabular}
\end{center}

\subsection{Demand function: Strictly positive demands}
\label{subsec:demand-function-strict-positive}

The demand function for content from CP $i$ is such that it depends on $p^s$ and $p^c$ only through the sum $p^s + p^c$, the vector of net payment per unit demand from the internauts. An interesting feature we wish to model is a {\em positive correlation in demand with respect to others' prices}. Assume that the ISP has a fixed capacity/bandwidth of $W$. If CP $i$ and ISP increase their prices for content from CP $i$, demand for this content naturally goes down. On the other hand, when the price for CP $j$ content increases, where $j \neq i$, the decrease in demand for content from CP $j$ frees up some capacity. This provides a marginally better delay experience for the internauts of other CPs, and particularly internauts of CP $i$. This positive effect creates a marginal increase in the demand for content from the other CPs, and in particular, an increase in the demand for content from CP $i$. We model this correlation effect by setting the demand functions to be
\begin{equation}
  \label{eqn:demand-multi-CP}
  d_i(p^s, p^c) = \left[ D_0 - \alpha (p^s_i + p^c_i) + \beta \sum_{j: j \neq i} (p^s_j + p^c_j) \right]
\end{equation}
provided each of the demands are strictly positive. Here $\beta$ is the sensitivity parameter for the increase in demand for CP $i$ content per unit increase in price of CP $j$ content, when $j \neq i$.

While (\ref{eqn:demand-multi-CP}) is justifiable when all demands are positive, further thought suggests that it must be refined a little to account for the following. When the price $p^s_i+p^c_i$ charged to CP $i$ internauts is such that it forces demand $d_i$ to be zero, then any additional increase in $p^s_i + p^c_i$ simply continues to hold this demand $d_i$ at zero. The capacity freeing and the consequent phenomenon of increase in demand for other CPs' contents no longer occurs, and additional price rise for CP $i$ content will have no further tangible effect on other internauts' behavior. We shall return to this refinement in the next subsection after addressing some points on the positive demand case.

\begin{figure}[t]
\centering
\includegraphics[width=3.45in, height=2in]{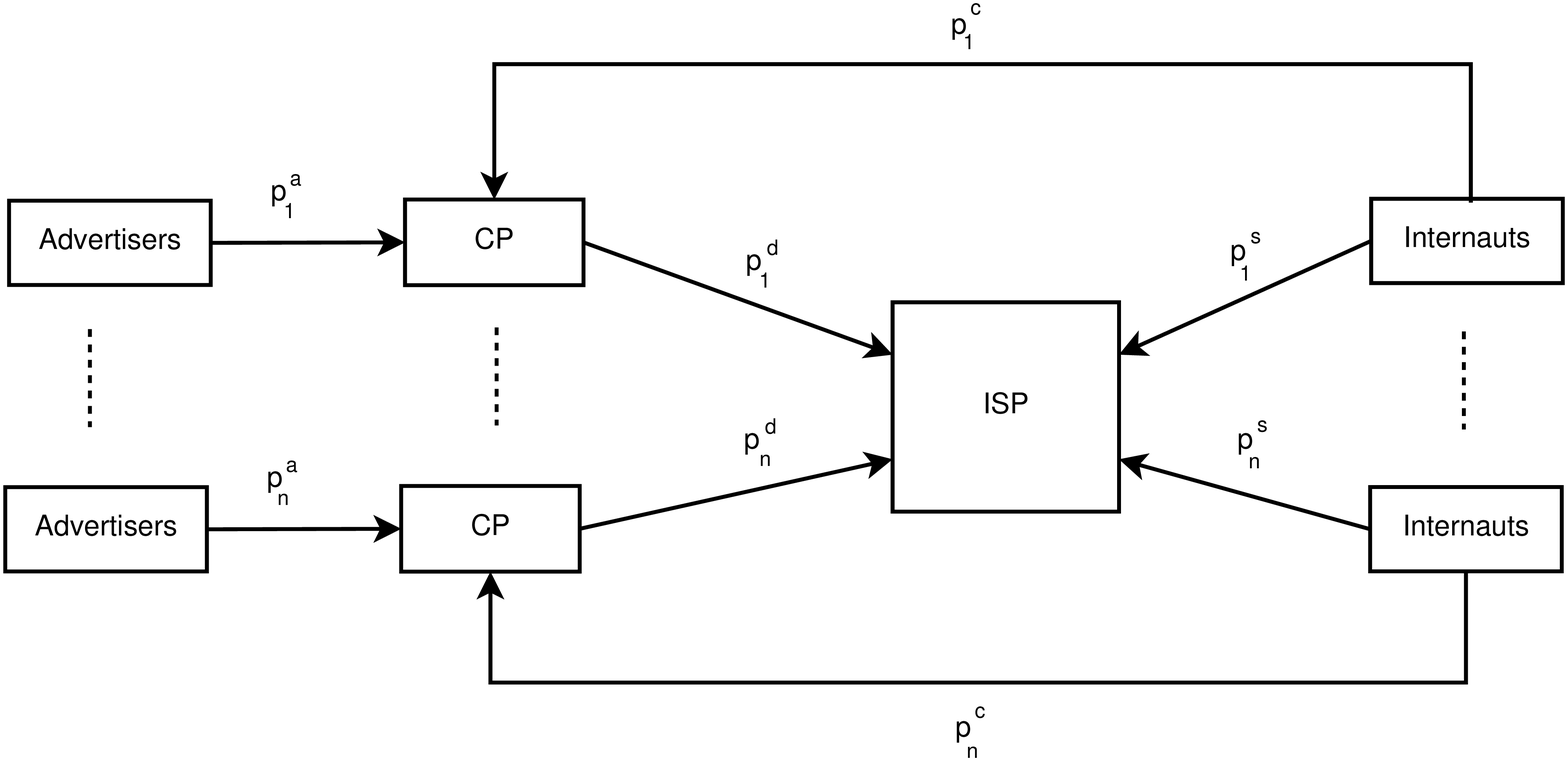}
\caption{Monetary flow in a nonneutral network with multiple CPs.}
\label{fig:multiplecp-isp}
\end{figure}

Let the evaluations in (\ref{eqn:demand-multi-CP}) be strictly positive for each $i$. If this is placed as a requirement, one could view it as a joint constraint on the actions of the ISP and CPs: given the other prices, CP $i$ will not set too high a price that makes $d_i$ zero; neither will the ISP. We may write $d_i(p^s, p^c) > 0$ for every $i$ as
\begin{equation}
  \label{eqn:price-constraint-ub}
  D_0 - \alpha (p^s_i + p^c_i) + \beta \sum_{j: j \neq i} (p^s_j + p^c_j) > 0, ~ i = 1, \ldots, n,
\end{equation}
which is compactly summarized as follows. Define the matrix $A_n = (\alpha + \beta) I_n - \beta J_n$ where $I_n$ is the identity matrix of size $n \times n$, and $J_n$ is the square matrix with all-one entries of size $n \times n$. The matrix $A_n$ has diagonal entries $\alpha$ and all off-diagonal entries $-\beta$. Also define $E_n$ to be the all-one vector of size $n \times 1$. Then the constraint (\ref{eqn:price-constraint-ub}) in matrix notation is
\begin{equation}
  \label{eqn:price-constraint-ub-alt}
  D_0 E_n - A_n (p^s + p^c) > 0.
\end{equation}
Sum up the components in (\ref{eqn:price-constraint-ub}) over all $i$ and set the sum price $P = \sum_{i} (p^s_i + p^c_i)$, and we see that the total demand is
\[
  nD_0 - (\alpha - (n-1)\beta)P
\]
under the assumption that each $d_i$ is strictly positive. For this total demand to be negatively correlated with the average price per unit demand $P/n$, we must have that
\begin{equation}
  \label{eqn:negative-correlation}
  (n-1)\beta \leq \alpha,
\end{equation}
an assumption that we make from now on\footnote{This condition also arises from assumption (D) in \cite{Informs2004_BernsFeder}.}. As before we assume that $p^s_i$ and $p^c_i$ can be negative, i.e., the ISP and CP can pay the internauts for their usage, with a consequent increase in demand.

Under the constraint (\ref{eqn:price-constraint-ub}), $U_{\isp}$ given by
\[
  U_{\isp} = \sum_{i=1}^n d_i(p^s,p^c) (p^s_i + p^d_i)
\]
is a concave quadratic function\footnote{Simple calculations show that the Hessian matrix is $-2A_n$. To see that it is negative semidefinite, observe that $-2A_n  = - 2 \alpha \times [(1 - \rho) I_n + \rho J_n]$ where $\rho = -\beta/\alpha$. The matrix $(1 - \rho) I_n + \rho J_n$ has $1-\rho$ as an eigenvalue repeated $n-1$ times and $1 + \rho (n-1)$ once, and is therefore positive semidefinite by our assumption (\ref{eqn:negative-correlation}). (It is positive definite if there is strict inequality in (\ref{eqn:negative-correlation})). Consequently, the Hessian $-2A_n$ is negative semidefinite, and $U_{\isp}$ is a concave function of $p^s$.} of the vector of ISP prices $p^s$.

\subsection{General demand function}
\label{subsec:GeneralDemandFunction}

As alluded to above, the demands in (\ref{eqn:demand-multi-CP}) have to be refined to account for the lack of further positive correlation after a demand reaches zero. See the discussion in the paragraph following the one containing (\ref{eqn:demand-multi-CP}). We present the detailed derivation of the general demand function in Appendix \ref{app:GeneralDemandFunction}. For a given price vector $p=(p_1,p_2,\cdots, p_n)$, where $p_i=p^s_i+p^c_i$, the demand for content $i$ is defined as follows:
\begin{equation}
  \label{eqn:GeneralDemand}
  d_i(p) = \left\{
    \begin{array}{ll}
      D_0 - \alpha p_i + \beta \left( \sum_{j < k^*+1, j \neq i} p_j \right) + (n-k^*) \beta T(k^* + 1), & i = 1, \ldots, k^* \\
      0, & i > k^*
    \end{array}
    \right.
\end{equation}
where
\[
T(k) := \frac{D_0 + \beta \sum_{j: j < k} p_j}{\alpha - (n-k) \beta}, \quad k = 1, 2, \ldots, n,
\]
and $k^*$ is the smallest index among $k = 0, 1, \ldots, n$ for which
\begin{eqnarray*}
  p_i & < & T(i), \quad i = 1, \ldots, k \nonumber \\
  p_{k+1} & \geq & T(k+1).
\end{eqnarray*}

In the study of problems with linear demand functions of the form (\ref{eqn:demand-multi-CP}), the analysis is usually restricted to the price vectors that results in positive demand from each player. To the best of our knowledge, the above demand function that completely characterizes the demands for any price vector is new. The considerations that lead to this demand function are given in Appendix \ref{app:GeneralDemandFunction}. This generalization is an another modest contribution of this paper.

To get more insight into the general demand function we summarize the demands for the case of two CPs. In Figure \ref{fig:Demand Region} we plot all the possible demands as a  function of internaut prices ($p = (p_1,p_2)$). As shown, we can divide the set of prices into four regions. A description of the regions is given below.

\begin{figure}[t]
  \centering
  \includegraphics[width=3.25in, height=2.6in]{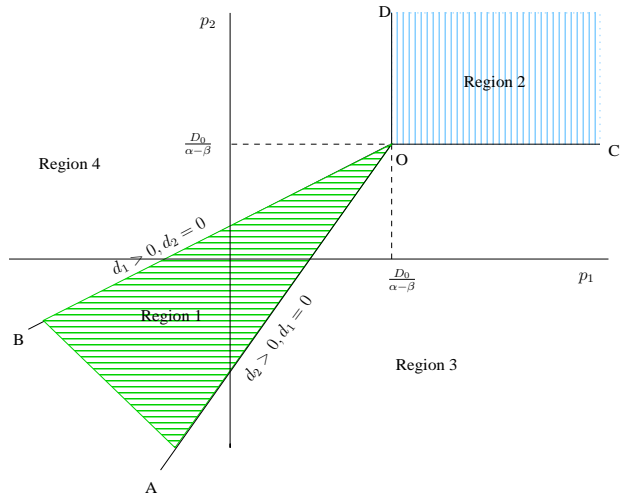}
  \caption{Characterization of the demand region. (Scales of abscissa and ordinate are different).}\label{fig:Demand Region}
\end{figure}

\begin{enumerate}
\item Denote the vector of net prices by $p=(p_1,p_2)$. If it lies in the interior of the region bounded by lines $AO$ and $BO$, denoted as Region $1$, demands for contents from both the CPs are strictly positive.
\item In the rectangular region enclosed between lines $OC$ and $OD$, denoted as Region $2$, the demands for contents from both CPs are zero.
\item In the region enclosed between the lines $AO$ and $OC$, denoted as Region $3$, demand of CP 1 content is zero and that of CP $2$ content is positive. Any point $p$ that lies on the line $AO$ is such that $p_1 = (D_0 + \beta p_2)/\alpha$ with $p_2 < D_0/ (\alpha -\beta)$.
\item In the region enclosed between the lines $BO$ and $OD$, denoted as Region $4$, demand of CP $2$ content is zero and that of CP $1$ is positive. Any point $p$ that lies on line $BO$ is such that $p_2 = (D_0 + \beta p_1)/\alpha$ with $p_1 < D_0/ (\alpha -\beta)$.
\end{enumerate}

\subsection{Timing of actions}
\label{subsec:timing-of-actions-multiCP}

The timing of actions for the games are as follows. For the \exante regulation game:

\begin{enumerate}
  \item The regulator sets the side payment from each CP to the ISP, separately and simultaneously. This can be positive or negative. In deciding the amount paid by CP $i$, the regulator shall bring only that revenue into consideration which is generated by internauts connected to CP $i$.
  \item All the CPs choose their price $p^c_i$. The ISP chooses the vector $p^s$. All these actions are taken simultaneously.
  \item The internauts react to the prices and set their demands as per the discussion in the previous subsection.
\end{enumerate}

For the \expost regulation game:
\begin{enumerate}
  \item The CPs and the ISP set their respective access prices $p^c_i$ and $p^s$ simultaneously.
  \item The internauts react to the prices and set their demands.
  \item The regulator sets the payment $p^d_i$ from the CP $i$ to the ISP. This can be positive or negative. Yet again, the regulator shall be able to bring only that revenue into consideration which is generated by internauts connected to CP $i$.
\end{enumerate}

The case when $\beta = 0$ is easily handled in either scenario. The actions of the various CPs (prices) do not influence each other. Though the ISP's utility is the sum over all revenues accrued from access to each CP, in setting the $p^d_i$ the regulator takes into account only the revenue generated by accesses to content of CP $i$. The ISP's utility is thus separable, and the problem separates into $n$ single-CP single-ISP problems. The results of Theorems \ref{thm:1cp-bargainbefore} and \ref{thm:1cp-bargainafter} then apply. We shall henceforth assume that $\beta > 0$. 

\subsection{Ex ante regulation}
\label{subsec:bargain-before-multicp}

Here, we characterize only those equilibrium prices that result in strictly positive demands for content from all the CPs. It is however possible that there are equilibria with some demands being zero. Using the general demand function (\ref{eqn:GeneralDemand}), we characterize all such equilibria in Sections \ref{app:MultiCPAllZeroDemand} and \ref{app:MultiCPMixedDemand} of Appendix \ref{app:GeneralDemandFunction}.

Recall the definition of the matrix $A_n$ and the vector $E_n$, given after (\ref{eqn:price-constraint-ub}). The matrix $A_n$ has diagonal entries $\alpha$ and off-diagonal entries $-\beta$. $E_n$ is the $n \times 1$ vector of all 1s.

\begin{thm}
\label{thm:multicp-bargainbefore}
  Assume $\alpha > (n-1) \beta > 0$ and consider the \exante regulation game. Among profiles with strictly positive demand, a strictly positive pure strategy Nash equilibrium exists if and only if the matrix $(A_n + 2 \alpha I_n)^{-1} [D_0 E_n + A_n p^a]$ is made of strictly positive entries. When this condition holds, the pure strategy Nash equilibria have the following properties.
  \begin{itemize}
    \item The uniqueness is up to a free choice of the vector $p^d$.
    \item At equilibrium, for each $i$, there exist constants $g_i$ and $h_i$ that depend only on $p^a,~D_0,~\alpha,~\beta$ such that
    \begin{eqnarray*}
      p^s_i & = & g_i - p^d_i \\
      p^c_i & = & h_i + p^d_i.
    \end{eqnarray*}
    \item For each CP $i$, the net internaut payment per unit demand is unique and is given by $p^s_i + p^c_i = g_i + h_i$. Any payment $p^d_i$ paid by CP $i$ is collected from the internauts, and this in turn is returned to the internauts by the ISP.
    \item The demand vector is unique and does not depend on $p^d$.
    \item The revenues per unit demand, and therefore the total revenues collected by the CPs and the ISP, does not depend on $p^d$.
  \end{itemize}
\end{thm}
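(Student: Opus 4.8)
The plan is to analyze the two-stage \exante game by backward induction, treating the regulator's vector $p^d$ as a fixed parameter throughout and verifying only at the end that the equilibrium outcome is independent of it (so that any $p^d$ is consistent with the regulator's product-maximization, giving the claimed ``free choice of $p^d$''). For a fixed $p^d$, I look for an equilibrium of the simultaneous price-setting game that lies in the interior of the strictly-positive-demand region (\ref{eqn:price-constraint-ub}). As in the opening of the proof of Theorem \ref{thm:1cp-bargainbefore}, one first notes that at any equilibrium the per-demand revenues $r_{\isp,i}=p^s_i+p^d_i$ and $r_{\cp,i}=p^c_i+p^a_i-p^d_i$ are nonnegative, since a player facing negative utility could raise its own price to zero out the relevant demand. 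On the positive-demand region $U_{\cp,i}$ is a concave (downward) parabola in $p^c_i$ and, by the footnote computation showing the Hessian equals $-2A_n\preceq 0$, $U_{\isp}$ is jointly concave in the vector $p^s$; hence an interior equilibrium is characterized by the first-order conditions.

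The core computation is to write these conditions in matrix form. Differentiating $U_{\cp,i}=d_i\,r_{\cp,i}$ in $p^c_i$ gives $d_i=\alpha\,r_{\cp,i}$, i.e.\ $d=\alpha\,r_{\cp}$ where $r_{\cp}:=p^c+p^a-p^d$; differentiating $U_{\isp}=\sum_k d_k\,r_{\isp,k}$ in each $p^s_i$ gives $d_i=\alpha\,r_{\isp,i}-\beta\sum_{k\neq i}r_{\isp,k}$, i.e.\ $d=A_n\,r_{\isp}$ with $r_{\isp}:=p^s+p^d$. Combining these with the demand identity $d=D_0E_n-A_n(p^s+p^c)$ and the relation $p^s+p^c=r_{\isp}+r_{\cp}-p^a$, I substitute $A_n r_{\isp}=d$ and $A_n r_{\cp}=\frac1\alpha A_n d$ to obtain the single linear system $(2\alpha I_n+A_n)\,d=\alpha(D_0E_n+A_n p^a)$. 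The assumption $\alpha>(n-1)\beta>0$ makes $A_n$ positive definite (its eigenvalues are $\alpha-(n-1)\beta$ on $E_n$ and $\alpha+\beta$ with multiplicity $n-1$), so $2\alpha I_n+A_n$ is invertible and $d=\alpha(A_n+2\alpha I_n)^{-1}(D_0E_n+A_n p^a)$ is the unique candidate demand. Since $\alpha>0$, this $d$ has all strictly positive entries exactly when $(A_n+2\alpha I_n)^{-1}[D_0E_n+A_n p^a]$ does, which is the stated existence criterion; this simultaneously gives uniqueness of $d$ and its independence from $p^d$.

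From here the listed properties fall out. Because $d$ is pinned down, so are $r_{\isp}=A_n^{-1}d$ and $r_{\cp}=d/\alpha$, both independent of $p^d$; setting $g_i:=r_{\isp,i}$ and $h_i:=r_{\cp,i}-p^a_i$ (functions of $D_0,\alpha,\beta,p^a$ only) yields $p^s_i=g_i-p^d_i$ and $p^c_i=h_i+p^d_i$, whence the net internaut payment $p^s_i+p^c_i=g_i+h_i$ is unique and every $p^d_i$ is merely transferred from the internauts (through $p^c_i$) back to them (through $p^s_i$). The per-demand revenues, and hence the totals $U_{\isp}=\sum_i d_i r_{\isp,i}$ and $U_{\cp,i}=d_i r_{\cp,i}$, depend only on $d,r_{\isp},r_{\cp}$ and so are independent of $p^d$; the only freedom left in the FOC solution is the split of $r_{\isp,i}$ and $r_{\cp,i}$ into access prices versus $p^d_i$, which is the ``uniqueness up to a free $p^d$''. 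Finally, since the regulator's per-CP objective (built from $d_i r_{\isp,i}$ and $U_{\cp,i}=d_i r_{\cp,i}$) depends only on these $p^d$-independent quantities, every $p^d$ is a regulator optimum, justifying the free choice.

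The step I expect to be the real obstacle is not the algebra but confirming that the interior FOC point is a genuine Nash equilibrium against \emph{all} deviations, not merely interior ones. For each CP this is easy: CP $i$ controls the single scalar $p^c_i$, its payoff $d_i r_{\cp,i}$ is a concave parabola that is eventually zero once $d_i$ is driven to $0$, so the interior FOC is its global best response. The delicate case is the ISP, which controls the whole vector $p^s$ and could try to raise some prices enough to zero out a subset of demands, thereby triggering the capacity-freeing redistribution encoded in the general demand function (\ref{eqn:GeneralDemand}) and boosting the surviving demands. Ruling out such a profitable boundary deviation cannot be read off from the positive-region concavity alone and requires working directly with (\ref{eqn:GeneralDemand}); this is exactly the more tedious analysis that the paper isolates in the zero-demand and mixed-demand treatments of Appendix \ref{app:GeneralDemandFunction}, and it is where the bulk of the careful verification must go.
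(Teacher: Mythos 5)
Your proposal is correct and lands on exactly the paper's formulas, but your route through the linear algebra is genuinely different and cleaner than the paper's. The paper (Appendix \ref{app:proof-bargain-before}) writes the first-order conditions as a $2n \times 2n$ block system (\ref{eqn:multicp-statpoint}) in the price variables $(p^s, p^c)$, inverts it using commutativity of $A_n$ and $B_n$ and the identity $4A_nB_n - A_n^2 = A_n(A_n + 2\alpha I_n)$, and only afterwards extracts the per-demand revenues and the demand vector from the explicit solution (\ref{eqn:solution-multicp}). You instead change variables to the revenues $r_{\isp} = p^s + p^d$ and $r_{\cp} = p^c + p^a - p^d$, observe that the FOCs read $d = A_n r_{\isp}$ and $d = \alpha r_{\cp}$, and substitute into the demand identity to obtain the single $n \times n$ system $(A_n + 2\alpha I_n)\,d = \alpha(D_0E_n + A_n p^a)$. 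This reproduces the paper's (\ref{eqn:demand-solution}), (\ref{eqn:isp-receipt}) and (\ref{eqn:cp-receipt}) in one step, avoids the block inversion entirely, and makes the structural conclusions immediate: only $d$, $r_{\isp}$, $r_{\cp}$ are pinned down (all independent of $p^d$), while the split of each revenue into an access price and a side payment is the residual free choice of $p^d$. Both arguments then close identically, via concavity of $U_{\isp}$ (Hessian $-2A_n$) and of each $U_{\cp,i}$, with positivity of $d$ equivalent to the stated matrix condition.

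Regarding the obstacle you flag at the end: the concern about ISP deviations that zero out some demands and trigger the general demand function (\ref{eqn:GeneralDemand}) is legitimate, but it is not something the paper's own proof resolves either. The paper's proof opens by restricting to ``strategies jointly constrained so that $d_i > 0$ for all $i$,'' so its concavity argument, like yours, only certifies the FOC point against deviations within the closure of the positive-demand region; the theorem's qualifier ``among profiles with strictly positive demand'' is to be read in that constrained sense. The general demand function is used in Theorems \ref{thm:multi-cp-bb-zerodemand} and \ref{thm:multicp-bb-mixed} to characterize the \emph{other} (zero- and mixed-demand) equilibria, not to robustify the all-positive-demand equilibrium against cross-region deviations. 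So your proof is complete at the same level of rigor as the paper's, and your closing caveat is, if anything, a sharper statement of what the paper leaves unverified.
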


The recipe for the proof is identical to that of Theorem \ref{thm:1cp-bargainafter}, only with some matrix algebra. See Appendix \ref{app:proof-bargain-before}.

{\em Remarks}: 1) Yet again we notice that the actual choice of $p^d$ does not affect the net cost per unit demand to the internauts; neither does it affect the equilibrium demand. It merely affects the way in which the payment by the internauts is split between CP $i$ and the ISP. The mere fact that the regulator fixed some $p^d$ (under \exante regulation) suffices to get an equilibrium more favorable than the case when $p^d$ is under the control of one of the players and is jointly set with that player's access price \cite{nn}.

2) For concreteness, we give specific results for the case when $n = 2$; see (\ref{eqn:solution-multicp}) in Appendix \ref{app:proof-bargain-before}. Let $\tau = \beta/\alpha$. The negative definiteness condition is then $\tau < 1$, and thus $\tau \in (0,1)$. The equilibrium prices turn out to be
      \begin{eqnarray}
        \label{eqn:multi-cp-ps}
        p^s & = & - p^d + \frac{1}{3(1 - \tau^2/9)} \left[ \begin{array}{ccc} 1 & \tau/3 \\ \tau/3 & 1 \end{array} \right] p^a + \frac{D_0}{3 \alpha (1-\tau) (1 - \tau/3)} E_2, \\
        \label{eqn:multi-cp-pc}
        p^c & = & p^d - \frac{2}{3(1 - \tau^2/9)} \left[ \begin{array}{ccc} 1 & \tau/3 \\ \tau/3 & 1 \end{array} \right] p^a + \frac{D_0}{3 \alpha (1 - \tau/3)} E_2.
      \end{eqnarray}
      An interesting observation from (\ref{eqn:multi-cp-ps}) is that when $\tau \lesssim 1$, any increase in CP 2 price causes a reduction in demand for that content, but results in nearly similar in magnitude increase in demand for content 1, and vice-versa. The ISP resources thus remain nearly fully utilized which encourages the ISP to charge a high price, as evidenced by the appearance of $1 - \tau$ in one of the denominators for $p^s$. The price charged by the CP in (\ref{eqn:multi-cp-pc}) remains bounded.

\subsection{Ex post regulation}
\label{subsec:bargain-after-multicip}

As done previously, the ISP and the CPs will choose their respective prices knowing that the revenue they earn will depend on the side payment set by the regulator. We shall present our results for $n=2$, due to combinatorial complexity reasons.

As in the $n=1$ case, the ISP and CP $i$ will share $p^s_i + p^c_i + p^a_i$, the revenue coming from internauts accessing content from CP $i$, in the proportion $\gamma_i$ and $1-\gamma_i$. One immediate observation is that at equilibrium, this revenue should be nonnegative if demand is strictly positive because otherwise CP $i$ can raise price and force demand to be zero, change his loss to zero, and strictly improve. Another observation is that all utilities and the constraints depend on $p^s_i$ and $p^c_i$ only through the sum $p^s_i + p^c_i$. While this sum is bounded if the demand vector is to be strictly positive, neither $p^s_i$ nor $p^c_i$ need be bounded, and so the action sets for each of the agents is unbounded. We shall present our main result for the \expost regulation game for $n=2$ and under a condition on the relative weights, namely, the matrix $H$ with entries
\begin{equation}
  \label{eqn:condition-on-bargains}
  H_{ij} = \left\{
    \begin{array}{ll}
      \gamma_i & i = j, \\
      -(\frac{\beta}{\alpha}) \left( \frac{\gamma_i + \gamma_j}{2} \right) & i \neq j,
    \end{array}
  \right.
\end{equation}
is positive definite. This condition arises to keep the utility of the ISP a concave function of $p^s$ in Region 1 of Figure \ref{fig:Demand Region}.

\begin{proposition}
  Let $\tau = \beta/\alpha$ and $n=2$. Then $H$ is positive definite if and only if
  \begin{equation}
    \label{eqn:condition-H-pd}
    \sqrt{\max \left\{ \frac{\gamma_1}{\gamma_2}, \frac{\gamma_2}{\gamma_1} \right\}} \leq \frac{1+\sqrt{1 - \tau^2}}{\tau}.
  \end{equation}
  Under this condition, the Hessian of $U_{\isp}$ in Region 1, given by $-2\alpha H$, is negative definite, and so $U_{\isp}$ is a concave function of $(p^s_1, p^s_2)$ in Region 1.
\end{proposition}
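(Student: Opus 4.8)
The plan is to use that, for $n=2$, the matrix $H$ is a symmetric $2\times 2$ matrix and to apply Sylvester's criterion (positivity of the leading principal minors). Writing $\tau = \beta/\alpha$, I would first record
\[
  H = \begin{pmatrix} \gamma_1 & -\tau\,\frac{\gamma_1+\gamma_2}{2} \\[2pt] -\tau\,\frac{\gamma_1+\gamma_2}{2} & \gamma_2 \end{pmatrix}.
\]
Since the relative weights are strictly positive, the first leading minor $H_{11}=\gamma_1$ is automatically positive, so positive definiteness is equivalent to positivity of the determinant alone,
\[
  \det H = \gamma_1\gamma_2 - \tau^2\,\frac{(\gamma_1+\gamma_2)^2}{4} > 0 .
\]

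The second step is to massage this into the stated form. Dividing through by $\gamma_1\gamma_2>0$ gives $\tau^2(\gamma_1+\gamma_2)^2/(4\gamma_1\gamma_2)<1$. Setting $w=\sqrt{\max\{\gamma_1/\gamma_2,\gamma_2/\gamma_1\}}\ge 1$, I would observe that $(\gamma_1+\gamma_2)^2/(\gamma_1\gamma_2)=(w+1/w)^2$, the quantity being symmetric under swapping the two weights so that the choice of the maximum is harmless. The determinant condition then reads $w+1/w<2/\tau$. Finally, since $w\mapsto w+1/w$ is strictly increasing on $[1,\infty)$ and the equation $w+1/w=2/\tau$ (equivalently $\tau w^2-2w+\tau=0$) has the two reciprocal roots $(1\pm\sqrt{1-\tau^2})/\tau$ straddling $1$, the inequality for $w\ge 1$ is governed by the larger root and reproduces exactly (\ref{eqn:condition-H-pd}). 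I would note that strict positive definiteness corresponds to the strict version of (\ref{eqn:condition-H-pd}), the boundary being the positive-semidefinite case.

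For the second assertion I would substitute the regulator's \expost response into $U_{\isp}$ and differentiate. As established for each CP in the reasoning mirroring the proof of Theorem~\ref{thm:1cp-bargainafter}, the regulator sets $p^d_i$ so that $p^s_i+p^d_i=\gamma_i(p^s_i+p^c_i+p^a_i)$; in Region~1 the demands are $d_i=D_0-\alpha(p^s_i+p^c_i)+\beta(p^s_j+p^c_j)$, so, writing $p_i=p^s_i+p^c_i$,
\[
  U_{\isp}=\sum_{i=1}^{2}\gamma_i\,d_i\,(p_i+p^a_i),
\]
a quadratic in $(p_1,p_2)$ and hence, for fixed $p^c$, in $(p^s_1,p^s_2)$ since $\partial p_i/\partial p^s_j=\delta_{ij}$. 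A direct computation of the second partials gives $\partial^2 U_{\isp}/\partial (p^s_i)^2=-2\alpha\gamma_i$ and $\partial^2 U_{\isp}/\partial p^s_1\partial p^s_2=\beta(\gamma_1+\gamma_2)$, so the Hessian is precisely $-2\alpha H$. As $H$ is positive definite under (\ref{eqn:condition-H-pd}) and $\alpha>0$, the Hessian $-2\alpha H$ is negative definite, whence $U_{\isp}$ is concave in $p^s$ on Region~1.

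The computations here are elementary, so I do not expect a real obstacle; the only delicate points are purely algebraic. The first is the equivalence in the second paragraph, namely verifying that $w+1/w<2/\tau$ for $w\ge 1$ is equivalent to comparison with the larger root $(1+\sqrt{1-\tau^2})/\tau$ and reproduces (\ref{eqn:condition-H-pd}) exactly, together with keeping the strict-versus-nonstrict boundary straight. The second is the bookkeeping check that the off-diagonal cross term $\beta(\gamma_1+\gamma_2)$ coincides with $-2\alpha\cdot\bigl(-\tau(\gamma_1+\gamma_2)/2\bigr)$, so that the Hessian genuinely equals $-2\alpha H$ rather than merely sharing its sign pattern.
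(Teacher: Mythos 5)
Your proof is correct, and it is essentially the detailed version of the paper's own argument: the paper simply states that the claim is ``straightforward to verify by direct evaluation of eigenvalues'' and omits all details, and your use of Sylvester's criterion (positivity of $\gamma_1$ and of $\det H$) is an equivalent elementary check for a symmetric $2\times 2$ matrix, while your Hessian computation matches the paper's one-line assertion. You also correctly flag a point the paper glosses over: at equality in (\ref{eqn:condition-H-pd}) one has $\det H = 0$, so $H$ is only positive \emph{semi}definite there, and the ``if and only if'' as stated with $\leq$ is accurate only when the inequality is taken strictly.
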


\begin{proof}
  $H$ is a $2 \times 2$ matrix and the statement is straightforward to verify by direct evaluation of eigenvalues and requiring that they be positive. The expression for $U_{\isp}$ immediately yields that the Hessian is $-2\alpha H$. We omit the details.
\end{proof}

This condition (\ref{eqn:condition-H-pd}) holds, for example, when the $\gamma_i$'s are equal and $\alpha > \beta$.

Our main result of this section is the following mixed bag. Recall that the case $\beta = 0$ was already considered and disposed; so we shall consider only $\beta > 0$.

\begin{thm}
\label{thm:multicp-bargainafter}
Consider $n=2$. Let the matrix $H$ given by (\ref{eqn:condition-on-bargains}) be positive definite. Also let $\alpha > \beta > 0$. Without loss of generality, assume $p^a_1 \geq p^a_2$. For the \expost regulation game, the following hold.
\begin{itemize}
  \item If $p^a_1$ is large enough so that
  \begin{equation}
    \label{eqn:too-high-pa}
    p^a_1 \geq (2 \alpha / \beta) p^a_2 + (2 \alpha / \beta - 1) D_0,
  \end{equation}
  then there exists a pure strategy Nash equilibrium with $d_1 > 0$ and $d_2 = 0$. Such an equilibrium satisfies all the properties of a single-CP and single-ISP equilibrium given in Theorem \ref{thm:1cp-bargainafter} with $D_0$ and $\alpha$ replaced by $D'_0 = D_0 (\alpha + \beta) / \alpha$ and $\alpha' = (\alpha^2 - \beta^2)/\alpha$. There is no other pure strategy Nash equilibrium.
  \item If (\ref{eqn:too-high-pa}) does not hold, there exists no pure strategy Nash equilibrium.
\end{itemize}
\end{thm}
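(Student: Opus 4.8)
The plan is to work backward through the regulator's per-CP bargaining step and then run an exhaustive case analysis over the four demand regions of Figure~\ref{fig:Demand Region}. First I would dispose of the regulator's move exactly as in Theorem~\ref{thm:1cp-bargainafter}: since each $p^d_i$ enters only the bargaining product for CP $i$ and the common factor $d_i$ pulls out of the maximization, the regulator splits the per-CP net revenue $r_i := p^s_i + p^c_i + p^a_i$ so that the ISP keeps $\gamma_i r_i$ and CP $i$ keeps $(1-\gamma_i) r_i$ per unit demand. Writing $q_i = p^s_i + p^c_i$, the post-regulation payoffs become $U_{\isp} = \sum_i \gamma_i d_i r_i$ and $U_{\cp,i} = (1-\gamma_i) d_i r_i$, and the only remaining game is the simultaneous choice of $(p^s, p^c)$ against these payoffs and the general demand function (\ref{eqn:GeneralDemand}).

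The heart of the argument is to rule out three of the four regions. In Region 1 (both demands strictly positive) I would write the first-order conditions: each CP's condition gives $d_i = \alpha r_i$, while the ISP's condition for $p^s_i$ reads $\gamma_i(d_i - \alpha r_i) + \gamma_j \beta r_j = 0$, the cross term $\gamma_j \beta r_j$ being precisely the capacity-coupling effect. Substituting the CP conditions into the ISP conditions forces $\gamma_j \beta r_j = 0$, hence $r_j = 0$ and $d_j = \alpha r_j = 0$, contradicting strict positivity. Thus, crucially because $\beta > 0$, there is no interior equilibrium. Region 2 (both demands zero) is eliminated as in Theorem~\ref{thm:1cp-bargainafter}: a player can always lower its own price to make its demand and revenue strictly positive, using the fact that once the other CP is shut out the active CP faces the reduced linear demand $D'_0 - \alpha' q$ with $D'_0 = D_0(\alpha+\beta)/\alpha$ and $\alpha' = (\alpha^2-\beta^2)/\alpha$. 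Region 3 ($d_1=0$, $d_2>0$) is where $p^a_1 \geq p^a_2$ is used: by the reduction below, a Region-3 equilibrium would require the mirror of (\ref{eqn:too-high-pa}), namely $p^a_2 \geq (2\alpha/\beta)p^a_1 + c\,D_0$ for the same positive constant $c$; since $2\alpha/\beta > 1$ (because $\alpha>\beta$) and $p^a_1 \geq p^a_2 \geq 0$, this cannot hold, so CP 1 always has a profitable entry.

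This leaves Region 4 ($d_1>0$, $d_2=0$), where I would exploit the key reduction: with CP 2 shut out, (\ref{eqn:GeneralDemand}) collapses to $d_1 = D'_0 - \alpha' q_1$, so the active pair (ISP, CP 1) plays exactly the single-CP \expost game with $(D_0,\alpha)$ replaced by $(D'_0,\alpha')$. Theorem~\ref{thm:1cp-bargainafter} then supplies the unique candidate with $q_1 = (D'_0 - \alpha' p^a_1)/(2\alpha')$, $d_1 = (D'_0 + \alpha' p^a_1)/2 > 0$, and the stated revenue split in proportions $\gamma_1$ and $1-\gamma_1$. It remains to verify that at this candidate neither CP 2 nor the ISP can profitably activate CP 2. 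CP 2's best deviation re-enters at the threshold net price $q_2^\ast = (D_0 + \beta q_1)/\alpha$, and evaluating $dU_{\cp,2}/dq_2$ at $q_2^\ast$ shows entry is unprofitable precisely when $q_2^\ast + p^a_2 \leq 0$; substituting $q_1$ turns this into the threshold on $p^a_1$ displayed in (\ref{eqn:too-high-pa}). Collecting the four cases proves both bullets: under (\ref{eqn:too-high-pa}) the Region-4 profile is the unique equilibrium, and when it fails no region admits one.

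I expect the main obstacle to be the global (rather than merely local) deviation checks forced by the piecewise-linear demand (\ref{eqn:GeneralDemand}), in particular the ISP's joint deviation in $(p^s_1, p^s_2)$ toward Region 1. The clean first-order argument only excludes an interior Region-1 equilibrium; to certify the Region-4 candidate I must show that the ISP's objective, which is concave in Region 1 by the positive definiteness of $H$, has its unconstrained Region-1 maximizer lying outside Region 1, so that the constrained optimum sits on the boundary $d_2 = 0$ and coincides with the Region-4 solution. Checking this, together with making sure each deviation analysis respects the region boundaries rather than slipping between the branches of (\ref{eqn:GeneralDemand}), is the delicate part; the algebra itself is elementary but must be carried out region by region.
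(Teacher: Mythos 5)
Your proposal is correct and follows essentially the same route as the paper's proof: the regulator's proportional split, the four-region case analysis with the first-order-condition contradiction ruling out interior Region-1 equilibria, the reduction to the single-CP game with $(D'_0,\alpha')$ in Region 4, CP 2's entry-threshold condition yielding (\ref{eqn:too-high-pa}), and the mirrored condition eliminating Region 3. You also correctly identify the one genuinely delicate point---certifying the Region-4 candidate against the ISP's joint deviation into Region 1 and against deviations that cross branches of the piecewise demand---which is exactly where the paper invokes the positive definiteness of $H$ (concavity of $U_{\isp}$ in Region 1) together with one-sided derivative checks along the line BO.
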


See Appendix \ref{app:proof-bargain-after} for a proof.

Thus even though \expost regulation in the single-CP single ISP case always gave a unique Nash equilibrium with the desirable strictly positive demand, the desirable feature disappears when there are multiple CPs,  $\alpha > \beta > 0$, and $p^a_1$ is not high enough to satisfy (\ref{eqn:too-high-pa}). In particular, when $p^a_i$ are equal, there is no equilibrium in the \expost regulation game. {\em Ex ante} regulation continues to yield a unique Nash equilibrium among those profiles with strictly positive demand vectors. In the case when there is indeed an \expost regulation equilibrium, under (\ref{eqn:too-high-pa}), CP 2 is shut out by CP 1. The above result corrects an error in \cite[Th. 4]{NETCOOP10_NonneutralNetwork_AltmanHanawal} where the equilibrium under (\ref{eqn:too-high-pa}) was missed.

\section{Exclusive Contract}
\label{sec:ExclusiveContract}
In this section we study the \exante regulation game in a setting where one of the CPs (say CP 1) enters into an exclusive contract with the ISP\footnote{The same situation also arises when the ISP itself is also a provider of content.}. We refer to the pair of ISP and the colluding CP 1 as the {\em colluding pair} and denote it $\ispbar$. They make a joint decision on the prices $p_1=p_1^s+p_1^c$ charged to the internauts of class 1. We denote the sum of their utilities as $U_{\ispbar}$. The objective of the colluding pair is to maximize their joint revenue. How they share it between themselves shall not concern us here.

The utilities obtained by the  $n$ players in the resulting game are as follows:
\begin{eqnarray*}
  \label{eqn:ISPUtilityCollusion}
  U_{\ispbar}(p_1^s+p_1^c,p_2^s,\cdots, p_n^s,p_2^c,\cdots,p_n^c) & = & \Big[ D_0-\alpha (p_1^s+p_1^c) +\beta \sum_{j\neq 1}(p_j^s+p_j^c) \Big] (p_1^s+p_1^c + p_1^a) \\
  & & + \sum_{i\neq 1} \Big[ D_0 -\alpha (p_i^s +p_i^c)+  \beta \sum_{j\neq i}(p_j^s + p_j^c) \Big] (p_i^s + p_i^d),
\end{eqnarray*}
and for $i=2,3, \cdots, n$,
\begin{eqnarray*}
  \label{eqn:CPUtilityCollusion}
  U_{CP,i}(p_1^s+p_1^c,p_2^s,\cdots, p_n^s,p_2^c,\cdots,p_n^c) = \Big[ D_0- \alpha (p_i^s+p_i^c) + \beta \sum_{j\neq i}(p_j^s+p_j^c) \Big] (p_i^c + p_i^a-p_i^d).
\end{eqnarray*}
It is easy to verify that $U_{\ispbar}$ is a concave function of $\overline{p}^s:=(p_1^s+p_1^c,p_2^s,\ldots, p_n^s)$ for a given $p^a:=(p_1^a,p_2^a,\ldots, p_n^a)$ and $\overline{p}^d:=(p_2^d,p_3^d, \ldots, p_n^d)$, and for each for $i=2,3,\cdots,n$, $U_{\cp,i}$ is a concave function of $p_i^c$. Indeed, the Hessian of $U_{\ispbar}$ is $-2A_n$ which is negative semidefinite (negative definite when $\alpha > (n-1) \beta$).

The following theorem establishes the existence of equilibrium prices and studies some of its properties.

\begin{thm}
\label{thm:NEExclusiveContract}
Let $\alpha > (n-1) \beta$. In the \exante regulation game with collusion, there is a unique Nash equilibrium with the following properties.
\begin{itemize}
\item The equilibrium is unique up to a free choice of $\overline{p}^d$,
\item The equilibrium price set by the colluding pair is
\begin{equation}
\label{eqn:CollusionPriceColludingPair}
p_1^s + p_1^c= \frac{-p_1^a}{2}+ \frac{D_0}{2(\alpha-(n-1)\beta)},
\end{equation}
and for $i=2,3,\dots, n,$ the equilibrium prices are
\begin{equation}
\label{eqn:CollusionPriceNonColludingPlayers}
p_i^s=g_i-p_i^d \text{\;\;\;and\;\;\;} p_i^c=h_i+p_i^d,
\end{equation}
where the constants $g_i$ and $h_i$ depend only on $p^a, D_0, \alpha$, and $\beta$.
\item The demand vector, the revenue per unit demand, and therefore the total revenues collected by $\ispbar$ and the other $\cp$s do not depend on $\overline{p}^d$.
\end{itemize}
\end{thm}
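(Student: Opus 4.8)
The plan is to follow the backward-induction/first-order-condition recipe used for Theorems~\ref{thm:1cp-bargainafter} and~\ref{thm:multicp-bargainbefore}, adapted to the asymmetric role of the colluding pair. First I would restrict attention to strategy profiles in the interior of the strictly-positive-demand region (Region~1 of Figure~\ref{fig:Demand Region} when $n=2$), so that each $d_i$ is given by the smooth formula (\ref{eqn:demand-multi-CP}). Since the text already records that $U_{\ispbar}$ is concave in $\overline{p}^s=(p_1^s+p_1^c,p_2^s,\ldots,p_n^s)$ (Hessian $-2A_n$, negative definite as $\alpha>(n-1)\beta$) and each $U_{\cp,i}$ is concave in $p_i^c$, a profile is a Nash equilibrium exactly when it satisfies the stationarity conditions $\partial U_{\ispbar}/\partial(p_1^s+p_1^c)=0$, $\partial U_{\ispbar}/\partial p_k^s=0$ for $k=2,\ldots,n$, and $\partial U_{\cp,i}/\partial p_i^c=0$ for $i=2,\ldots,n$. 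This is a system of $2n-1$ linear equations in the $2n-1$ decision variables.

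The key reduction is to pass from the raw prices to the net prices $p_i=p_i^s+p_i^c$ and the revenue rates $r_{\isp,i}=p_i^s+p_i^d$, $r_{\cp,i}=p_i^c+p_i^a-p_i^d$, which satisfy the identity $r_{\isp,i}+r_{\cp,i}=p_i+p_i^a$. Each CP's condition reads $d_i=\alpha r_{\cp,i}$, so $r_{\cp,i}=d_i/\alpha$ and hence $r_{\isp,i}=p_i+p_i^a-d_i/\alpha$; substituting this into the colluding pair's conditions eliminates all revenue-split variables and leaves a linear system purely in the net price vector $p=(p_1,\ldots,p_n)$. I would then isolate $p_1$ by combining the content-$1$ condition with the sum of the contents-$2,\ldots,n$ conditions of the colluding pair. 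Writing $\mu:=\alpha-(n-1)\beta$ and using the total-demand identity $\sum_i d_i=nD_0-\mu P$ with $P=\sum_i p_i$, the coefficient multiplying $p_1+p_1^a$ factors as $-\mu(\alpha+\beta)$ via $\alpha^2-(n-2)\alpha\beta-(n-1)\beta^2=(\alpha-(n-1)\beta)(\alpha+\beta)$; the aggregate price $P$ cancels, and the identity $\mu+n\beta=\alpha+\beta$ collapses everything to $2p_1=-p_1^a+D_0/\mu$, i.e.\ (\ref{eqn:CollusionPriceColludingPair}). It is precisely here that the hypothesis $\alpha>(n-1)\beta$ is used, to divide by $\mu>0$.

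With $p_1$ fixed, the remaining contents-$2,\ldots,n$ conditions form a linear system for $(p_2,\ldots,p_n)$ whose coefficient matrix is again governed by $A_n$ and is invertible because $\alpha>(n-1)\beta$; this yields a unique net price $p_i=g_i+h_i$ for each $i\geq2$ and hence a unique demand vector, with $g_i,h_i$ depending only on $p^a,D_0,\alpha,\beta$. Finally I would read off the free-$\overline{p}^d$ structure: the regulator's choice of $p_i^d$ enters only through the split $p_i^c=d_i/\alpha-p_i^a+p_i^d=:h_i+p_i^d$ and $p_i^s=p_i-p_i^c=:g_i-p_i^d$, i.e.\ (\ref{eqn:CollusionPriceNonColludingPlayers}), while $p_1^d$ is internal to $\ispbar$ and irrelevant; since the net prices $p_i$, the demands $d_i$, and the revenue rates $r_{\isp,i},r_{\cp,i}$ are all functions of $p$ alone, they---and therefore the total revenues---are independent of $\overline{p}^d$. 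This gives uniqueness up to a free choice of $\overline{p}^d$ and the last bullet.

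I expect the main obstacle to be the bookkeeping in the second step: verifying that the colluding pair's asymmetric objective (which internalizes the advertising rate $p_1^a$ for content~$1$ but only the side-payment rate for the others) still produces the advertised cancellations, in particular that the aggregate price $P$ and the other CPs' advertising rates $p_2^a,\ldots,p_n^a$ drop out of $p_1$ entirely. The factorization of the $(p_1+p_1^a)$-coefficient and the identity $\mu+n\beta=\alpha+\beta$ are the crucial simplifications that make the closed form (\ref{eqn:CollusionPriceColludingPair}) emerge; establishing the invertibility of the residual system for the non-colluding CPs is then routine, mirroring Theorem~\ref{thm:multicp-bargainbefore}.
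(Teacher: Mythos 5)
Your proposal is correct, and it reaches the paper's conclusion by a genuinely different computational route. The paper's proof (Appendix \ref{app:proof-nonneutrality-collusion}) works with the raw variables $(\overline{p}^s,\overline{p}^c)$: it stacks the same $2n-1$ first-order conditions into the block system (\ref{eqn:CollusionMatrixEquations}), explicitly inverts the $(2n-1)\times(2n-1)$ coefficient matrix (computing $C^{-1}$, showing $b^TC^{-1}b=2\beta^2(n-1)/(\alpha-(n-2)\beta)$, and checking $2\alpha-b^TC^{-1}b\neq 0$ under $\alpha>(n-1)\beta$), and reads all claims off the explicit inverse. You instead eliminate the revenue-split variables first via the CP conditions $r_{\cp,i}=d_i/\alpha$ and then aggregate the colluding pair's conditions; I checked that your two identities do the job for general $n$: writing $\mu=\alpha-(n-1)\beta$ (your notation, not the paper's $\mu$), eliminating $R=\sum_{i\geq 2}r_{\isp,i}$ between the content-$1$ condition and the sum of the contents-$2,\ldots,n$ conditions gives $-\mu(\alpha+\beta)(p_1+p_1^a)+(\alpha-(n-2)\beta)d_1+\beta\sum_{i\geq 2}d_i=0$, and the total-demand identity $\sum_i d_i=nD_0-\mu P$ together with $\mu+n\beta=\alpha+\beta$ collapses the last two terms to $(\alpha+\beta)(D_0-\mu p_1)$, yielding exactly (\ref{eqn:CollusionPriceColludingPair}); the aggregate price $P$ and the rates $p_2^a,\ldots,p_n^a$ do drop out as you predicted. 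Your route is leaner and makes the closed form for $p_1$ transparent, whereas the paper's heavier inversion buys explicit formulas for every individual price (e.g.\ (\ref{eqn:CollusionEquilibriumPrices})), which is what Section \ref{sec:PriceOfCollusion} actually consumes for the ISCP/SCEP analysis. Two points you should make explicit to finish: (i) invertibility of your residual system for $(p_2,\ldots,p_n)$ is routine but needs the one-line check --- after your substitution the coefficient matrix has the form $c_1I_{n-1}+c_2(J_{n-1}-I_{n-1})$, with eigenvalues $-(3\alpha+4\beta+\beta^2/\alpha)$ and $-(3\alpha-(n-2)\beta)(\alpha-(n-2)\beta)/\alpha$, both nonzero when $\alpha>(n-1)\beta$; (ii) your identification of Nash equilibria with stationary points presumes interior (strictly positive demand) profiles and deviations confined to that region --- a caveat the paper's own proof shares, so it is not a gap relative to the paper.
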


See Appendix \ref{app:proof-nonneutrality-collusion} for a proof.

{\em Remarks}: 1) From (\ref{eqn:CollusionPriceColludingPair}), the equilibrium price charged by the colluding pair depends on only its advertisement revenues and is independent of other price quantities. The more the number of CPs, the higher the price charged by the colluding pair on internauts of class 1.

2) As before, $\overline{p}^d$ has no influence on the prices per unit demand seen by the internauts.

\subsection{Price of Collusion}
\label{sec:PriceOfCollusion}
Is collusion beneficial to the colluding pair? How does it affect the noncolluding CPs? We specialize Theorem \ref{thm:NEExclusiveContract} to case of two CPs, exploit the tractability of our model, and provide some explicit answers.

It can be straightforwardly checked that the equilibrium prices of Theorem \ref{thm:NEExclusiveContract} are
\begin{eqnarray}
  \left[ \begin{array}{c} p^s_1 + p^c_1 \\ p^s_2 \\ p^c_2 \end{array} \right]
  = \left[ \begin{array}{ccc} 0 & -1/2 & 0 \\ -1 & \tau/6 & 1/3 \\ 1 & -\tau/3 & -2/3 \end{array} \right]
    \cdot \left[ \begin{array}{c} p^d_2 \\ p^a_1 \\ p^a_2 \end{array} \right]
    + \frac{D_0}{6 \alpha} \left[ \begin{array}{c} 3/(1 - \tau) \\ (2+\tau)/(1 - \tau) \\ 2  \end{array} \right],
  \label{eqn:CollusionEquilibriumPrices}
\end{eqnarray}
where $\tau = \beta/\alpha \in [0,1)$. The net price per unit demand is then
\begin{eqnarray}
  \left[ \begin{array}{c} p^s_1 + p^c_1 \\ p^s_2 + p^c_2 \end{array} \right]
   = \left[ \begin{array}{ccc} -1/2 & 0 \\ -\tau/6 & -1/3 \end{array} \right]
    \cdot \left[ \begin{array}{c} p^a_1 \\ p^a_2 \end{array} \right]
    + \frac{D_0}{6 \alpha (1 - \tau)} \left[ \begin{array}{c} 3 \\ 4-\tau  \end{array} \right],
  \label{eqn:CollusionEquilibriumSumPrices}
\end{eqnarray}
independent of $p^d_2$ as we had observed earlier.

When there is no collusion between the ISP and CP 1, the equilibrium prices in (\ref{eqn:multi-cp-ps}) and (\ref{eqn:multi-cp-pc}) yield
\begin{eqnarray}
  \left[ \begin{array}{c} p^s_1 + p^c_1 \\ p^s_2 + p^c_2 \end{array} \right]
  = -\frac{1}{3(1 - \tau^2/9)} \left[ \begin{array}{ccc} 1 & \tau/3 \\ \tau/3 & 1 \end{array} \right]
    \cdot \left[ \begin{array}{c} p^a_1 \\ p^a_2 \end{array} \right]
    + \frac{2D_0 (1-\tau /2)}{3 \alpha (1-\tau) (1 - \tau/3)} E_2.
  \label{eqn:noCollusionEquilibriumSumPrices}
\end{eqnarray}

\cite{NETGCOOP11_RevisitingCollusion_AltmanKamedaHayel} proposed two relevant performance measures: the individual single collusion price (ISCP) and single collusion externality price (SCEP). When there is only one coalition, the ISCP is defined as the ratio of the total utilities of the colluding players before and after collusion. The SCEP is defined as the ratio of the total utilities of the noncolluding players before and after collusion. Let $(p^s,p^c)$ and $(\overline{p}^s, \overline{p}^c)$ denote the equilibrium prices under no collusion and under collusion, respectively. Then
\begin{eqnarray}
\label{eqn:ISCP_ISP}
\text{ISCP} & = & \frac{U_{\isp}(p^s,p^c)+U_{\cp,1}(p^s,p^c)}{U_{\overline{\isp}}(\overline{p}^s,\overline{p}^c)}, \\
\label{eqn:ISCP_CP2}
\text{SCEP} & = & \frac{U_{\cp,2}(p^s,p^c)}{U_{\cp,2}(\overline{p}^s,\overline{p}^c)}.
\end{eqnarray}
Substitution of (\ref{eqn:CollusionEquilibriumSumPrices}) and (\ref{eqn:noCollusionEquilibriumSumPrices}) in (\ref{eqn:ISCP_ISP}) and (\ref{eqn:ISCP_CP2}) provides a wealth of information:

\begin{enumerate}
  \item $\text{SCEP} \geq 1$ if and only if
  \begin{equation}
    \label{eqn:CollusionBenefitThresold}
    p_2^a \leq \frac{3-\tau^2}{2\tau}p_1^a+\frac{D_0(3+\tau)}{\alpha(2\tau)}.
  \end{equation}
  When (\ref{eqn:CollusionBenefitThresold}) holds, collusion between the ISP and CP 1 results in a loss in revenue for CP 2.
  \item Under (\ref{eqn:CollusionBenefitThresold}), both classes of internauts pay lower prices. But the demand for CP 2 content is lower.
  \item For a specific choice of parameters, see Figure \ref{fig:ISCP_ISP} for a plot of ISCP and SCEP versus the advertisement revenue parameter for CP 2. The colluding pair benefits exactly when the noncolluding CP has lower revenue.
   \item The colluding pair does not always benefit, for e.g., when (\ref{eqn:CollusionBenefitThresold}) is violated. See also \cite{NETGCOOP11_RevisitingCollusion_AltmanKamedaHayel}.
\end{enumerate}
We refer the reader to \cite[Sec.~4]{WPIN12_GameTheoreticAnalysis_HanawalAltmanSundaresan} for more details.

\begin{figure}[tb]
\begin{centering}
  \includegraphics[scale=.4]{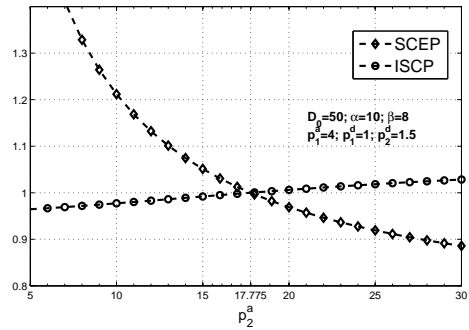}
  \caption{Individual Single collusion price for ISP}\label{fig:ISCP_ISP}
\end{centering}
\end{figure}

\section{Discussion}
\label{sec:discussion}
In this paper, we studied a model of a nonneutral network with off-network pricing and investigated mechanisms for regulating the payments between the ISP and the CPs. Our main observation is that a mild form of regulation can bring benefits to all players. We proposed two mechanisms based on a weighted proportional sharing criterion to regulate the side payments between the ISP and the CPs. In the \exante regulation game, where the regulator decides the side payment before the access prices are set competitively, the internauts do not get affected by the actual regulated prices between the ISP and the CPs. In particular, when the access prices result in positive demand, the equilibrium demand and the access prices are the same as in the case of the no off-network pricing (when $p^d = 0$). From the internauts' perspective, the mere presence of a regulator who regulates side payments in the \exante regulation game makes the internauts indifferent to the neutral and the nonneutral regimes. In the \expost regulation game, where the regulator sets the side payments after the prices are set competitively, price competition can result in zero demand equilibria when there are multiple CPs. All these observations appear to tilt the balance in favor of \exante regulation, though in the single-CP single-ISP setting it leads to higher prices.

In the nonneutral regime vertical monopolies can be formed. We considered a simple case of vertical monopoly where CP 1 colludes with the ISP. Such a collusion is beneficial to the CP only if its advertising revenues are higher than a certain threshold.

To keep our analysis tractable, we have used linear demand functions that are popular in the inventory management literature. The biggest benefit of using linear demand function is that it is tractable, as evidenced by the obtained expressions in this paper. It is naturally of interest to see the extent to which more nuanced demand functions may change the qualitative conclusions obtained in our paper.



\elecappendix

\medskip

\section{Derivation of General Demand Function}
\label{app:GeneralDemandFunction}
\setcounter{section}{1}
With a suitable reindexing, we may assume that the vector $p = p^s + p^c$ has components in the increasing order, i.e., $p_1 \leq p_2 \leq \cdots \leq p_n$, where $p_i = p_i^s + p_i^c$. For brevity, we shall abuse notation and refer to $d_i(p^s, p^c)$ as $d_i(p)$. Common sense suggests that if demand for CP $i$ content is zero, then demand for CP $j$ content for a $j \geq i$ must also be zero since its price is higher. It will be illuminating to study the evolution of the demand function as the price vector increases from the all-zero vector to $p$ via $\min\{ xE_n, p\}$, where $x$ is a scalar parameter that increases from 0 to $+\infty$ and the $\min$ operation is taken component-wise.

For $x \in [0,p_1]$, we have $\min\{ xE_n, p\} = xE_n$; all internauts are charged the same (net) price of $x$ per unit demand. It is then immediate that all demands are equal, and from (\ref{eqn:demand-multi-CP}), this value is strictly positive if and only if
\[
  x < \frac{D_0}{\alpha - (n-1)\beta}.
\]
In particular, demand for CP 1 is strictly positive at $x = p_1$ if and only if
\begin{equation}
  \label{eqn:PositveDemandCondnCP1}
  p_1 < \frac{D_0}{\alpha - (n-1)\beta} =: T(1).
\end{equation}
If (\ref{eqn:PositveDemandCondnCP1}) does not hold, the demand for the cheapest content is zero, and our common sense conclusion suggests that all other demands are also zero. If (\ref{eqn:PositveDemandCondnCP1}) holds, then at $x = p_1$, demand for CP 1 is strictly positive. For $x \in [0, p_1)$, the demand $d_1$ for content from CP 1 decreased with $x$. But further increase in $x$ leaves the price for CP 1 content unchanged at $p_1$, and our observations about positive correlation with respect to others' prices indicates that $d_1$ must now begin to linearly increase with $x$ for $x > p_1$. This is illustrated in Figure \ref{fig:Demand}. Thus for $x \in [p_1, p_2]$, we see
\begin{eqnarray}
  d_1 & = & D_0 - \alpha p_1 + (n-1) \beta x, \text{ for CP } 1 \nonumber \\
  \label{eqn:CP2-Demand}
  d_i & = & D_0 - (\alpha -(n-2) \beta) x + \beta p_1, \text{ for CP } i \geq 2. \quad \quad
\end{eqnarray}
At $x = p_2$, the demand from CP 2, given by (\ref{eqn:CP2-Demand}) for $i=2$, is positive if and only if
\begin{equation}
  \label{eqn:CP2-PositiveDemand-Condn}
  p_2 < \frac{D_0 + \beta p_1}{\alpha-(n-2)\beta} =: T(2).
\end{equation}
When (\ref{eqn:CP2-PositiveDemand-Condn}) holds, $d_1$ is linear in $x$ with positive slope $(n-1) \beta$ for $x$ up to $p_2$, and all other $d_i$ are linear and decreasing in $x$ with negative slope $-(\alpha - (n-2)\beta)$. Again see Figure \ref{fig:Demand}. If (\ref{eqn:CP2-PositiveDemand-Condn}) does not hold, $d_i = 0$ for $i \geq 2$, but $d_1$ is set up to the value $D_0 - \alpha p_1 + (n-1) \beta x^*$ where $x^* = T(2)$. All demands are thus set in this latter case. If (\ref{eqn:CP2-PositiveDemand-Condn}) holds, the former case, then one proceeds further in a similar fashion until $x^* = p_n$ and all demands are set, or until $x^* \in (p_{k^*}, p_{k^* + 1}]$ for some $k^*$, when demands $d_j = 0$ for all $j \geq k^*$, and demands $d_i$ are set with prices $\min \{ x^* E_n, p \}$. To get an explicit expression for the demands, let us define
\begin{equation}
  \label{eqn:DemandThresholdDefinition}
  T(k) := \frac{D_0 + \beta \sum_{j: j < k} p_j}{\alpha - (n-k) \beta}, \quad k = 1, 2, \ldots, n.
\end{equation}
Let $k^*$ be the smallest index among $k = 0, 1, \ldots, n$ for which
\begin{eqnarray}
  p_i & < & T(i), \quad i = 1, \ldots, k \nonumber \\
  \label{eqn:ZeroDemandCondn}
  p_{k+1} & \geq & T(k+1).
\end{eqnarray}
To further clarify (\ref{eqn:ZeroDemandCondn}), if $p_1 \geq T(1)$ then $k^* = 0$; if $p_i < T(i)$ for $i = 1, \ldots, n$, then $k^* = n$. In all other cases, the definition in (\ref{eqn:ZeroDemandCondn}) is unambiguous. Straightforward manipulations show that
\[
  T(k) > T(k+1) \mbox{ if and only if } p_k < T(k), ~ k = 1, \ldots, n-1.
\]
It follows from the definition of $k^*$ that
\begin{equation}
  \label{eqn:T-ordering}
  T(1)>T(2)>\cdots>T(k^*)>T(k^*+1)\leq T(k^*+2),
\end{equation}
where the last two inequalities hold if the corresponding indices are between 1 and $n$. Let us now get back to identifying the demands. Given $k^*$, we set $x^*$ such that
\[
  D_0 - \alpha x^* + \beta \sum_{j < k^* + 1} p_j + \beta (n - k^* - 1) x^* = 0;
\]
the solution is $x^* = T(k^* + 1)$. The demands are now specified by
\begin{equation}
  d_i(p) = \left\{
    \begin{array}{ll}
      D_0 - \alpha p_i + \beta \sum_{j < k^*+1, j \neq i} p_j + (n-k^*) \beta T(k^* + 1), & i = 1, \ldots, k^* \\
      0, & i > k^*.
    \end{array}
    \right.
\end{equation}

\begin{figure}[tb]
\begin{centering}
\includegraphics[width=3.45in, height=2in]{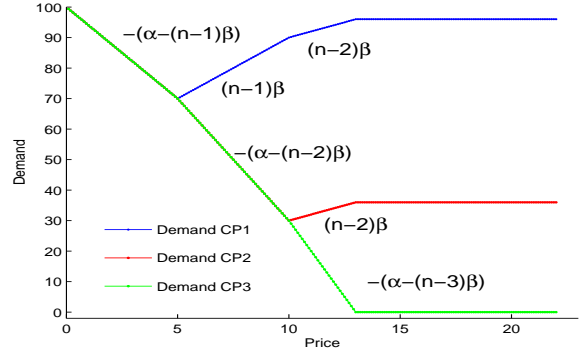}
\caption{Demand functions. Abscissa is the parameter $x$.}
\label{fig:Demand}
\end{centering}
\end{figure}

This describes the behavior of the internauts for any given price vector $p = p^s + p^c$ and models the positive correlation of demand with other internauts' prices. Figure \ref{fig:Demand} depicts the procedure outlined above to evaluate the demands when there are $n=3$ CPs. The other parameters are $D_0=100$, $\alpha=10$, $\beta=2$, and the price vector $p=(5,10,20)$. The slope of demand functions in different intervals are also marked. Here $k^* = 2$. The demand of each CP is obtained by noting the respective value of their demand curve at $x^*=T(3)$.

\subsection{Equilibria with all demands being zero}
\label{app:MultiCPAllZeroDemand}
We now study the case of equilibria with all demands being zero. Obviously (\ref{eqn:PositveDemandCondnCP1}) must not hold; additional conditions are also needed.

\begin{thm}
  \label{thm:multi-cp-bb-zerodemand}
  A price vector $(p^s, p^c)$ is an equilibrium with all demands being zero if and only if the following conditions hold for all $i = 1, 2, \ldots, n$:
  \begin{eqnarray}
    \label{eqn:all-zero-high-ps}
    p^s_i & \geq & \frac{D_0}{\alpha - (n-1) \beta} + p^a_i - p^d_i \\
    \label{eqn:all-zero-high-pc}
    p^c_i & \geq & \frac{D_0}{\alpha - (n-1) \beta} + p^d_i.
  \end{eqnarray}
\end{thm}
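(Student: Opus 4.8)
The plan is to mirror the zero-demand characterization of Theorem~\ref{thm:1cp-bargainbefore}(b), with the single-CP threshold $D_0/\alpha$ replaced by $T(1) = D_0/(\alpha-(n-1)\beta)$ from (\ref{eqn:PositveDemandCondnCP1}). Everything hinges on one structural fact about the general demand function (\ref{eqn:GeneralDemand}), which I would record first: \emph{for any price vector, $d_i(p)>0$ implies $p_i = p^s_i + p^c_i < T(1)$.} Indeed, reindex so that $p_1\le\cdots\le p_n$; the active CPs are precisely the cheapest $k^*$, and by the definition of $k^*$ every active index $j$ satisfies $p_j < T(j)$, while the strict chain $T(1)>T(2)>\cdots>T(k^*)$ of (\ref{eqn:T-ordering}) gives $p_j < T(j)\le T(1)$. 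Thus a CP can carry positive demand only when the net price to its internauts lies below $T(1)$, no matter how large the other prices are. I will also use the complementary observation that if a single net price is pushed below $T(1)$ and becomes the strict minimum, then that CP is switched on (since $p_{(1)}<T(1)$ forces $k^*\ge1$) while all others stay at zero demand (since $p_{(1)}<T(1)$ yields $T(1)>T(2)$, so the second-cheapest price, being $\ge T(1)>T(2)$, keeps $k^*=1$).

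For \textbf{necessity} I argue by contraposition. In any zero-demand profile $k^*=0$, so every net price satisfies $p^s_i+p^c_i\ge T(1)$. Suppose (\ref{eqn:all-zero-high-ps}) fails for some $i$, i.e. $T(1)-p^s_i+p^a_i-p^d_i>0$. Then CP $i$ can lower $p^c_i$ until $p_i$ sits just below $T(1)$: by the complementary observation $d_i$ becomes strictly positive, while the per-demand revenue $p^c_i+p^a_i-p^d_i$ stays arbitrarily close to $T(1)-p^s_i+p^a_i-p^d_i>0$, so $U_{\cp,i}>0$, contradicting equilibrium. Symmetrically, if (\ref{eqn:all-zero-high-pc}) fails for some $i$, the ISP lowers the single coordinate $p^s_i$ to drive $p_i$ just below $T(1)$; this activates only CP $i$, so $U_{\isp}=d_i(p^s_i+p^d_i)>0$ because $p^s_i+p^d_i$ is close to $T(1)-p^c_i+p^d_i>0$. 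Hence both (\ref{eqn:all-zero-high-ps}) and (\ref{eqn:all-zero-high-pc}) are forced.

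For \textbf{sufficiency} assume both inequalities. Adding them gives $p^s_i+p^c_i\ge 2T(1)+p^a_i\ge T(1)$ for each $i$ (using $T(1)>0$, valid since $\alpha>(n-1)\beta>0$), so $k^*=0$, all demands vanish, and all utilities are zero; the profile is genuinely a zero-demand profile. To rule out deviations, take any CP $i$ move that yields $d_i>0$; the structural fact gives $p_i<T(1)$, hence $p^c_i+p^a_i-p^d_i<T(1)-p^s_i+p^a_i-p^d_i\le0$ by (\ref{eqn:all-zero-high-ps}), so $U_{\cp,i}\le0$. The delicate case, and the one I expect to be the main obstacle, is the ISP, which controls the \emph{whole} vector $p^s$ and might switch on several CPs simultaneously, so no coordinatewise threshold comparison applies directly. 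The structural fact dispatches it: for any ISP deviation let $S$ be the set of CPs it turns on; for each $i\in S$ we have $p_i<T(1)$, whence $p^s_i+p^d_i<T(1)-p^c_i+p^d_i\le0$ by (\ref{eqn:all-zero-high-pc}). Every summand $d_i(p^s,p^c)(p^s_i+p^d_i)$ of $U_{\isp}$ is therefore $\le0$ (strictly negative on $S$, zero off $S$), so $U_{\isp}\le0$. No player can better its zero payoff, completing the proof.
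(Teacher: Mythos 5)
Your proof is correct and takes essentially the same route as the paper's: necessity comes from the same threshold-deviation arguments (the paper's ISP deviation fixes the revenue at an explicit $\varepsilon>0$ instead of pushing $p_i$ just under $T(1)$, a cosmetic difference), and sufficiency rests on the same key fact you isolate as a structural lemma --- positive demand forces $p_i < T(1)$ --- which the paper extracts from (\ref{eqn:T-ordering}) and (\ref{eqn:ZeroDemandCondn}) to conclude that every active content yields nonpositive revenue per unit demand under (\ref{eqn:all-zero-high-pc}). A minor merit of your write-up is that you explicitly check that the two inequalities force all demands to be zero, a step the paper leaves implicit.
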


\begin{proof}
  See Appendix \ref{app:proof-thm:multi-cp-bb-zerodemand}.
\end{proof}

{\em Remarks}: 1) Equations (\ref{eqn:all-zero-high-pc}) and (\ref{eqn:all-zero-high-ps}) are the same as saying that revenues per unit demand due to CP $i$ content, to CP $i$ and to the ISP, are at least
\[
  \frac{D_0}{\alpha - (n-1) \beta} + p^a_i = T(1) + p^a_i,
\]
and this holds for each $i$. In such a case, all the CPs and the ISP are charging too high a price resulting in a deadlock equilibrium with all demands zero.

2) When $n=1$, (\ref{eqn:all-zero-high-ps}) and (\ref{eqn:all-zero-high-pc}) reduce to (\ref{eqn:zero-demand-eq1}) and (\ref{eqn:zero-demand-eq2}), as they should.

\subsection{Equilibria with mixed demands with $n=2$}
\label{app:MultiCPMixedDemand}
In order to avoid combinatorial complexities, and for ease of exposition, we focus on the case when $n=2$ and now characterize all equilibria where demand for one content is strictly positive and demand for the other is zero.

\begin{thm}
\label{thm:multicp-bb-mixed}
(a) A price profile $((p^s_1, p^s_2), p^c_1, p^c_2)$ is an equilibrium with $d_1 > 0$ and $d_2 = 0$ if and only if
\begin{eqnarray}
  p^s_1 & = & \frac{D'_0 + \alpha' p^a_1}{3 \alpha'} - p^d_1 \label{eqn:ps1-mixed}\\
  p^c_1 & = & \frac{D'_0 - 2 \alpha' p^a_1}{3 \alpha'} + p^d_1 \label{eqn:pc1-mixed}\\
  p^s_2 & \geq & \frac{D_0 + \beta (p^s_1 + p^c_1)}{\alpha} - p^d_2 + p^a_2 \label{eqn:ps2-mixed} \\
  p^c_2 & \geq & \frac{D_0 + \beta (p^s_1 + p^c_1)}{\alpha} - p^s_2 \label{eqn:pc2-mixed},
\end{eqnarray}
where $D'_0 = D_0(\alpha+\beta)/\alpha$ and $\alpha' = (\alpha^2 - \beta^2)/\alpha$.

(b) A price profile $((p^s_1, p^s_2), p^c_1, p^c_2)$ is an equilibrium with $d_2 > 0$ and $d_1 = 0$ if and only if the same conditions as above hold with indices 1 and 2 interchanged.
\end{thm}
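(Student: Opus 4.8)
The plan is to reduce the mixed case to the single-CP single-ISP \exante game of Theorem~\ref{thm:1cp-bargainbefore} by exploiting the structure of the general demand function. First I would specialize (\ref{eqn:GeneralDemand}) to $n=2$ with $k^*=1$, i.e.\ $d_1>0$ and $d_2=0$. Writing $p_1=p^s_1+p^c_1$ and using $T(2)=(D_0+\beta p_1)/\alpha$, a direct substitution into the $i=1$ branch gives
\[
  d_1 = D_0 - \alpha p_1 + \beta\,T(2) = D_0\frac{\alpha+\beta}{\alpha} - \frac{\alpha^2-\beta^2}{\alpha}\,p_1 = D'_0 - \alpha' p_1 .
\]
The key observation is that, on the region where CP~2 is shut out, the demand seen by CP~1 is \emph{exactly} a single-CP linear demand with the renormalized parameters $D'_0$ and $\alpha'$, and the CP-2 revenue term drops out of $U_{\isp}$. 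Hence the interaction between CP~1 and the ISP is an \exante game identical to the one analyzed in Theorem~\ref{thm:1cp-bargainbefore}, with $D_0,\alpha$ replaced by $D'_0,\alpha'$ and with side payment $p^d_1$.

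With this reduction the equalities (\ref{eqn:ps1-mixed})--(\ref{eqn:pc1-mixed}) follow immediately from (\ref{eqn:p1})--(\ref{eqn:p2}) under $D_0\mapsto D'_0$, $\alpha\mapsto\alpha'$: the first-order conditions $\partial U_{\isp}/\partial p^s_1=0$ and $\partial U_{\cp,1}/\partial p^c_1=0$ are valid because both utilities are concave in their own price on the region $d_1>0,\,d_2=0$, and they are precisely the single-CP conditions. Before invoking Theorem~\ref{thm:1cp-bargainbefore} I would check that the relevant unilateral deviations of CP~1 and of the ISP stay inside this region: lowering $p^c_1$ or $p^s_1$ only decreases $p_1$, which keeps $d_2=D_0-\alpha p_2+\beta p_1$ nonpositive, while raising them past the single-CP optimum is unprofitable because it reduces $d_1$. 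Thus the single-CP optimality of $(p^s_1,p^c_1)$ transfers verbatim.

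Next I would derive the two inequalities as the conditions under which CP~2 remains shut out. The feasibility condition $d_2=0$ reads $p_2=p^s_2+p^c_2\ge T(2)=(D_0+\beta p_1)/\alpha$, which is exactly (\ref{eqn:pc2-mixed}). For (\ref{eqn:ps2-mixed}) I would mimic the zero-demand argument in the proof of Theorem~\ref{thm:1cp-bargainbefore}: if CP~2 lowers $p^c_2$ to the critical value $q^c_2=T(2)-p^s_2$ that brings $d_2$ to the threshold of positivity, its revenue per demand there is $q^c_2+p^a_2-p^d_2$, and a profitable deviation is absent exactly when this quantity is nonpositive, i.e.\ $p^s_2\ge T(2)+p^a_2-p^d_2$, which is (\ref{eqn:ps2-mixed}). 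Part~(b) then follows after interchanging the roles and the indices of the two CPs, since (\ref{eqn:GeneralDemand}) is symmetric in the CP labels.

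The hard part will be the ISP's deviation. Unlike the single-CP case, the ISP controls \emph{both} $p^s_1$ and $p^s_2$, and because of the positive cross-correlation $\beta>0$, lowering $p^s_2$ to re-activate $d_2$ simultaneously lowers $d_1$ through the capacity-freeing term; one therefore cannot treat the ISP's CP-2 revenue in isolation. The correct check is that the ISP's \emph{joint} best response over $(p^s_1,p^s_2)$ lands in the region $d_2=0$. I would carry this out by computing the unconstrained maximizer of the concave quadratic $U_{\isp}$ in Region~1 (its Hessian is $-2A_2$, negative definite since $\alpha>\beta$), evaluating $d_2$ there, and arguing that at the shut-out equilibrium this maximizer satisfies $d_2\le 0$, so the constrained optimum sits on the boundary $d_2=0$ and coincides with the reduced single-CP optimum of the second step. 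Reconciling the boundary condition produced by this computation with the stated inequalities is the delicate accounting step, and I expect it to be the crux of the argument; everything else is the elementary but tedious algebra already anticipated in the paper.
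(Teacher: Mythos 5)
Your reduction to a renormalized single-CP problem ($d_1 = D'_0 - \alpha' p_1$ on the shut-out region, $p_1 = p^s_1+p^c_1$) and your threshold argument for CP 2 are exactly the devices the paper uses, and part (b) by symmetry is also how the paper disposes of that case. But two steps are genuinely missing, and both live on the boundary line BO, where $p_2 = (D_0+\beta p_1)/\alpha$ exactly --- a case the theorem admits, since (\ref{eqn:pc2-mixed}) permits equality. On that line the demand function has a kink: lowering $p^c_1$ keeps the profile in the shut-out region (demand slope $-\alpha'$), but raising $p^c_1$ pushes it into Region 1, where $d_1$ falls at the faster rate $\alpha > \alpha'$. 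Hence single-CP optimality does \emph{not} transfer verbatim in the necessity direction: at a BO profile with $p^c_1$ \emph{below} the renormalized optimum, the single-CP game promises a profitable price increase, but in the two-CP game that increase meets the steeper slope, and its profitability has to be re-proved rather than imported. Your only remark about upward deviations (``unprofitable because it reduces $d_1$'') is a sufficiency-side statement and does not do this. The paper's proof spends much of its length on precisely this point, comparing the left partial derivative (\ref{eqn:partial-Ucp1-left}), computed from the Region-4 side, with the right partial derivative (\ref{eqn:partial-Ucp1-rigt}), computed from the Region-1 side, in order to force $p^c_1 = p^c_{1,\opt}$.

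The ISP step you defer as ``the crux'' is not merely delicate; the claim you propose to prove is false as stated. Take any profile with $p^s_1, p^c_1$ as in (\ref{eqn:ps1-mixed})--(\ref{eqn:pc1-mixed}), $p^s_2 = M$ very large, and $p^c_2 = (D_0+\beta p_1)/\alpha - M$; both (\ref{eqn:ps2-mixed}) and (\ref{eqn:pc2-mixed}) hold. Reducing $p^s_2$ by a small $\varepsilon>0$ creates demand $d_2 = \alpha\varepsilon$, on which the ISP earns per-unit revenue $M - \varepsilon + p^d_2$, while $d_1$ falls only by $\beta\varepsilon$; the net change in $U_{\isp}$ is $\varepsilon\bigl[\alpha(M+p^d_2) - \beta(p^s_1+p^d_1)\bigr] - \alpha\varepsilon^2 > 0$ for large $M$, since $p^s_1+p^d_1$ is pinned down by (\ref{eqn:ps1-mixed}). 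So the unconstrained Region-1 maximizer of $U_{\isp}$ does \emph{not} satisfy $d_2 \le 0$ at every profile admitted by the stated conditions, and your planned argument cannot close in that form. The paper's route is different: it first lets the ISP slide $p^s_2$ down, costlessly, to the equivalent point on BO, and then requires $\nabla U_{\isp}$ there to be normal to BO relative to the cone of feasible deviations (condition (\ref{eqn:partials-condition})), which is what yields $d_1 = \alpha' r_{\isp,1}$ and hence (\ref{eqn:ps1-mixed}). Be aware, though, that normality fixes the gradient only up to sign; the example above shows the orientation --- equivalently an upper bound on $(D_0+\beta p_1)/\alpha - p^c_2 + p^d_2$ by $(\beta/\alpha)(p^s_1+p^d_1)$ --- is a further constraint that neither your sketch nor the listed inequalities capture, so any completion of this step must confront it explicitly rather than hope it follows from concavity.
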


\begin{proof}
  See Appendix \ref{app:proof-thm:multicp-bb-mixed}.
\end{proof}

{\em Remarks}: 1) Region 1 equilibria are characterized in Theorem \ref{thm:multicp-bargainbefore}. Region 2 equilibria are characterized in Theorem \ref{thm:multi-cp-bb-zerodemand}. Equilibria in Regions 3 and 4 are characterized in Theorem \ref{thm:multicp-bb-mixed}. We have therefore characterized all equilibria in the $n=2$ case.

2) Conditions (\ref{eqn:ps1-mixed}) and (\ref{eqn:pc1-mixed}) together constitute an equilibrium in case of a single CP with $D_0$ and $\alpha$ replaced by $D'_0$ and $\alpha'$, respectively.

3) Conditions (\ref{eqn:ps2-mixed}) and (\ref{eqn:pc2-mixed}) may be interpreted as
\begin{eqnarray*}
  r_{\isp,2} \geq T(2) + p^a_2 \mbox{ and } p_2 \geq T(2)
\end{eqnarray*}
where $r_{\isp,2} = p^s_2 + p^d_2$ is the revenue to the ISP from CP 2 content.

\section{Proof of Theorem \ref{thm:multi-cp-bb-zerodemand}}
\label{app:proof-thm:multi-cp-bb-zerodemand}
We first prove the necessity of these conditions. Let $(p^s, p^c)$ be an equilibrium with all demands being zero; it must be the case that (\ref{eqn:PositveDemandCondnCP1}) is violated, and so
\[
  p^s + p^c \geq \left( \frac{D_0}{\alpha - (n-1) \beta} \right) E_n.
\]
CP $i$ should not be able to reduce his price, increase demand $d_i$ to a strictly positive value, and derive a strictly positive utility. It must therefore be the case that even the least reduced price $q^c_i$ that keeps the demand $d_i$ on the threshold of positivity is too low a price bringing him a negative revenue. More precisely, a price
\begin{equation}
  \label{eqn:cp-least-reduced-price}
  q^c_i + p^s_i = \frac{D_0}{\alpha - (n-1) \beta}
\end{equation}
when demand for CP $i$ content is on the threshold of positivity should imply
\begin{equation}
  \label{eqn:cp-neg-revenue}
  q^c_i - p^d_i + p^a_i \leq 0,
\end{equation}
a negative revenue per unit demand for CP $i$. Substitution of (\ref{eqn:cp-least-reduced-price}) in (\ref{eqn:cp-neg-revenue}) yields necessity of (\ref{eqn:all-zero-high-ps}).

We next verify necessity of (\ref{eqn:all-zero-high-pc}) by contraposition. Let $i$ be a content index for which (\ref{eqn:all-zero-high-pc}) does not hold, and so
\begin{equation}
  \label{eqn:pci-pdi}
  p^c_i - p^d_i < D_0 / (\alpha - (n-1) \beta).
\end{equation}
Take
\begin{equation}
  \label{eqn:epsilon}
  \varepsilon = \frac{1}{2} \left( \frac{D_0}{\alpha - (n-1) \beta} - (p^c_i - p^d_i) \right) > 0,
\end{equation}
and set $q^s_i$ so that $q^s_i + p^d_i = \varepsilon > 0$, i.e., the ISP revenue from CP $i$ content is $\varepsilon > 0$. Also set all other $p^s_j$ for $j \neq i$ to high values so that demand for these other contents is zero. Demand $d_i$ for CP $i$ content is however strictly positive because, by using $\varepsilon = q^s_i + p^d_i$, (\ref{eqn:pci-pdi}) and (\ref{eqn:epsilon}), we get
\begin{eqnarray*}
  q^s_i + p^c_i & = & (q^s_i + p^d_i) - p^d_i \\
  & = & \varepsilon + p^c_i - p^d_i \\
  & = & \frac{1}{2} \left( \frac{D_0}{\alpha - (n-1) \beta} - (p^c_i - p^d_i) \right) + (p^c_i - p^d_i) \\
  & < & \frac{D_0}{\alpha - (n-1) \beta}.
\end{eqnarray*}
Thus (\ref{eqn:PositveDemandCondnCP1}) holds, and so $d_i > 0$. (All other demands are zero). The ISP now has a strictly positive utility, and the profile cannot be a Nash equilibrium. By contraposition, necessity of (\ref{eqn:all-zero-high-pc}) is established.

We now argue sufficiency of (\ref{eqn:all-zero-high-ps}) and (\ref{eqn:all-zero-high-pc}). Take a profile that satisfies these conditions. A glance at the proof of necessity of (\ref{eqn:all-zero-high-ps}) indicates that there is no deviation for CP $i$ to derive a positive utility. To see that there is no deviation for the ISP that will yield a positive revenue, let $q^s$ be any vector of ISP prices. Without loss of generality, reorder the prices $q_i = q^s_i + p^c_i$, $i=1,2,\ldots,n$, so that $q_1 \leq q_2 \leq \cdots \leq q_n$. If ISP revenue for content $j$ were strictly positive, then
\[
  0 < q^s_j + p^d_j \leq q^s_j + p^c_j - \frac{D_0}{\alpha - (n-1) \beta}
\]
where the second inequality follows from (\ref{eqn:all-zero-high-pc}), and so
\begin{equation}
  \label{eqn:pos-revenue}
  q_j = q^s_j + p^c_j > \frac{D_0}{\alpha - (n-1) \beta} = T(1)
\end{equation}
for any such $j$ with strictly positive ISP revenue for content $j$. ($T(1)$ is defined in (\ref{eqn:PositveDemandCondnCP1})). However, from (\ref{eqn:T-ordering}) and (\ref{eqn:ZeroDemandCondn}), any index with strictly positive demand satisfies $q_i < T(i) \leq T(1)$. Comparing this with (\ref{eqn:pos-revenue}), we deduce that indices with strictly positive demand have nonpositive revenue per unit demand. The ISP revenue is therefore nonpositive, and there is no deviation that will yield a better revenue. This proves sufficiency of the stated conditions, and the characterization of all Nash equilibria with all demands being zero is complete.

\section{Proof of Theorem \ref{thm:multicp-bb-mixed}}
\label{app:proof-thm:multicp-bb-mixed}

We shall prove only (a). Proof of (b) is similar and is omitted.

We first prove necessity of the stated conditions. Let $((p^s_1, p^s_2), p^c_1, p^c_2)$ be an equilibrium with $d_1 > 0$ and $d_2 = 0$. Then, from the discussion on demands, we must have
\begin{eqnarray}
  p_1 & = & p^s_1 + p^c_1 ~ < ~ \frac{D_0}{\alpha - \beta} \nonumber \\
  p_2 & = & p^s_2 + p^c_2 ~ \geq ~ \frac{D_0 + \beta p_1}{\alpha}. \label{eqn:p2-region4}
\end{eqnarray}

Necessity of (\ref{eqn:pc2-mixed}) is immediate from (\ref{eqn:p2-region4}).

We next prove the necessity of (\ref{eqn:ps2-mixed}). Since $d_2 = 0$, the current utility for CP 2 is zero. No unilateral deviation of CP 2 should yield him a strictly positive utility. For a strictly positive utility, he must reduce his price to make the demand for his content strictly positive. But even the least reduction in his price that puts the demand for his content on the threshold of positivity, a price $q^c_2$ such that $q^c_2 + p^s_2 = (D_0 + \beta p_1) / \alpha$ should already yield him a net nonpositive revenue $q^c_2 + p^a_2 - p^d_2 \leq 0$. Substitution of the former equality in the latter inequality yields (\ref{eqn:ps2-mixed}) as a necessary condition.

We now consider deviations of the ISP. We first observe that ISP's utility must be nonnegative. Next, given that the price profile falls in Region 4 of Figure \ref{fig:Demand Region}, the ISP can reduce the price of $p^s_2$ to $q^s_2$ such that
\begin{equation}
  \label{eqn:q2}
  q_2 = q^s_2 + p^c_2 = T(2) = \frac{D_0 + \beta p_1}{\alpha}
\end{equation}
without affecting the demand $d_1$ and keeping the demand $d_2 = 0$. His revenue does not change, and the price profile $(p_1, q_2)$ is now on the line BO in Figure \ref{fig:Demand Region}. The ISP's utility is thus
\begin{eqnarray}
  U_{\isp}(p_1,p_2) & = & U_{\isp}(p_1, q_2) \nonumber \\
  & = & (D_0 - \alpha p_1 + \beta q_2) (p^s_1 + p^d_1) + (D_0 - \alpha q_2 + \beta p_1) (q^s_2 + p^d_2) \nonumber \\
  & = & (D_0 - \alpha p_1 + \beta q_2) (p_1 - p^c_1 + p^d_1) + (D_0 - \alpha q_2 + \beta p_1) (q_2 - p^c_2 + p^d_2). \quad \label{eqn:U_isp-BO}
\end{eqnarray}

Let us now consider infinitesimal deviations either into Region 1 or along the line BO, and prove necessity of (\ref{eqn:ps1-mixed}) and (\ref{eqn:pc1-mixed}). The ISP can clearly change $p^s_1$ and $q^s_2$ simultaneously to place the price vector in a neighborhood of $(p_1,q_2)$ inside Region 1 or on the line BO. Such deviations are given by increments $u = (u_1, u_2)$ that satisfy $u_2 \leq (\beta / \alpha) u_1$. Since $U_{\isp}(p_1,q_2)$ given by (\ref{eqn:U_isp-BO}) is differentiable in this region, and there must be no direction pointing into Region 1 in which $U_{\isp}$ increases, we must have the dot-product
\[
  \nabla U_{\isp}(p_1,q_2)^T ~ u \leq 0 \quad \forall u \mbox{ with } u_2 \leq (\beta / \alpha) u_1.
\]
It follows that the direction of steepest ascent for $U_{\isp}$ at $(p_1,q_2)$ which is $\nabla U_{\isp}(p_1,q_2)$ must be normal to the line defined by $u_2 = (\beta/\alpha) u_1$ and pointing away from Region 1, i.e.,
\begin{equation}
  \label{eqn:partials-condition}
  \frac{\partial U_{\isp}}{\partial q_2} = - \frac{\alpha}{\beta} \frac{\partial U_{\isp}}{\partial p_1}.
\end{equation}
From (\ref{eqn:U_isp-BO}), and after noting that $r_{\isp,1} := (p_1 - p^c_1 + p^d_1)$ and $r_{\isp,2} := (q_2 - p^c_2 + p^d_2)$ are the ISP revenues per unit demand arising from contents of CP 1 and CP 2, respectively, we get
\[
  \frac{\partial U_{\isp}}{\partial q_2} = \beta r_{\isp,1} + d_2 - \alpha r_{\isp,2} =  \beta r_{\isp,1} - \alpha r_{\isp,2}
\]
and
\[
  \frac{\partial U_{\isp}}{\partial p_1} = \beta r_{\isp,2} + d_1 - \alpha r_{\isp,1}.
\]
Substitution of these in (\ref{eqn:partials-condition}) yields that the condition
\begin{equation}
  \label{eqn:Region2-necessarycondition}
  d_1 = \alpha' r_{\isp,1}
\end{equation}
is a necessary condition for direction of increase for the ISP's utility. We now use $d_1 = D_0 - \alpha p_1 + \beta q_2$, the expression for $q_2$ given in (\ref{eqn:q2}), and the definition of $r_{\isp,1}$ and rewrite (\ref{eqn:Region2-necessarycondition}) as
\begin{equation}
  \label{eqn:alt-ps1-necessity}
  D'_0 - \alpha' (p^s_1 + p^c_1) = \alpha' (p^s_1 + p^d_1).
\end{equation}
Note that this equation fixes $p^s_1$ given a $p^c_1$:
\begin{equation}
  \label{eqn:ps1-given-pc1}
  p^s_1 = \frac{D'_0 - \alpha' (p^c_1 + p^d_1)}{2 \alpha'}.
\end{equation}
Furthermore, if we can establish the necessity of (\ref{eqn:pc1-mixed}) which fixes $p^c_1$, then (\ref{eqn:alt-ps1-necessity}) implies the necessity of (\ref{eqn:pc1-mixed}) as well, as can be verified by direct substitution.

We now establish the necessity of (\ref{eqn:pc1-mixed}). Consider first an interior point of Region 4. Small deviation by CP 1 move the point along the abscissa, and if small enough the deviation keeps the resulting point inside the interior of Region 4. Then $d_2$ continues to be 0 and $d_1 > 0$. As a consequence, it follows that $d_1 = D'_0 - \alpha' p_1$, where $p_1 = p^s_1 + p^c_1$ and the variation here is in $p^c_1$. The revenue for CP 1 is $p^c_1 + p^a_1 - p^d_1$ so that
\[
  U_{\cp,1} = (D'_0 - \alpha' (p^s_1 + p^c_1)) (p^c_1 + p^a_1 - p^d_1).
\]
It is thus necessary that the first order optimality condition hold, and so
\begin{equation}
  \label{eqn:partial-Ucp1}
  \frac{\partial U_{\cp,1}}{\partial p^c_1} = (D'_0 - \alpha' (p^s_1 + p^c_1)) - \alpha' (p^c_1 + p^a_1 - p^d_1) = 0
\end{equation}
so that
\begin{equation}
  \label{eqn:alt-pc1-necessity}
  (D'_0 - \alpha' (p^s_1 + p^c_1)) = \alpha' (p^c_1 + p^a_1 - p^d_1).
\end{equation}
Solving the simultaneous equations (\ref{eqn:alt-ps1-necessity}) and (\ref{eqn:alt-pc1-necessity}), we get the necessity of (\ref{eqn:ps1-mixed}) and (\ref{eqn:pc1-mixed}) among interior points of Region 4.

Now consider points on the line BO. Let us denote the right-hand sides of (\ref{eqn:ps1-mixed}) and (\ref{eqn:pc1-mixed}) as $p^s_{1,\opt}$ and $p^c_{1,\opt}$, respectively, i.e.,
\begin{eqnarray}
  p^s_{1, \opt} & = & \frac{D'_0 + \alpha' p^a_1}{3 \alpha'} - p^d_1 \label{eqn:ps1-mixed-star}\\
  p^c_{1, \opt} & = & \frac{D'_0 - 2 \alpha' p^a_1}{3 \alpha'} + p^d_1 \label{eqn:pc1-mixed-star}
\end{eqnarray}

If $p^c_1 \geq p^c_{1,\opt}$, consider an infinitesimal decrease in $p^c_1$ which puts the point in the interior of Region 4. The left partial derivative is
\begin{equation}
  \label{eqn:partial-Ucp1-left}
  \frac{\partial^- U_{\cp,1}}{\partial p^c_1} = (D'_0 - \alpha' (p^s_1 + p^c_1)) - \alpha' (p^c_1 + p^a_1 - p^d_1),
\end{equation}
the right-hand side of (\ref{eqn:partial-Ucp1}). We then have the following chain of equalities:
\begin{eqnarray}
  \frac{\partial^- U_{\cp,1}}{\partial p^c_1} & = & (D'_0 - \alpha' (p^s_1 + p^c_1)) - \alpha' (p^c_1 + p^a_1 - p^d_1) \nonumber \\
  & = & (D'_0 - \alpha' (p^s_{1,\opt} + p^c_{1,\opt})) - \alpha' (p^c_{1,\opt} + p^a_1 - p^d_1) \nonumber \\
  & & + ~ \alpha' (p^s_{1,\opt} - p^s_1) + 2 \alpha' (p^c_{1, \opt} - p^c_1) \label{eqn:add-substract} \\
  & = & 0 - (1/2) \alpha' (p^c_{1, \opt} - p^c_1) + 2 \alpha' (p^c_{1, \opt} - p^c_1) \label{eqn:left-partial-consequence1}\\
  & = & (3/2) \alpha' (p^c_{1, \opt} - p^c_1), \label{eqn:left-partial-consequence2}
\end{eqnarray}
where (\ref{eqn:add-substract}) follows by adding and subtracting
$$(D'_0 - \alpha' (p^s_{1,\opt} + p^c_{1,\opt})) - \alpha' (p^c_{1,\opt} + p^a_1 - p^d_1).$$
Equation (\ref{eqn:left-partial-consequence1}) follows because (\ref{eqn:alt-pc1-necessity}) and ({\ref{eqn:alt-ps1-necessity}}) hold for the pair $(p^c_{1,\opt},p^s_{1,\opt})$, and from (\ref{eqn:ps1-given-pc1}) we see that $$p^s_1 - p^s_{1,\opt} = -(1/2) (p^c_1 - p^c_{1,\opt}).$$
From (\ref{eqn:left-partial-consequence2}), $p^c_1 > p^c_{1,\opt}$ implies that an infinitesimal decrease results in a strict increase for CP 1. It must therefore be that $p^c_1 \leq p^c_{1,\opt}$ for the profile under consideration to be an equilibrium.

When $p^c_1 \leq p^c_{1,\opt}$, consider an infinitesimal increase in $p^c_1$ which puts the point in the interior of Region 1, i.e., both $d_1$ and $d_2$ become positive. As a consequence, the right-derivative is now
\begin{equation}
  \label{eqn:partial-Ucp1-rigt}
  \frac{\partial^+ U_{\cp,1}}{\partial p^c_1} = (D'_0 - \alpha' (p^s_1 + p^c_1)) - \alpha (p^c_1 + p^a_1 - p^d_1);
\end{equation}
observe that the difference with (\ref{eqn:partial-Ucp1-left}) is that the second term is multiplied only by $\alpha$ instead of $\alpha'$ as now both CPs have positive demand upon deviation in Region 1. Following the same steps leading to (\ref{eqn:left-partial-consequence2}), we now get
\begin{eqnarray*}
\frac{\partial^+ U_{\cp,1}}{\partial p^c_1} & = & 0 - (1/2) \alpha' (p^c_{1, \opt} - p^c_1) + (\alpha' + \alpha) (p^c_{1, \opt} - p^c_1) \\
& = & (\alpha + \alpha'/2) (p^c_{1, \opt} - p^c_1)
\end{eqnarray*}
and now the right-hand side has a different scale when compared with (\ref{eqn:left-partial-consequence2}). When $p^c_1 < p^c_{1,\opt}$, we have $\frac{\partial^+ U_{\cp,1}}{\partial p^c_1} > 0$ yielding a strict increase in CP 1 utility. It follows that we must have $p^c_1 = p^c_{1,\opt}$. This establishes the necessity of (\ref{eqn:pc1-mixed}), and the proof of necessity is complete.

Next, to address sufficiency of the stated conditions, consider a profile satisfying them. Our necessity argument for (\ref{eqn:ps2-mixed}) also shows that CP 2 has no deviation yielding him a strictly positive utility. For the ISP, the necessity argument considered an equivalent point on the line BO, and showed that there are no infinitesimal deviations around this point that will yield a better utility. But on account of concavity of the utility functions, no other point in Region 1 (including the boundary AO) will yield a strictly better utility. Since the boundary AO has also been considered, and the any point in Region 3 yields him the same utility as the point on the line AO with the same ordinate, no point in Region 3 will yield a better utility. Similarly, on account of concavity, CP 1 too as no deviation (infinitesimal or otherwise) that will yield him a strictly better utility. This concludes the proof of sufficiency.

\section{Proof of Theorem \ref{thm:multicp-bargainbefore}}
\label{app:proof-bargain-before}
Consider a fixed $p^d$. We shall only focus on strategies jointly constrained so that $d_i > 0$ for all $i$. The joint constraint on $p^s$ and $p^c$ is given by (\ref{eqn:price-constraint-ub}), and the demands are given by (\ref{eqn:demand-multi-CP}). Let us look at $U_{\isp}$ as a function of $p^s$ and $U_{\cp, i}$ as a function $p^c_i$. We already saw that the former is concave since $\alpha > (n-1) \beta$. Inspection of the expression for $U_{\cp, i}$ shows that it is also quadratic and strictly concave in $p^c_i$. Since we seek equilibria with strictly positive demand, such equilibria are interior points of, for example in case of $n = 2$, Region 1 in Figure \ref{fig:Demand Region}. It is therefore necessary that first order optimality conditions hold for such equilibria. So, setting the gradient of $U_{\isp}$ with respect to $p^s$ to zero, we get
\begin{eqnarray*}
  \frac{\partial U_{\isp}}{\partial p^s_k}
    & = & \sum_{j : j \neq k} \beta (p^s_j + p^d_j) - \alpha (p^s_k + p^d_k) + D_0 - \alpha (p^s_k + p^c_k) + \beta \sum_{j : j \neq k} (p^s_j + p^c_j) = 0
\end{eqnarray*}
for each $k$. Similarly, setting each $\partial U_{\cp, k} / \partial p^c_k = 0$ yields
\[
  D_0 - \alpha(p^s_k + p^c_k) + \beta \sum_{j: j \neq k} (p^s_j + p^c_j) - \alpha (p^c_k + p^a_k - p^d_k) = 0
\]
for each $k$. We next write these $2n$ equations in matrix notation. For this purpose recall that the matrix $A_n = (\alpha + \beta) I_n - \beta J_n$, where all diagonal elements are $\alpha$ and all off-diagonal elements are $-\beta$, and define $B_n = (\alpha + \beta/2) I_n - (\beta/2) J_n$, where all diagonal elements are $\alpha$ and all off-diagonal elements are $-\beta/2$. Also recall that $E_n$ is the vector of size $n \times 1$ with all-one entries. Then the above equations become:
\begin{equation}
\label{eqn:multicp-statpoint}
  \left[
    \begin{array}{cc}
      2A_n & A_n  \\
      A_n  & 2B_n \\
    \end{array}
  \right]
  \left[ \begin{array}{c}p^s \\ p^c \end{array} \right]
  =
  \left[
    \begin{array}{cc}
      -A_n & \bigcirc \\
      \alpha I_n & - \alpha I_n
    \end{array}
  \right]
  \left[ \begin{array}{c}p^d \\ p^a \end{array} \right] + D_0 E_{2n}.
\end{equation}
The matrices $A_n$ and $B_n$ commute because both are linear combinations of the commuting matrices $I_n$ and $J_n$. Moreover, the determinant of the matrix on the left side is
\begin{eqnarray*}
  \det(4A_nB_n - A_n^2) & = & \det(A_n (A_n+2\alpha I_n)) \\
  & = & \det(A_n) \det(A_n+2\alpha I_n) \\
  & = & (\alpha+\beta)^{n-1}(\alpha -(n-1)\beta)(3 \alpha + \beta)^{n-1} (3\alpha - (n-1)\beta) \\
  & > & 0.
\end{eqnarray*}
This follows because the eigenvalues of the matrix
$$M(\rho) = (1-\rho)I_n + \rho J_n$$
are $1-\rho$ repeated $n-1$ times and $1 + (n-1)\rho$ occurring once. The matrices $A_n$ and $A_n + 2 \alpha I_n$ are scaled versions of $M(\rho)$ with appropriate choices for $\rho$. Thus the matrix on the left side of (\ref{eqn:multicp-statpoint}) is invertible. From the fact that $A_n$ and $B_n$ commute, the fact that $4A_nB_n - A_n^2 = A_n(A_n+2\alpha I_n)$, and the formula for the inverse of two-by-two block matrices with commutable entries, one writes by inspection that
\[
  \left[
    \begin{array}{cc}
      2A_n & A_n  \\
      A_n  & 2B_n \\
    \end{array}
  \right]^{-1}
  = (A_n(A_n+2\alpha I_n))^{-1} \circ
  \left[
    \begin{array}{cc}
      2B_n & -A_n  \\
      -A_n & 2A_n \\
    \end{array}
  \right],
\]
where the symbol ``$\circ$'' implies that the matrix before it left-multiplies all the elements of the bigger matrix following it. Multiplying (\ref{eqn:multicp-statpoint}) by the above inverse, and observing that $2B_n + \alpha I_n = A_n + 2 \alpha I_n$, we get
\begin{eqnarray}
  \left[ \begin{array}{c}p^s \\ p^c \end{array} \right]
  =
  \left[
    \begin{array}{cc}
      -I_n & \alpha(A_n+2 \alpha I_n)^{-1} \\
      I_n & - 2\alpha (A_n + 2 \alpha I_n)^{-1}
    \end{array}
  \right]
  \left[ \begin{array}{c}p^d \\ p^a \end{array} \right]
  + D_0 \left[ \begin{array}{c} \alpha (A_n (A_n + 2 \alpha I_n))^{-1} E_n \\ (A_n + 2 \alpha I_n)^{-1} E_n \end{array} \right].
    \label{eqn:solution-multicp}
\end{eqnarray}
Let us now verify that the revenues to each of CPs and the ISP are nonnegative. First we handle the ISP. Observe that the components of $p^s + p^d$ constitute revenues from each family of internauts. From (\ref{eqn:solution-multicp}) we gather that
\begin{eqnarray}
  p^s + p^d
   & =& \alpha (A_n + 2 \alpha I_n)^{-1} p^a + \alpha D_0 (A_n + 2 \alpha I_n)^{-1} A_n^{-1} E_n \nonumber \\
  \label{eqn:isp-receipt}
  & = & \alpha A_n^{-1} (A_n + 2 \alpha I_n)^{-1} (A_n p^a + D_0 E_n).
\end{eqnarray}
Next, consider the CPs. Again from (\ref{eqn:solution-multicp}) we gather that
\begin{eqnarray}
  p^c - p^d + p^a & = & (I_n - 2 \alpha (A_n + 2 \alpha I_n)^{-1}) p^a + D_0 (A_n + 2 \alpha I_n)^{-1} E_n \nonumber \\
  \label{eqn:cp-receipt}
  & = & (A_n + 2 \alpha I_n)^{-1} (A_n p^a + D_0 E_n).
\end{eqnarray}
From (\ref{eqn:solution-multicp}), it also follows that
\[
  p^s + p^c = (A_n + 2 \alpha I_n)^{-1} (- \alpha p^a + D_0 (I_n + \alpha A_n^{-1}) E_n)
\]
so that the demand vector $d = D_0 E_n - A_n (p^s + p^c)$ can be written (after observing that all involved matrices commute) as
\begin{equation}
  \label{eqn:demand-solution}
  d = \alpha (A_n + 2 \alpha I_n)^{-1} ( A_n p^a + D_0 E_n).
\end{equation}
Using this in (\ref{eqn:isp-receipt}), we see that $p^s + p^d = A_n^{-1} d$ so that
\begin{equation}
  \label{eqn:isp-utility}
  U_{\isp} = d^T A_n^{-1} d.
\end{equation}

Necessity of
\begin{equation}
  \label{eqn:necessary-multicp-positivedemand}
  (A_n + 2 \alpha I_n)^{-1} (A_n p^a + D_0 E_n) > 0
\end{equation}
is then clear from (\ref{eqn:cp-receipt}) and (\ref{eqn:demand-solution}). Indeed, if any component on the left-hand side of (\ref{eqn:necessary-multicp-positivedemand}) is nonpositive, the corresponding CP derives a nonpositive revenue per unit demand, and the demand for this CP's content truncates to 0. Such a point is either not an equilibrium, or if so, not all demands are strictly positive.

Sufficiency of (\ref{eqn:necessary-multicp-positivedemand}) is obtained as follows. If (\ref{eqn:necessary-multicp-positivedemand}) holds, then (\ref{eqn:cp-receipt}), (\ref{eqn:demand-solution}), and (\ref{eqn:isp-utility}) yield a point with strictly positive revenue for all agents and strictly positive demand. Indeed, from (\ref{eqn:cp-receipt}), revenue per unit demand is strictly positive for all CPs; from (\ref{eqn:demand-solution}), all demands are strictly positive and consequently, all CPs' utilities are strictly positive; from (\ref{eqn:isp-utility}) and the fact that $A_n^{-1}$ has strictly positive eigenvalues, the ISP revenue is also positive. Furthermore, this point satisfies first-order optimality conditions. Given the concavity of the utility functions, it is a Nash equilibrium.

We have thus established that (\ref{eqn:necessary-multicp-positivedemand}) is necessary and sufficient for a pure strategy Nash equilibrium to exist. When this holds, the pure strategy Nash equilibria are such that (\ref{eqn:isp-receipt})-(\ref{eqn:isp-utility}) hold, for a given $p^d$.

Let us now bring the relative weights into the picture. Since the choice of $p^d$ does not affect the demands $d_i(p^s + p^d)$ as in (\ref{eqn:demand-solution}), and the collections per unit demand by each of the CPs and the ISP are as in (\ref{eqn:cp-receipt}) and (\ref{eqn:isp-receipt}), respectively, the optimal solution $p^d$ to the sharing problem can be taken as any vector.

It then follows that the unique demand is given by (\ref{eqn:demand-solution}) which establishes the fourth item. The form of the solution for $p^s$ and $p^c$ in (\ref{eqn:solution-multicp}) shows that $p^s_i = g_i - p^d_i$ and $p^c_i = h_i + p^d_i$ which verifies the second and the third items. Notice that $p^d$ can be any vector, and so the solution is unique up to a free choice of $p^d$, and the statement of the first item is verified. The last item follows from the observation that the demand vector, the price charged by the CPs in (\ref{eqn:cp-receipt}), and the the revenue of the ISP in (\ref{eqn:isp-utility}) do not depend on $p^d$. This concludes the proof.

\section{Proof of Theorem \ref{thm:multicp-bargainafter}}
\label{app:proof-bargain-after}
The system has two CPs, $n = 2$. When $\beta = 0$, the problem separates into two smaller problems each with one CP and one ISP, and Theorem \ref{thm:1cp-bargainafter} applies. We now assume $\alpha > \beta > 0$. It will be useful to recall Figure \ref{fig:Demand Region} which has four regions.

1. We now argue there are no pure strategy equilibria in Region 2. This is the region with both demands zero. Let the ISP prices $p^s_1, p^s_2$ and the CP prices $p^c_1$ and $p^c_2$ be such that \begin{eqnarray*}
  p_1 & = & p^s_1 + p^c_1 \geq D_0/(\alpha - \beta) \\
  p_2 & = & p^s_2 + p^c_2 \geq D_0/(\alpha - \beta).
\end{eqnarray*}
Consider the point O in Figure \ref{fig:Demand Region} given by $(D_0/(\alpha - \beta), D_0/(\alpha - \beta))$. ISP can bring down both his prices to move the price point to O, and demand and revenue collected remain zero. Now consider further deviation along the line BO. To realise this, ISP reduces both prices so that the net price denoted $(q_1, q_2)$ satisfies the equation $q_2 = (D_0 + \beta q_1)/\alpha$. Along this line $d_1 = D'_0 - \alpha' q_1 > 0$ and $d_2 = 0$, where $D'_0$ and $\alpha'$ are given in the statement of the theorem. But this puts us in a single-CP single-ISP case. By the last part of the proof of Theorem \ref{thm:1cp-bargainafter}, we see that the ISP has a deviation that yields a strictly positive revenue for itself. So no point in Region 2 can be a pure strategy equilibrium.

2. We now argue that no point in the interior of Region 1 can be an equilibrium. Let the prices be such that the total prices on the internauts is $(p_1, p_2)$, a point in Region 1. In this case
\begin{eqnarray}
  d_1(p_1, p_2) = D_0 - \alpha p_1 + \beta p_2 > 0 \nonumber \\
  d_2(p_1, p_2) = D_0 - \alpha p_2 + \beta p_1 > 0. \label{eqn:Th6-d1d2-positive}
\end{eqnarray}
Clearly, the net revenue coming from internauts $i$ is $p_i + p^a_i$, and so
\begin{eqnarray*}
  U_{\isp} & = & d_1(p_1, p_2) \gamma_1 (p_1 + p^a_1) + d_2(p_1, p_2) \gamma_2 (p_2 + p^a_2) \\
  U_{\cp, i} & = & d_i(p_1, p_2) (1 - \gamma_i) (p_i + p^a_i), \quad i = 1,2.
\end{eqnarray*}
Since the utilities depend on $p^s_i$ and $p^c_i$ only through $p_i = p^s_i + p^c_i$, partial derivatives with respect to $p^s_i$ and $p^c_i$ may be obtained by considering partial derivatives with respect to $p_i$. These are (in Region 1)
\begin{eqnarray*}
  \frac{\partial U_{\cp,i}}{\partial p_i} & = & (1 - \gamma_i) (d_i(p_1, p_2) -\alpha (p_i + p^a_i)), \quad i=1,2 \\
  \frac{\partial U_{\isp}}{\partial p_1} & = & \gamma_1 (d_1(p_1, p_2) -\alpha (p_1 + p^a_1)) + \gamma_2 \beta (p_2 + p^a_2) \\
      & = & \frac{\gamma_1}{1 - \gamma_1} \frac{\partial U_{\cp_1}}{\partial p_1} + \gamma_2 \beta (p_2 + p^a_2) \\
  \frac{\partial U_{\isp}}{\partial p_2} & = & \gamma_2 (d_2(p_1, p_2) -\alpha (p_2 + p^a_2)) + \gamma_1 \beta (p_1 + p^a_1) \\
     & = & \frac{\gamma_2}{1 - \gamma_2} \frac{\partial U_{\cp_2}}{\partial p_2} + \gamma_1 \beta (p_1 + p^a_1).
\end{eqnarray*}
(In passing, we note that from here, it is but a short step to verify that the Hessian for $U_{\isp}$ with respect to $(p^s_1, p^s_2)$ is $-2 \alpha H$). The first order necessary conditions imply that the above partial derivatives are zero, and we immediately deduce that $p_i + p^a_i = 0$ for both $i=1,2$, i.e., the revenue for each CP's content is zero. Substitution of these in $\frac{\partial U_{\cp_i}}{\partial p_i} = 0$ above yields $d_i = 0$ for both $i = 1, 2$. But this is contrary to the assumption that the point is on the interior of Region 1. So no point in the interior of Region 1 can be an equilibrium.

3. Let us now consider a candidate equilibrium in Region 4, with $p_1 < D_0 / (\alpha - \beta)$ and $p_2 \geq (D_0 + \beta p_1) / \alpha$.

Let us consider deviations by the ISP. First, he may reduce $p^s_2$ to $q^s_2$ so that $p_2$ reduces to $q_2 = (D_0 + \beta p_1) / \alpha$ so that the resulting point $(p_1, q_2)$ is on the line BO, and $d_2$ is on the threshold of positivity, but revenue of CP 1, revenue of CP 2 (which is zero), and revenue of the ISP still remain unaffected. ISP can now consider deviations from $(p_1, q_2)$ along the line BO or into Region 1, i.e., along the vector $(u_1, u_2)$ where $u_2 \leq (\beta / \alpha)u_1$. For such deviations to be fruitless, $\nabla U_{\isp}(p_1,q_2)$ must point into Region 4, and must in particular be normal to the line BO, and so (\ref{eqn:partials-condition}) should hold, which in the present case yields
\begin{eqnarray*}
  \lefteqn{ \gamma_2 d_2(p_1, q_2) - \alpha \gamma_2 (p_2 + p^a_2) + \gamma_1 \beta (p_1 + p^a_1) } \\
  & = & ~ -(\alpha / \beta) (\gamma_1 d_1(p_1, q_2) - \alpha \gamma_1 (p_1 + p^a_1) + \gamma_2 \beta (p_2 + p^a_2)).
\end{eqnarray*}
After cancelations and after using the fact that $d_2(p_1, q_2) = 0$, the above equality simplifies to
\begin{eqnarray}
  p_1 + p^a_1
    & = & \alpha d_1 (p_1, q_2) / (\alpha^2 - \beta^2) \nonumber \\
    & = & d_1 (p_1, q_2) / \alpha' \label{eqn:p1-necessity-bargain-after-1}\\
    & = & (D'_0 - \alpha' p_1) / \alpha', \nonumber
\end{eqnarray}
solving which we get
\begin{eqnarray}
  \label{eqn:p1-necessity-bargain-after}
  p_1 = \frac{D'_0 - \alpha' p^a_1}{2 \alpha'},
\end{eqnarray}
the solution for the single-CP and single-ISP case. It is easily verified that the net revenue is $p_1 + p^a_1 = (D'_0 + \alpha' p^a_1)/(2 \alpha') > 0$ and further, from (\ref{eqn:p1-necessity-bargain-after-1}),
\[
  d_1(p_1, q_2) = \alpha' (p_1 + p^a_1) = (D'_0 + \alpha' p^a_1)/2 > 0,
\]
as in the single-CP and single-ISP case.

Let us next consider local deviations by CP 1 who can increase or decrease $p^c_1$ and therefore perturb $p_1$. From the above argument, $p_1$ must satisfy (\ref{eqn:p1-necessity-bargain-after}). If $(p_1, p_2)$ is an interior point of Region 4, any deviation by CP 1 effectively moves the point $(p_1, q_2)$, a point that is effectively equivalent to the original $(p_1, p_2)$, along the line BO. It is easy to see, using $q_2 = (D_0 + \beta p_1)/\alpha$, that
\begin{eqnarray}
  \frac{\partial}{\partial p_1} U_{\cp,1} & = & \frac{\partial}{\partial p_1}(d_1(p_1,q_2) (p_1 + p^a_1) (1-\gamma_1)) \nonumber \\
  & = & \frac{\partial}{\partial p_1}((D'_0 - \alpha' p_1)  (p_1 + p^a_1) (1-\gamma_1)) \nonumber \\
  & = & (d_1(p_1, q_2) - \alpha' (p_1 + p^a_1)) (1-\gamma_1) \nonumber \\
  & = & 0 \label{eqn:cp-1-deviation-bargain-after-1}
\end{eqnarray}
where the last equality comes from (\ref{eqn:p1-necessity-bargain-after-1}). Thus, when $(p_1, p_2)$ is an interior point of Region 4, CP 1 does not benefit from a local deviation. When $(p_1, p_2)$ is on the line BO, it is just $(p_1, q_2)$. A decrease in $p_1$ moves the point to the interior of Region 4, and the equivalent point moves lower and left along the line BO. Then the argument leading to (\ref{eqn:cp-1-deviation-bargain-after-1}) holds for the left partial derivative $\frac{\partial^- }{\partial p_1} U_{\cp,1}$, and decrease in $p^c_1$ does not yield a gain. On the other hand, an increase in $p^c_1$ increases $p_1$ and puts the system in the interior of Region 1, and we then have
\begin{eqnarray*}
  \frac{\partial^+ }{\partial p_1} U_{\cp,1} & = & \frac{\partial}{\partial p_1}(d_1(p_1,q_2) (p_1 + p^a_1) (1-\gamma_1)) \nonumber \\
  & = & \frac{\partial}{\partial p_1}((D_0 - \alpha p_1 + \beta q_2) (p_1 + p^a_1) (1-\gamma_1)) \nonumber \\
  & = & (d_1(p_1, q_2) - \alpha (p_1 + p^a_1)) (1-\gamma_1) \nonumber \\
  & = & (d_1(p_1, q_2) - \alpha' (p_1 + p^a_1)) (1-\gamma_1) + (\alpha' - \alpha) (p_1 + p^a_1) (1 - \gamma_1) \nonumber \\
  & = & (\alpha' - \alpha)(p_1 + p^a_1)(1 - \gamma_1) \\
  & < & 0
\end{eqnarray*}
where the penultimate equality follows because of (\ref{eqn:p1-necessity-bargain-after-1}), and the last inequality follows because $\alpha' < \alpha$, but the other two factors are strictly positive. But this implies an infinitesimal increase in $p^c_1$ yields a strict decrease in his utility. There are thus no utility increasing infinitesimal deviations for CP 1.

Lastly, we consider infinitesimal deviations by CP 2. If $p_2 > q_2$, then CP 2 can bring down his price so that $p_2$ reduces to $q_2$ without a change in his revenue or without a change in demand for his content. Any further decrease moves the operating point into the interior of Region 1, and renders $d_2$ strictly positive. For such a deviation to be fruitless, the revenue for CP 2 at the operating point $(p_1, q_2)$ should be nonpositive, i.e.,
\begin{eqnarray*}
  0 \geq q_2 + p^a_2 = \frac{D_0 + \beta p_1}{\alpha} + p^a_2.
\end{eqnarray*}
Substitution of (\ref{eqn:p1-necessity-bargain-after}) and rearrangement yields (\ref{eqn:too-high-pa}) as a necessary condition for equilibrium. If (\ref{eqn:too-high-pa}) does not hold, there is no pure strategy Nash equilibrium in Region 4.

4. Consider points in Region 3. An argument analogous to above yields that an analogue of (\ref{eqn:too-high-pa}), with indices 1 and 2 interchanged, is a necessary condition. But as $p^a_1 \geq p^a_2$, such a condition cannot hold, and there is no pure strategy Nash equilibrium in Region 3

It is thus clear that if (\ref{eqn:too-high-pa}) does not hold, there exists no pure strategy Nash equilibrium. This proves the second statement. If (\ref{eqn:too-high-pa}) does hold, we saw above that the only possible equilibria, if any, are in Region 4 with $p_1$ as in (\ref{eqn:p1-necessity-bargain-after}), and $p_2 \geq (D_0 + \beta p_1) / \alpha$. From the first order conditions, no infinitesimal deviation yields a better revenue for any of the agents. From the facts that
\begin{itemize}
  \item $U_{\isp}$ is concave in Region 1 by the assumption that $H$ is positive definite,
  \item $U_{\cp,1}$ and $U_{\cp,2}$ are strictly concave in Region 1,
  \item they extend continuously to the boundaries AO and BO from Region 1,
  \item for each point in Region 4, the utilities are determined by the utilities on an equivalent point on the line BO, and similarly,
  \item for each point in Region 3, the utilities are determined by the utilities on an equivalent point on the line AO, and finally,
  \item the utilities earned in Region 2 are zero,
\end{itemize}
it follows that no deviation, infinitesimal or otherwise, will yield a better revenue for any of the agents. So $p_1$ given in (\ref{eqn:p1-necessity-bargain-after}) and $p_2 \geq (D_0 + \beta p_1) / \alpha$ characterize the pure strategy Nash equilibrium. This concludes the proof of the theorem.

\section{Proof of Theorem \ref{thm:NEExclusiveContract}}
\label{app:proof-nonneutrality-collusion}
Recall our argument that $U_{\ispbar}$ is a concave function in $\overline{p}^s=(p_1, p_2^s, \cdots, p_n^s)$, and for each $k=2,\cdots, n, U_{CP, k}$ is concave in $p_k^c$. Then, the equilibrium prices must satisfy the following first order optimality conditions:
\begin{eqnarray*}
\frac{\partial U_{\overline{ISP}}}{\partial p_1}&=&D_0-2\alpha p_1 + 2 \beta \sum_{j\neq 1}p_j^s + \beta \sum_{j\neq 1}p_j^c - \alpha p_1^a + \beta \sum_{j \neq 1} p_i^d=0,
\end{eqnarray*}
and for $k=2,3,\cdots n$,
\begin{eqnarray*}
\frac{\partial U_{\overline{ISP}}}{\partial p_k^s}
& = &  D_0 + 2 \beta p_1-2 \alpha p_k^s + 2 \beta \sum_{j\neq k,1}p_j^s-\alpha p_k^c + \beta \sum_{j\neq k,1}p_j^c + \beta p_1^a -\alpha p_k^d + \beta \sum_{j\neq k,1}p_j^d=0 \\
\frac{\partial U_{CP,k}}{\partial p_k^c}
& = & D_0 + \beta p_1 -\alpha p_k^s  + \beta \sum_{j\neq k,1}(p_j^s+p_j^c) - 2\alpha p_k^c-\alpha (p_k^a-p_k^d)=0.
\end{eqnarray*}
Let $\overline{p}^c=(p_2^c,p_3^c,\cdots, p_n^c)$ denote the CP price vector. The above set of $2n-1$ equations can be compactly written in matrix form as follows
\begin{equation}
\label{eqn:CollusionMatrixEquations}
\left[ {\begin{array}{cc}
 2\alpha & -b^T  \\
 -b & C  \\
 \end{array} } \right]
\left[ {\begin{array}{cc}
 \overline{p}^s  \\
 \overline{p}^c  \\
 \end{array} } \right]=
\left[ {\begin{array}{cc}
 -\alpha &  c^T \\
 a & D  \\
 \end{array} } \right]
\left[ {\begin{array}{cc}
p^a \\
\overline{p}^d \\
\end{array} } \right] + D_0 E_{2n-1},
\end{equation}
where $b^T$ and $c^T$ are row vectors of size $1\times (2n-2)$ given by $b^T=[2\beta E_{n-1}^T \; \beta E_{n-1}^T]$ and $c^T=[0 \cdot E_{n-1}^T \; \beta E_{n-1}^T]$. $a$ denotes a column vector of size $(2n-2)\times 1$ given by $a^T=[\beta \cdot E_{n-1}^T \; 0 \cdot E_{n-1}^T]$. $C$ and $D$ are $2\times 2$ block matrix given by
\[
C=\left[ {\begin{array}{cc}
 2A_{n-1} & A_{n-1}  \\
 A_{n-1} & 2B_{n-1}  \\
 \end{array} } \right]\;\;\;\;\;
 D=\left[ {\begin{array}{cc}
 \bigcirc & -A_{n-1}  \\
 -\alpha I_{n-1} & \alpha I_{n-1}  \\
 \end{array} } \right].
 \]
The solution to system of equation in (\ref{eqn:CollusionMatrixEquations}) exists if the matrix    \[
 \left[ {\begin{array}{cc}
 2\alpha &  -b^T\\
 -b & C  \\
 \end{array} } \right]
 \]
is invertible. By inspection we can write its inverse as
\begin{equation}
\label{eqn:CollusionInverseMatrix}
\frac{1}{\mu}
\left[ {\begin{array}{cc}
1 &  b^TC^{-1} \\
C^{-1}b & \mu C^{-1}+C^{-1}bb^TC^{-1}  \\
 \end{array} } \right],
 \end{equation}
where $\mu= (2\alpha-b^TC^{-1}b)$ and $C^{-1}$ denotes the inverse of matrix $C$. For the above inverse to exist
the following must hold.
\begin{itemize}
\item[(i)] $C$ is invertible and
\item[(ii)] $b^TC^{-1}b\neq 2\alpha $.
\end{itemize}
We next verify these conditions. Invertibility of $C$ is guaranteed by its definition. Indeed,
\begin{eqnarray*}
\text{det}(C)&=&\text{det}(A_{n-1}(A_{n-1}+2\alpha I_{n-1}))\\
&=&(\alpha+\beta)^{(n-2)}(\alpha-\beta(n-2))(3\alpha+\beta)^{(n-2)} (3\alpha-\beta(n-2)) > 0,
\end{eqnarray*}
and it can be computed as
\begin{eqnarray*}
C^{-1}=
 (A_{n-1}+2\alpha I_{n-1})^{-1}\circ\left[ {\begin{array}{cc}
 2B_{n-1}A_{n-1}^{-1} &
 -I_{n-1} \\
 -I_{n-1} & 2I_{n-1}  \\
 \end{array} } \right].
 \end{eqnarray*}
Further, all the terms in matrix  ($\ref{eqn:CollusionInverseMatrix}$) can be expressed in term of inverse of matrix $A_{n-1}$ as follows:

\begin{eqnarray}
\label{eqn:collusionInversermatrixTerms}
C^{-1}b&=&\beta
\left[ {\begin{array}{cc}
 A_{n-1}^{-1}E_{n-1} \\
 0  \\
 \end{array} } \right], \\
 b^TC^{-1}&=&\beta
\left[ {\begin{array}{cc}
 E_{n-1}^TA_{n-1}^{-1} & 0\\
 \end{array} } \right], \nonumber \\
C^{-1}bb^TC^{-1}&=& \left[ {\begin{array}{cc}
  \beta^2 A_{n-1}^{-1}E_{n-1}E_{n-1}^TA_{n-1}^{-1} &
 \bigcirc \\
 \bigcirc & \bigcirc  \\
 \end{array} } \right]. \nonumber
 \end{eqnarray}
 \noindent
Left multiplying matrix (\ref{eqn:collusionInversermatrixTerms}) by $b^T$ we have
 \begin{eqnarray}
 \label{eqn:CollusionInverseCondition2}
 b^TC^{-1}b&=&2\beta^2 E_{n-1}^TA_{n-1}^{-1}E_{n-1}
 = \frac{2\beta^2 (n-1)}{\alpha -(n-2)\beta}.
\end{eqnarray}
The above relation follows by noting that the sum of elements in each row of the adjacent matrix of $A_{n-1}$ and its determinant are given by $(\alpha+\beta)^{n-2}$ and $(\alpha+\beta)^{n-2}(\alpha-(n-2)\beta)$, respectively.  The left hand side in (\ref{eqn:CollusionInverseCondition2}) is equal to $2\alpha$ only when $\alpha=\beta$ and $n=2$. But, this contradicts our assumption $\alpha> (n-1)\beta $. This completes the proof of existence of equilibrium.

We next compute the equilibrium prices and the corresponding demand. Rearranging (\ref{eqn:CollusionMatrixEquations}), equilibrium prices can be written as
 \begin{eqnarray}
 \nonumber
\left[ {\begin{array}{cc}
 \overline{p}^s  \\
 \overline{p}^c  \\
 \end{array} } \right]&=&
 \frac{1}\mu
\left[ {\begin{array}{cc}
1 &  b^TC^{-1} \\
C^{-1}b & Y  \\
 \end{array} } \right]
\left[ {\begin{array}{cc}
 -\alpha &  c^T \\
 a & D  \\
 \end{array} } \right]
\left[ {\begin{array}{cc}
p^a \\
\overline{p}^d \\
\end{array} } \right]
+ \frac{D_0 }\mu
\left[ {\begin{array}{cc}
1 &  b^TC^{-1} \\
C^{-1}b & Y  \\
 \end{array} } \right] E_{2n-1},
\end{eqnarray}
where $Y=\mu C^{-1}+C^{-1}bb^TC^{-1}$.

To further simplify the expression for equilibrium prices, we use the relations $2b^TC^{-1}a=b^TC^{-1}b$ and $c^T+b^TC^{-1}D=0$, both of which are easy to verify. Using these relations we get
\begin{eqnarray}
 \nonumber
 \left[ {\begin{array}{cc}
 \overline{p}^s  \\
 \overline{p}^c  \\
 \end{array} } \right]&=&
 \left[ {\begin{array}{cc}
-1/2 &  \bigcirc \\
C^{-1}(a-b/2) & C^{-1}D  \\
 \end{array} } \right]
\left[ {\begin{array}{cc}
p^a \\
\overline{p}^d \\
\end{array} } \right] +
 \frac{D_0 }{\mu}
\left[ {\begin{array}{cc}
1 &  b^TC^{-1} \\
C^{-1}b & Y  \\
 \end{array} } \right] E_{2n-1} \nonumber\\
&=&
 \left[ {\begin{array}{cc}
-1/2 &  \bigcirc \\
C^{-1}(a-b/2) & C^{-1}D  \\
 \end{array} } \right]
\left[ {\begin{array}{cc}
p^a \\
\overline{p}^d \\
\end{array} } \right] +
 D_0
\left[ {\begin{array}{cc}
-1/2\beta +  (\beta+\alpha)/\beta\mu \\
C^{-1}b(\frac{-1}{2\beta}+\frac{\beta+\alpha}{\beta \mu})+C^{-1}E_{2n-2}
  \\
 \end{array} } \right] \nonumber\\
\label{eqn:CollusionEquilibriumPriceDependency}
 &=&
 \left[ {\begin{array}{cc}
-1/2 &  \bigcirc \\
C^{-1}(a-b/2) & C^{-1}D  \\
 \end{array} } \right]
\left[ {\begin{array}{cc}
p^a \\
\overline{p}^d \\
\end{array} } \right]
+
 D_0
\left[ {\begin{array}{cc}
-1/2\beta +  (\beta+\alpha)/\beta\mu \\
\frac{X^{-1}}{2}\circ\left[ {\begin{array}{cc}
-E \\
2E \\
\end{array} } \right]+\frac{\beta+\alpha}
{\mu}\left[ {\begin{array}{cc}
A^{-1}E \\
\bigcirc \\
\end{array} } \right]
  \\
 \end{array} } \right],
 \label{eqn:CollusionEquilibrium}
\end{eqnarray}
where $X=(A_{n-1}+2\alpha I_{n-1})^{-1}$, and we used notation $A:=A_{n-1}$ and $E:=E_{n-1}$ for ease of presentation.

The product of matrix $C^{-1}$ and $D$ in the above expression can be computed as
\begin{equation}
C^{-1}D=\left[
  \begin{array}{cc}
    \alpha X^{-1} & I_{n-1} \\
    -2\alpha X^{-1} & -I_{n-1} \\
  \end{array}
\right]
\end{equation}
Substituting this relation in (\ref{eqn:CollusionEquilibriumPriceDependency}), it is easy to see that equilibrium prices $p_i^s$ and $p_i^c$ depend only on $p_i^d$ all $i=2,3,\cdots,n$. This verifies the claims in second item. Further, using (\ref{eqn:CollusionEquilibriumPriceDependency}), it follows that at equilibrium prices paid by each user group can be computed as
\begin{eqnarray*}
 \left[ {\begin{array}{cc}
 p_1  \\
 \overline{p}^s_{-1}+\overline{p}^c  \\
 \end{array} } \right] &=&
\left[ {\begin{array}{cc}
-1/2 &  \bigcirc \\
-(\beta/2)X^{-1}E & -\alpha X^{-1} \\
 \end{array} } \right]
p^a +
 D_0
\left[ {\begin{array}{cc}
-1/2\beta +  (\beta+\alpha)/\beta \mu \\
X^{-1}E/2+(\beta+\alpha)
/\mu A^{-1}E
  \\
 \end{array} } \right],
\end{eqnarray*}
where $\overline{p}^s_{-1}$ denotes the ISP price vector without the component $p_1$. The corresponding equilibrium demand can be computed as
\begin{eqnarray*}
D_0E_n-\left[
  \begin{array}{c}
    -\frac{\alpha}{2}+ \frac{\beta^2}{2E^TXE} +\beta\alpha E^TX \\
    \frac{\beta}{2} E-\frac{\beta}{2} AE-\alpha AX \\
  \end{array}
\right] -
D_0\left[
  \begin{array}{c}
    \frac{\alpha}{2(\alpha-(n-1)\beta)} -\beta E^T XE-\frac{\beta(\beta+\alpha)}{\mu} E^T A^{-1}E\\
    \frac{-\beta}{2(\alpha-(n-1)\beta)}E + \frac{(\beta+\alpha)}{\mu}E+AXE/2\\
  \end{array}
\right].
\label{eqn:CollusionPositveDemandEquiCondn}
\end{eqnarray*}
Note that both equilibrium prices and the corresponding demand do not depend on $p_d$, verifying the claim in the last item. Also, notice that $p_d$ can be any vector, and so the solution is unique up to a free choice of $p_d$, and the statement of the first item is verified.

\section{Dynamics}
\label{app:Dynamics}
In this section we constrain prices to remain in Region $1$ of Figure \ref{fig:Demand Region}.
This yields a coupled constraint which is a significant difference with
respect to the unconstrained model in Section \ref{sec:multiple-CP}. For this new setting,
we discuss two dynamic models with multiple content providers. Again,
for ease of exposition, we restrict to the case of two CPs.

\subsection{Continuous dynamics}

Let us assume that the players set their prices such that the demand from each CP is nonnegative, i.e., (\ref{eqn:demand-multi-CP}) is greater than or equal to zero for both CPs. This imposes coupled constraints on the set of prices $(p^s,p^c) \in \mathbb{R}^4$ given by
\begin{eqnarray*}
& d_1(p^s,p^c)\geq 0, & d_2(p^s,p^c)\geq 0,\\
& p_1^s+p_1^d \geq 0, & p_2^s+p_2^d \geq 0, \\
& p_1^c-p_1^d+p_1^a \geq 0, & p_2^c-p_2^d+p_2^a \geq 0.
\end{eqnarray*}
Let $R$ denote the set of prices that satisfy these constraints. It is easy to verify that the above constraints also result in the following upper bounds on the prices:
\begin{eqnarray*}
& p_1^s \leq \frac{D_0}{\alpha-\beta}+p_1^a -p_1^d, & p_2^s \leq \frac{D_0}{\alpha-\beta}+p_2^a -p_2^d\\
& p_1^c \leq \frac{D_0}{\alpha-\beta}+ p_1^d, & p_2^c \leq \frac{D_0}{\alpha-\beta}+ p_2^d,
\end{eqnarray*}
and thus the set $R$ is compact. Furthermore, due to the linearity of the constraints in the prices, $R$ is convex. As argued in Section \ref{sec:multiple-CP}, for any price vector $p=(p^s,p^c) \in R$, the mappings $U_{\isp}(\cdot,p^c), U_{\cp,1}(p^s,\cdot,p_2^s) \text{\;and\;} U_{\cp,1}(p^s,p_1^c,\cdot)$ are concave functions in the ``$\cdot$'' variables.

Given the concave utility functions defined on the coupled constraint set $R$, we are in the setting of $n$-person concave games studied by Rosen \cite{rosen}. We can then directly use the dynamic model proposed by Rosen \cite[Sec. 4]{rosen}. In our game setting the system of differential equations for the strategies $p_1^s,p_2^s,p_1^c, p_2^c$ is:
\[\frac{{\rm d} p^s_1}{\rm d t}=\frac{\partial{U_{\isp}(p^s,p^c)}}{\partial p^s_1}+u_1(p)\frac{\partial d_1(p)}{\partial p^s_1}+u_2(p)\frac{\partial d_2(p)}{\partial p^s_1} + u_3(p)\]
\[\frac{{\rm d} p^s_2}{\rm d t}=\frac{\partial{U_{\isp}(p^s,p^c)}}{\partial p^s_2}+u_1(p)\frac{\partial d_1(p)}{\partial p^s_2}+u_2(p)\frac{\partial d_2(p)}{\partial p^s_2} + u_4(p)\]
\[\frac{{\rm d} p^c_1}{\rm d t}=\frac{\partial{U_{\cp,1}(p^s,p^c)}}{\partial p_1^c}+u_1(p)\frac{\partial d_1(p)}{\partial p_1^c}+u_2(p)\frac{\partial d_2(p)}{\partial p_1^c} + u_5(p)\]
\[\frac{{\rm d} p^c_2}{\rm d t}=\frac{\partial{U_{\cp,2}(p^s,p^c)}}{\partial p_2^c}+u_1(p)\frac{\partial d_1(p)}{\partial p_2^c}+u_2(p)\frac{\partial d_2(p)}{\partial p_2^c} + u_6(p).\]
In the above dynamics it is assumed that a central agent computes $u(p)=(u_1(p),\ldots, u_6(p))$ as in \cite[eqn.~(4.5)]{rosen} and communicates the values to the players. The above dynamics tend to an equilibrium as is established next.

\begin{thm}
Let $\alpha>\beta$. Starting from any point $p\in R$, the continuous solution $p(t)$ to the above system of differential equations remains in $R$ for all $t$ and converges to the unique equilibrium point.
\end{thm}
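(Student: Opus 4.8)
The plan is to recognize this as an instance of Rosen's theory of $n$-person concave games with coupled constraints \cite{rosen}, so that both conclusions---invariance of $R$ and global convergence to the unique equilibrium---follow from his results once a single substantive hypothesis is checked. The three players are the ISP, controlling $(p^s_1,p^s_2)$; CP~1, controlling $p^c_1$; and CP~2, controlling $p^c_2$. Their payoffs $U_{\isp}, U_{\cp,1}, U_{\cp,2}$ are shared over the common coupled constraint set $R$. Two of Rosen's hypotheses are already in hand: $R$ is convex and compact (established in the paragraph preceding the theorem), and each payoff is concave in its owner's own variable(s). The differential equations displayed above are precisely Rosen's projected-gradient flow [\cite{rosen}, eqns.~(4.4)--(4.5)], with the multiplier functions $u(p)$ chosen so that the trajectory cannot leave $R$; thus the first assertion (remaining in $R$ for all $t$) holds by the very construction of the dynamics, and it remains to verify the convergence hypothesis.

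The one substantive condition is that the game be \emph{diagonally strictly concave}. First I would form the pseudo-gradient with unit weights on the three players,
\[
  g(p) = \Big( \tfrac{\partial U_{\isp}}{\partial p^s_1},\ \tfrac{\partial U_{\isp}}{\partial p^s_2},\ \tfrac{\partial U_{\cp,1}}{\partial p^c_1},\ \tfrac{\partial U_{\cp,2}}{\partial p^c_2} \Big),
\]
and compute its Jacobian $G$ with respect to $(p^s_1,p^s_2,p^c_1,p^c_2)$ throughout Region~1, where the demands $d_1,d_2$ are the smooth bilinear forms of (\ref{eqn:demand-multi-CP}). A direct differentiation (using $\partial d_i/\partial(\cdot)\in\{-\alpha,\beta\}$) gives the constant matrix
\[
  G = \begin{pmatrix} -2\alpha & 2\beta & -\alpha & \beta \\ 2\beta & -2\alpha & \beta & -\alpha \\ -\alpha & \beta & -2\alpha & \beta \\ \beta & -\alpha & \beta & -2\alpha \end{pmatrix},
\]
which is symmetric; hence its symmetric part is $G$ itself, and diagonal strict concavity reduces to showing that $-G$ is positive definite.

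To verify this I would reorder the coordinates as $(p^s_1,p^c_1,p^s_2,p^c_2)$, displaying $-G$ in the block form $\begin{pmatrix} P & Q \\ Q & P \end{pmatrix}$ with the symmetric blocks $P=\begin{pmatrix} 2\alpha & \alpha \\ \alpha & 2\alpha \end{pmatrix}$ and $Q=\begin{pmatrix} -2\beta & -\beta \\ -\beta & -\beta \end{pmatrix}$. Because this matrix commutes with the swap of the two index-blocks, its spectrum is the union of the spectra of $P+Q$ and $P-Q$, so positive definiteness is equivalent to positive definiteness of these two $2\times2$ matrices. Their determinants are $\det(P+Q)=(3\alpha-\beta)(\alpha-\beta)$ and $\det(P-Q)=3\alpha^2+4\alpha\beta+\beta^2$, and their leading diagonal entries are $2\alpha-2\beta$ and $2\alpha+2\beta$; all are strictly positive exactly under the hypothesis $\alpha>\beta>0$. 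This establishes diagonal strict concavity.

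With diagonal strict concavity in force, Rosen's uniqueness theorem yields a unique normalized equilibrium, which here coincides with the unique Nash equilibrium of the constrained game, and the convergence theorem of [\cite{rosen}, Sec.~4] guarantees that every solution starting in $R$ stays in $R$ and converges to this equilibrium---completing both assertions. The main obstacle is the diagonal-strict-concavity verification above; everything else is a direct appeal to \cite{rosen}. The one point requiring care is that the dynamics are confined to Region~1, where the demand functions are the smooth expressions used in the Jacobian computation, so that the payoffs are genuinely $C^2$ and the constant matrix $G$ governs the flow throughout the relevant domain.
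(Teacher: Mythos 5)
Your proposal is correct and follows essentially the same route as the paper: it casts the problem in Rosen's coupled-constraint concave-game framework, forms the same unit-weight pseudo-gradient, computes the same constant symmetric Jacobian $G$ (the paper writes it as $-\alpha$ times a matrix in $\tau=\beta/\alpha$, which is identical entrywise to yours), and concludes invariance of $R$ and global convergence from Rosen's theorems. The only difference is in the definiteness check: you block-diagonalize $-G$ via the swap symmetry into $P+Q$ and $P-Q$ and inspect $2\times 2$ minors, whereas the paper lists the four eigenvalues of $-G/\alpha$ explicitly; the two verifications are equivalent.
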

\begin{proof}
The first claim follows directly from Rosen's \cite[Th. 7]{rosen}. To prove the second part, we verify the so-called {\em diagonal strict concavity} property of
\[
  \sigma(p)=U_{\isp}(p)+U_{\cp,1}(p)+U_{\cp,2}(p), \quad p \in R.
\]
Let $g(p)$ denote the gradient of $\sigma(p)$ given by
\[g(p)=\left[
           \begin{array}{c}
              \partial U_{\isp}(p)/\partial p_1^s\\
              \partial U_{\isp}(p)/\partial p_2^s \\
              \partial U_{\cp,1}(p)/\partial p_1^c\\
              \partial U_{\cp,2}(p)/\partial p_2^c\\
           \end{array}
         \right].
\]
With $\tau := \beta/\alpha$, the Jacobian $G(p)$ of the above matrix can be verified to be the symmetric matrix
\[G(p)=-\alpha \left[
           \begin{array}{cccc}
             2 & -2\tau & 1 & -\tau \\
             -2\tau & 2 & -\tau & 1 \\
             1 & -\tau & 2 & -\tau \\
             -\tau & 1 & -\tau & 2 \\
           \end{array}
         \right].
\]
It is easy to see that the eigenvalues of $-G(p)/\alpha$ are
\begin{eqnarray*}
  & ((3\tau+4)\pm\sqrt{(3\tau+4)^2-4(\tau^2+4\tau+3)})/2 & \\
  & ((4-3\tau)\pm \sqrt{(4-3\tau)^2-4(\tau^2-4\tau+3)})/2, &
\end{eqnarray*}
and that these eigenvalues are strictly positive for $\tau \in [0, 1)$. $G(p)$ is therefore negative definite. By \cite[Th.~6]{rosen}, $\sigma(p)$ is diagonally strictly concave, and by \cite[Th.~9]{rosen}, the equilibrium point is globally asymptotically stable for the system of differential equations; this establishes convergence.
\end{proof}

\subsection{Discrete dynamics}
In this subsection we study discrete dynamics motivated by the best response dynamics. We assume the providers set their price, say, at the beginning of each day, as the best response to prices set by the other players on the previous day.

Let $p_t=((p_{1t}^s,p_{2t}^s), p_{1t}^c,p_{2t}^c)$ denote the price set by the players on day $t$. Recalling the concavity properties of the utility functions, the price set by the players on day $m=t+1$ are obtained by setting
\begin{align}
\label{eqn:DynamicsBestResponseEqnsISP1}
\partial U_{\isp}((p_{1m}^s,p_{2m}^s), p_{1t}^c,p_{2t}^c)/\partial p_{1}^s&=0 \\
\label{eqn:DynamicsBestResponseEqnsISP2}
\partial U_{\isp}((p_{1m}^s,p_{2m}^s), p_{1t}^c,p_{2t}^c)/\partial p_{2}^s &=0 \\
\label{eqn:DynamicsBestResponseEqnsCP1}
\partial U_{\cp,1}((p_{1t}^s,p_{2t}^s), p_{1m}^c,p_{2t}^c)/\partial p_{1}^c &=0 \\
\label{eqn:DynamicsBestResponseEqnsCP2}
\partial U_{\cp,2}((p_{1t}^s,p_{2t}^s), p_{1t}^c,p_{2m}^c)/\partial p_{2}^s &=0.
\end{align}
\noindent
The ISP controls the price $(p_{1m}^s,p_{2m}^s)$ and sets them so that both (\ref{eqn:DynamicsBestResponseEqnsISP1}) and (\ref{eqn:DynamicsBestResponseEqnsISP2}) are simultaneously satisfied. The above conditions straightforwardly result in the following equations:
\begin{eqnarray*}
2\alpha p_{1m}^s-2\beta p_{2m}^s=D_0-\alpha p_{1t}^c + \beta p_{2t}^c -\alpha p_1^d + \beta p_2^d\\
2\alpha p_{2m}^s-2\beta p_{1m}^s=D_0-\alpha p_{2t}^c + \beta p_{1t}^c -\alpha p_2^d + \beta p_1^d\\
2\alpha p_{1m}^c=D_0-\alpha p_{1t}^s + \beta p_{2t}^s + \beta p_{2t}^c -\alpha (p_1^a -p_1^d)\\
2\alpha p_{2m}^c=D_0-\alpha p_{2t}^s + \beta p_{1t}^s + \beta p_{1t}^c -\alpha (p_2^a -p_2^d).
\end{eqnarray*}
This is a linear mapping that can be compactly written in the matrix form as
\begin{equation}
\label{eqn:DynamicsLinearMap}
p_{t+1}^T = X p_t^T + Y
\end{equation}
where, with $\tau = \beta/\alpha$, we take
\[
  X=\hspace{-.1cm}\frac{1}{2}\left[
          \begin{array}{cccc}
            0 & 0 & -1 & 0 \\
            0 & 0 & 0 & -1 \\
            -1 & \tau & 0 & \tau \\
            \tau & -1 & \tau & 0 \\
          \end{array}
        \right],
  Y=\hspace{-.1cm}\left[
  \begin{array}{c}
    \frac{D_0-(\alpha-\beta)p_1^d}{2(\alpha-\beta)} \\
    \frac{D_0-(\alpha-\beta)p_2^d}{2(\alpha-\beta)} \\
    \frac{D_0-\alpha(p_1^a-p_1^d)}{2\alpha} \\
    \frac{D_0-\alpha(p_2^a-p_2^d)}{2\alpha} \\
  \end{array}
  \right].
\]
An easy guess of the fixed point to the iteration in (\ref{eqn:DynamicsLinearMap}) is
\begin{equation}
  \label{eqn:p-opt}
  p_{\opt}^T = (I - X)^{-1} Y.
\end{equation}
Under the assumptions of Theorem \ref{thm:multicp-bargainbefore} for two CPs ($n=2$), it can be verified that $p_{\opt}^T$ is the solution of that theorem given in (\ref{eqn:solution-multicp}). Under the same assumptions, the dynamics converge to that solution, as guaranteed next.


\begin{thm}
For $\tau \in [0, 1)$, the dynamics given in (\ref{eqn:DynamicsLinearMap}) converges to the fixed point $p^T_{\opt} = (I - X)^{-1} Y$.
\end{thm}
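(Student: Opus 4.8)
The plan is to recognize this as a standard linear fixed-point iteration and reduce convergence to a spectral radius bound on $X$. First I would introduce the error vector $e_t^T = p_t^T - p_{\opt}^T$. Since $p_{\opt}^T = X p_{\opt}^T + Y$ by definition of the fixed point, subtracting this from the iteration (\ref{eqn:DynamicsLinearMap}) gives $e_{t+1}^T = X e_t^T$, and hence $e_t^T = X^t e_0^T$. Thus $p_t \to p_{\opt}$ from every initial condition if and only if $X^t \to 0$, which holds if and only if the spectral radius $\rho(X) < 1$. Incidentally, the same bound shows that $1$ is not an eigenvalue of $X$, so $I - X$ is invertible and $p_{\opt}$ is well defined. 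The entire theorem therefore reduces to showing $\rho(X) < 1$ for $\tau \in [0,1)$.

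Next, I would compute $\rho(X)$ by exploiting the symmetry between the two CPs. Let $P$ be the permutation that simultaneously swaps the two ISP prices and the two CP prices, i.e., swaps coordinates $1 \leftrightarrow 2$ and $3 \leftrightarrow 4$ in the ordering $(p_1^s, p_2^s, p_1^c, p_2^c)$. A direct check of the rows of $X$ shows $PX = XP$, reflecting the fact that relabeling the CPs leaves the dynamics unchanged. Consequently $X$ preserves the two eigenspaces of $P$: the \emph{symmetric} subspace of vectors $(a,a,b,b)$ and the \emph{antisymmetric} subspace of vectors $(a,-a,b,-b)$, each two-dimensional. Restricting $X$ to these subspaces and reading off the action in the $(a,b)$ coordinates yields two $2\times 2$ matrices
\[
  X_+ = \frac{1}{2}\begin{bmatrix} 0 & -1 \\ \tau-1 & \tau \end{bmatrix}, \qquad
  X_- = \frac{1}{2}\begin{bmatrix} 0 & -1 \\ -(1+\tau) & -\tau \end{bmatrix},
\]
and the spectrum of $X$ is the union of their spectra.

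Finally I would compute the eigenvalues of the two blocks from their characteristic polynomials. For $X_+$ the polynomial is $\lambda^2 - (\tau/2)\lambda + (\tau-1)/4$, whose discriminant simplifies to $(2-\tau)^2/4$, giving eigenvalues $1/2$ and $(\tau-1)/2$. For $X_-$ the polynomial is $\lambda^2 + (\tau/2)\lambda - (1+\tau)/4$, whose discriminant simplifies to $(2+\tau)^2/4$, giving eigenvalues $1/2$ and $-(1+\tau)/2$. Hence the four eigenvalues of $X$ are $1/2$ (with multiplicity two), $(\tau-1)/2$, and $-(1+\tau)/2$, so that $\rho(X) = \max\{1/2,\, (1-\tau)/2,\, (1+\tau)/2\} = (1+\tau)/2$. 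For $\tau \in [0,1)$ this is strictly less than $1$, which completes the proof.

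The bookkeeping is all routine; the one step that needs a genuine idea is the passage from the $4\times 4$ matrix to its spectrum. I would expect the CP-relabeling symmetry to be the main lever: without it one must expand a quartic characteristic polynomial and argue its roots lie in the open unit disk (e.g., via a Schur--Cohn test), whereas the symmetry reduces everything to two quadratics whose roots factor cleanly. The only mild subtlety is that $\rho(X)$ touches $1$ exactly at $\tau = 1$, through the antisymmetric eigenvalue $-(1+\tau)/2$, which is precisely why the strict hypothesis $\tau < 1$ is needed and cannot be relaxed.
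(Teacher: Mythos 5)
Your proof is correct, and although it shares the paper's outer skeleton --- subtract the fixed-point identity so that $p^T_{t+1}-p^T_{\opt}=X^{t+1}(p^T_0-p^T_{\opt})$, then control the spectrum of $X$ --- your route to the spectrum is genuinely different and, as it turns out, corrects the paper. The paper evaluates the eigenvalues of $X$ directly and reports them as
\[
  \frac{\frac{\tau}{2}\pm\sqrt{(\frac{\tau}{2})^2+1-\tau}}{2} \quad\text{and}\quad \frac{-\frac{\tau}{2}\pm\sqrt{(\frac{\tau}{2})^2-1-\tau}}{2},
\]
i.e.\ the real pair $\{1/2,\,(\tau-1)/2\}$ together with a complex-conjugate pair. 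The second pair is wrong: your CP-relabeling symmetry shows that the characteristic polynomial of $2X$ factors as $(\mu^2-\tau\mu+\tau-1)(\mu^2+\tau\mu-\tau-1)$, so the constant term of the second quadratic is $-(1+\tau)$ rather than $+(1+\tau)$, its roots are real, and the spectrum of $X$ is $\{1/2,\,1/2,\,(\tau-1)/2,\,-(1+\tau)/2\}$, exactly as you found; a quick sanity check is that your eigenvalues multiply to $(1-\tau^2)/16=\det X$ while the paper's multiply to $(\tau^2-1)/16$. This is not merely cosmetic: the paper asserts its four eigenvalues are \emph{distinct} and infers diagonalizability $X=UDU^{-1}$ from distinctness, but in the true spectrum $1/2$ has algebraic multiplicity two, so the paper's argument breaks as written (diagonalizability does survive, because each of your $2\times 2$ blocks has distinct eigenvalues, but one needs your decomposition to see that). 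Your proof sidesteps the issue entirely by invoking the standard equivalence that $X^t\to 0$ iff $\rho(X)<1$, which needs no diagonalizability, and it also gives the correct reason why $p^T_{\opt}$ is well defined --- namely $1\notin\mathrm{spec}(X)$ so $I-X$ is invertible --- where the paper instead appeals, irrelevantly, to invertibility of $X$ itself. Your closing observation that $\rho(X)=(1+\tau)/2$ reaches $1$ exactly at $\tau=1$, through the antisymmetric mode, is also a sharper statement than anything in the paper: it shows the hypothesis $\tau<1$ cannot be relaxed.
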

\begin{proof}
The eigenvalues of the matrix $X$ can be straightforwardly evaluated; they are
\[
  \frac{\frac{\tau}{2}\pm\sqrt{(\frac{\tau}{2})^2+1-\tau}}{2} \text{\;\;and \;\;}\frac{-\frac{\tau}{2}\pm\sqrt{(\frac{\tau}{2})^2-1-\tau}}{2}.
\]
For $\tau \in [0, 1)$, these eigenvalues are nonzero, of magnitudes strictly smaller than 1, distinct, and hence $X$ is diagonalizable in the form $X=UDU^{-1}$, where $D$ is the diagonal matrix of eigenvalues and $U$ is an invertible matrix. $X$ is also invertible. Consequently, $p^T_{\opt}$ in (\ref{eqn:p-opt}) is well-defined and satisfies
\[
  p^T_{\opt} = X p^T_{\opt} + Y.
\]
Using this and (\ref{eqn:DynamicsLinearMap}), with $p^T_0$ as the initial iterate, the norm of the error at iteration $t+1$ telescopes as
\begin{eqnarray*}
  || p^T_{t+1} - p^T_{\opt} || & = & || X ( p^T_t - p^T_{\opt} ) || \\
  & = & || X^{t+1} (p^T_0 - p^T_{\opt}) || \\
  & = & || (U D U^{-1})^{t+1} (p^T_0 - p^T_{\opt}) || \\
  & = & || U D^{t+1} U^{-1} (p^T_0 - p^T_{\opt}) ||.
\end{eqnarray*}
Since the magnitudes of the eigenvalues are strictly less than 1, the error vector converges to 0 exponentially quickly in the number of iterations.
\end{proof}

{\em Remarks}: 1) The iterates converge if $\tau = \beta / \alpha < 1$. However, to guarantee that the solution is in $R$, we need the other necessary and sufficient condition $(A + 2\alpha I_n)^{-1} [D_0 E_n + A p^a]$ to be made of strictly positive entries.

2) Even if these hold, the iterates may not remain in $R$ due to the coupled nature of the constraints. Strictly speaking then, the dynamics is not the best response dynamics. Indeed, with $D_0=200, \alpha=6, \beta=3, p^d_1=10, p^d_2=25, p^a_1=45, p^a_2=10$, it can be see that with $p_0=(19,2,25,28)$ when demand for both contents is positive, we get $p_1=(15.8333,6.8333,-2.8333,34.1667)$ where demand for CP 1 content alone is positive. Nevertheless, the iterates converge to the unique equilibrium with strictly positive demands.

\end{document}